\newtheorem{theorem}{Theorem}
\newtheorem{lemma}{Lemma}
\newtheorem{corollary}{Corollary}
\newtheorem{remark}{Remark}
\newcommand{\al}{\alpha}
\newcommand{\be}{\begin{equation}}
\newcommand{\ee}{\end{equation}}
\newcommand{\bq}{\begin{eqnarray}}
\newcommand{\eq}{\end{eqnarray}}
\newcommand{\ba}{\begin{array}}
\newcommand{\ea}{\end{array}}
\newcommand{\bt}{\beta}
\newcommand{\ga}{\gamma}
\renewcommand{\v}{{\mathsf{v}}}
\newcommand{\sig}{\sigma}
\def\beq#1#2\eeq{
        \begin{equation}
        \label{#1}
            #2
        \end{equation}}
\newcommand{\bea}{\begin{eqnarray}}
\newcommand{\eea}{\end{eqnarray}}
\renewcommand{\P}{\textsf{p}}
\newcommand{\ti}[1]{\tilde{#1}}
\newcommand{\mb}[1]{\mathbf{#1}}
\newcommand{\bg}{\bar{\gamma}}
\newcommand{\N}{N_s}
\begin{document}

\preprint{\scriptsize \href{http://dx.doi.org/10.1063/1.4808081 }{J. Math. Phys.} {\bf 54}, 063506 (2013)}

\title{Random Matrix Models, Double-Time Painlev\'e Equations, and Wireless Relaying}
\author{\firstname{Yang} \surname{Chen}}
\email{yangbrookchen@yahoo.co.uk, yayangchen@umac.mo}
\affiliation{\footnotesize Faculty of Science and Technology, Department of Mathematics, University of Macau, Av. Padre Tom\'as Pereira, Taipa Macau, China}
\author{\firstname{Nazmus} S. \surname{Haq}}
\email{nazmus.haq04@imperial.ac.uk}
\thanks{Author to whom correspondence should be addressed.}
\affiliation{\footnotesize Department of Mathematics, Imperial College London,
180 Queen's Gate, London SW7 2BZ, UK}
\author{\firstname{Matthew} R. \surname{McKay}}
\email{eemckay@ust.hk}
\affiliation{\footnotesize Department of Electronic and Computer Engineering, Hong Kong
University of Science and Technology (HKUST), Clear Water Bay, Kowloon, Hong Kong}
\date{22 December 2012; Revised: 26 May 2013}
\begin{abstract}
This paper gives an in-depth study of a multiple-antenna wireless
communication scenario in which a weak signal received at an
intermediate relay station is amplified and then forwarded to the
final destination. The key quantity determining system performance
is the statistical properties of the signal-to-noise ratio (SNR)
$\gamma$ at the destination. Under certain assumptions on the encoding
structure, recent work has characterized the SNR distribution
through its moment generating function, in terms of a certain Hankel
determinant generated via a deformed Laguerre weight. Here, we
employ two different methods to describe the Hankel determinant.
First, we make use of ladder operators satisfied by orthogonal
polynomials to give an exact characterization in terms of a ``double-time'' Painlev\'e differential
equation, which reduces to Painlev\'e V under certain limits.
Second, we employ Dyson's Coulomb Fluid method to derive a closed form
approximation for the Hankel determinant. The two
characterizations are used to derive closed-form expressions for the
cumulants of $\ga$, and to compute performance quantities of
engineering interest.
\end{abstract}
\keywords{Orthogonal polynomials; MIMO systems; random matrix theory; Painlev\'e equations}
\pacs{02.30.Ik, 89.70.-a, 02.10.Yn}
\maketitle






\section{Introduction}

Over the past decade, multiple-input multiple-output (MIMO) antenna
systems employing space-time coding have revolutionized the wireless
industry. Such systems are well-known to offer substantial benefits
in terms of channel capacity, as well as improved diversity and link
reliability, and as such these techniques are being incorporated
into a range of emerging industry standards. More recently, relaying
strategies have been also proposed as a means of further improving
the performance of space-time coded MIMO networks. Such methods are
particularly effective for improving the data transmission quality
of users who are located near the periphery of a communication cell.

Various MIMO relaying strategies have been proposed in the
literature; see, e.g.,
Refs.~\onlinecite{Sendonaris2003_1,Sendonaris2003_2,Laneman2003,Laneman2004,Nabar2004,Fan2007},
offering different trade-offs in various factors such as
performance, complexity, channel estimation requirements, and
feedback requirements.  In Ref.~\onlinecite{DharMckayMallik2010}, a
communication technique was considered which employed a form of
orthogonal space-time block coding (OSTBC), along with non-coherent
amplify-and-forward (AF) processing at the multi-antenna relay. This
method has the advantage of operating with only low complexity,
requiring only linear processing at all terminals, achieving high
diversity, and not requiring any short-term channel information at
either the relays or the transmitter. For this system, an important
performance measure---the so-called \emph{outage
probability}---which gives a fundamental probabilistic performance
measure over random communication channels, was considered in Refs.~\onlinecite{DharMckayMallik2010} and \onlinecite{SongShin2009}, based on
deriving an expression for the moment generating function of the
received signal to noise ratio (SNR). The outage probability could
then be computed by performing a subsequent numerical Laplace
Transform inversion. The exact moment generating function results
presented in Refs.~\onlinecite{DharMckayMallik2010} and \onlinecite{SongShin2009},
however, are quite complicated functions involving a particular
\emph{Hankel} determinant, and they yield little insight. Moreover,
for all but small system configurations, the expressions are
somewhat difficult to compute, particularly when implementing the
Laplace inversion.

In this paper, we significantly expand and elaborate upon the
results of Refs.~\onlinecite{DharMckayMallik2010} and \onlinecite{SongShin2009} by
employing analytical tools from random matrix theory and statistical
physics. Technically, the challenge is to appropriately characterize
the Hankel determinant which arises in the moment generating
function computation. In general, the Hankel determinant takes the
form,
\bq\label{def:HankelDn} D_n = \det\left( \mu_{i+j}
\right)_{i,j=0}^{n-1},
\eq
generated from the moments of a certain
weight function $w(x)$, $0\leq x<\infty,$
\bea
\mu_{k} := \int\limits_0^\infty x^{k}
w(x) dx \; , \hspace*{1cm} k = 0, 1, 2, \dots.
\nonumber
\eea
Note that the above is merely one of several possible representations for
$D_n$, with equivalent forms regularly found in the fields of
statistical physics and random matrix theory.

For the problem at hand, we are faced with a Hankel determinant that
is generated from a weight function of the form,
\begin{align} \label{eq:wAF}
w_{{\rm AF}}(x,T,t) =  x^\al e^{-x}\left(\frac{t+x}{T+x}\right)^{\N}, \qquad
0\leq x<\infty,
\end{align}
which is a ``two-time'' deformation of the classical Laguerre weight, $x^\al e^{-x}$.
The parameters in this weight, satisfy the following conditions:
\bea
\al>-1, \; \; T := \frac{t}{1 + c s} , \; \; t
> 0,\; \; c>0, \; \; {\N}>0,\; \; 0\leq s <\infty.
\eea For our problem both $\al$ and $\N$ are integers, where
$\al:=|N_R-N_D|$ is the absolute difference between the number of
relay and destination antennas, $N_R$ and $N_D$
respectively.
The parameters $\alpha$ and $N_s$ are determined by the model, but in fact can be extended to take non-integer values such that $\alpha >-1$ and $N_s>0$.\cite{Szego1939}

We will apply two different methods to characterize this Hankel
determinant. For the first method, we will derive new exact
expressions for $D_n[w_{\rm AF}]$ by employing the theory of
orthogonal polynomials associated with the weight $w_{{\rm
AF}}(x,T,t)$ and their corresponding ladder operators. For such
methods, extensive literature exists (see, e.g.,
Refs.~\onlinecite{TracyWidom1999,Magnus1995,ChenPruessner2005,ChenMckay2010,ChenIts2009,ChenIsmail1997,ChenIsmail2005,ChenFeigin2006,BasorChenZhang,BasorChenEhrhardt,BasorChenMekar2012}
for their use in applications involving unitary matrix ensembles),
though in the context of information theory and communications the
techniques have only very recently been introduced by the first and
third authors in Ref.~\onlinecite{ChenMckay2010}. For the second method, we derive an approximation for the Hankel determinant using the
general linear statistics theorem obtained in
Ref.~\onlinecite{ChenLawrence1997}, based on Dyson's Coulomb Fluid models.
\cite{ChenIsmail1997,Dyson1962I,Dyson1962II,Dyson1962III,ChenManning1994}
These results are essentially the Hankel analogue of asymptotic
results for Toeplitz determinants. \cite{Szego1919} Once again,
these techniques have been used extensively, particularly in
statistical physics,
 though only very recently
have they been applied to address problems in communications and
information theory.\cite{Kazakopoulos2010,ChenMckay2010}

It is important to note that in addition to the two methodologies
advocated above, there exists other integrable systems approaches
which can be used for characterizing Hankel determinants. For
example, $D_n$ may be written in an equivalent matrix integral
formulation, from which a `deform-and-study' or isomonodromic
deformation approach may be adopted. Essentially, this idea involves
embedding $D_n$ into a more general theory of the $\tau$-function,
\cite{JimboMiwaUeno1981vI,JimboMiwa1981vII} using bilinear
identities and linear Virasoro constraints. For details, see
Refs.~\onlinecite{AdlervanMoerbeke1995,AdlervanMoerbeke2001,AdlerShiotavanMoerbeke1995}.
Yet another approach is to characterize $D_n$ through the use of
Fredholm determinants as employed by Tracy and Widom.
\cite{TracyWidom1999} Both these exact, non-perturbative integrable
systems methods have their own specific advantages and
disadvantages, and in most cases lead to non-linear ordinary differential or partial differential equations (ODE/PDEs) satisfied by the Hankel determinant.
However, these equations are usually of higher order (equations of Chazy type usually appear), from which first integrals have to be found
to reduce to a second order ODE. The advantage of
the ladder operator method is that closed-form second order
equations are directly obtained for a quantity related to the Hankel
determinant, bypassing the need to find a first integral, or
evaluate any multiple integral.

The rest of this paper is organized as follows. In Sections \ref{SubSec:AnF_Model} and \ref{SubSec:AnF_Perf_Mesures}, we present a brief discussion of the model
which underpins the MIMO-AF wireless communication system of interest, and some basic measures which are employed to quantify
system performance. Then, in Sections \ref{SubSec:Stat_Char_ga} and \ref{SubSec:Alt_Char_ga}, we go on to introduce the moment generating function and cumulants of
interest, and pose the key mathematical problems to be dealt with in
the remainder of the paper.

In Section \ref{Sec:P_Char}, we establish an exact finite $n$ characterization of the Hankel
determinant $D_n$, employing the theory of orthogonal polynomials and their ladder
operators. 

In Section \ref{Sec:CF}, we introduce the Coulomb Fluid method, where we first give a
brief overview of its key elements,
following
Refs.~\onlinecite{ChenManning1994},~\onlinecite{ChenIsmail1997T},~\onlinecite{ChenLawrence1997,ChenMckay2010}.
In Section \ref{SubSec:CF_AnF_Calc}, we compute an asymptotic (large $n$) approximation
for the Hankel determinant $D_n$, and thus a corresponding
characterization for the moment generating function of interest. In
Section \ref{SubSec:SER_Analysis}, the Coulomb Fluid representation for the moment
generating function of the received SNR is shown to yield extremely
accurate approximations for the error performance of OSTBC MIMO systems with AF relaying, even when the
system dimensions are particularly small.

We also employ our
analytical results to compute closed-form expressions for the
cumulants of the received SNR, first via the Coulomb Fluid method in Section \ref{eq:SubSec:Cumulants},
and then presenting a refined analysis based on Painlev\'e
equations in Sections \ref{Sec:Cumulant_Painleve_Analysis} and \ref{Sec:Large_n_Corr}.

Subsequently, we give an asymptotic characterization of
the moment generating function, valid for scenarios for which the
average received SNR is high, deriving key quantities of interest to
communication engineers, including the so-called diversity order and
array gain.  These results are, once again, established via the
Coulomb Fluid approximation in Section \ref{Sec:CF_Large_s}, and subsequently validated with the
help of a Painlev\'e characterization in Section \ref{Sec:PV_Large_s}.

\subsection{Amplify and Forward Wireless Relay Model}\label{SubSec:AnF_Model}
Here we briefly recall the background for the model, having been
developed in Ref.~\onlinecite{DharMckayMallik2010}. The dual hop MIMO
communication system features a source, relay and destination
terminal, having $\N$, $N_R$ and $N_D$ antennas respectively. In the
process of transmitting a signal, each transmission period is
divided into two time slots. In the first time slot, the source
transmits to the relay. The relay then amplifies its received signal
subject to an average power constraint, prior to transmitting the
amplified signal to the destination terminal during the second
time-slot. We assume that the source and destination terminals are
sufficiently separated such that the direct link between them is
negligible (i.e., all communication is done via the relay).

Let $\mathbf{H_1}\in\mathbb{C}^{N_R\times {\N}}$ and
$\mathbf{H_2}\in\mathbb{C}^{N_D\times N_R}$ represent the channel
matrices between the source and relay, and the relay and destination
terminals respectively. Each channel matrix is assumed to have
uncorrelated elements distributed as
$\mathcal{CN}(0,1)$.\footnote{The notation $\protect \mathcal  {CN}(\mu ,\sigma ^2)$ represents a complex Gaussian distribution with mean $\mu $ and variance $\sigma ^2$.} The destination is assumed to have perfect
knowledge of $\mathbf{H_2}$ and either $\mathbf{H_1}$ or the
cascaded channel $\mathbf{H_2 H_1}$, while the relay and source
terminals have no knowledge of these.

A situation is considered where the source terminal transmits data
using a method called OSTBC encoding.~\cite{Jafarkhani2005} In this
situation, groups of independent and identically distributed complex
Gaussian random variables (referred to as information ``symbols'')
$s_i$, $i=1,\dots,N$ are assigned via a special codeword mapping to
a row orthogonal matrix $\mathbf{X} =
(\mathbf{x}_1,\dots,\mathbf{x}_{N_P}) \in\mathbb{C}^{\N\times N_P}$
satisfying the power constraint $ E \left[ \| \mathbf{x}_k \|^2
\right] = \bar{\gamma}$, where $N_P$ is the number of symbol periods
used to send each codeword.

Since it takes $N_P$ symbol periods to transmit $N$ symbols, the
coding rate is then defined as
\begin{align}
R=\frac{N}{N_P}.
\end{align}
The received signal matrix at the relay terminal at the end of the
first time slot,  $\mathbf{Y} =
(\mathbf{y}_1,\dots,\mathbf{y}_{N_P}) \in\mathbb{C}^{N_R\times
N_P}$, is given by
\begin{align}
\mathbf{Y} = \mathbf{H}_1 \mathbf{X} + \mathbf{N} \, ,
\end{align}
where $\mathbf{N} \in\mathbb{C}^{N_R\times N_P}$ has uncorrelated
entries distributed as $\mathcal{CN}(0,1)$, representing normalized
noise samples at the relay terminal. The relay then amplifies the
signal it has received by a constant gain matrix
$\mathbf{G}=\ti{a}\mathbf{I}_{N_R}$, where \bea\label{defn:ti(a2)}
\ti{a}^2=\frac{\ti{b}}{(1+\bg)N_R}. \eea In the above, $\ti{b}$ is
the total power constraint imposed at the relay, i.e., $ E \left[ \|
\mathbf{Gy}_k \|^2 \right] \leq\ti{b}$, whilst $\bg$ represents the
average received SNR at the relay. The received signal $\mathbf{R}
\in\mathbb{C}^{N_D \times N_P}$ at the destination terminal at the
end of the second time slot is then given by
\begin{align}
\mathbf{R} \; = \; \ti{a}\mathbf{H}_2 \mathbf{Y} + \mathbf{W} \; =
\; \ti{a}\mb{H}_2\mb{H}_1 \mb{X} + \ti{a}\mb{H}_2 \mb{N} + \mb{W}
,\qquad \label{eq:recModel}
\end{align}
where $\mathbf{W} \in\mathbb{C}^{N_D \times N_P}$ has uncorrelated
entries distributed as $\mathcal{CN}(0,1)$, representing normalized
noise samples at the destination terminal.

Next, the receiver applies the linear (noise whitening) operation,
\begin{align}
\ti{\mb{R}}= \mb{K}^{-1/2} \mb{R} , \qquad \mb{K} :=
\ti{a}^2\mb{H}_2\mb{H}_2^\dag+\mb{I}_{N_D},
\end{align}
to yield the equivalent input-output model
\begin{align}
\label{eq:equivModel} \ti{\mb{R}}=\ti{\mb{H}}\mb{X}+\ti{\mb{N}},
\end{align}
where $\ti{\mb{H}}=\ti{a}\mb{K}^{-1/2}\mb{H}_2\mb{H}_1$ and
$\ti{\mb{N}} \in \mathbb{C}^{N_D \times N_P}$ has uncorrelated
entries distributed as $\mathcal{CN}(0,1)$.




%

Based on the above relationship, standard linear OSTBC decoding can
be applied (see, e.g., Ref.~\onlinecite{Jafarkhani2005}). This results in
decomposing the matrix model (\ref{eq:equivModel}) into a set of
parallel non-interacting single-input single-output relationships
given by
$$
\ti{s}_l=\vert\vert\ti{\mb{H}}\vert\vert_Fs_l+\eta_l,\qquad
l=1,\dots,N,
$$
where $\eta_l$ is distributed as $\mathcal{CN}\left(0, 1 \right)$,
with $\vert\vert\ti{\mb{H}}\vert\vert_F$ representing the Frobenius
norm (or matrix norm) of $\ti{\mb{H}}$.

The quantity that we are interested in, the instantaneous SNR for
the $l$th symbol $\gamma_l$, can then be written as
\begin{align} \label{eq:SNR_AF_Def}
\gamma_l \, = \, \vert\vert\ti{\mb{H}}\vert\vert^2_F E \left[ \vert
s_l\vert^2 \right] \, = \,
\frac{\bg\ti{b}}{R{\N}(1+\bg)N_R}\operatorname{Tr}\left(\mb{H}_1^\dag\mb{H}_2^\dag\mb{K}^{-1}\mb{H}_2\mb{H}_1\right).
\end{align}
Since the right-hand side is independent of $l$, we may drop the $l$
subscript and denote the instantaneous SNR as $\gamma$ without loss
of generality.

\subsection{Wireless Communication Performance Measures}\label{SubSec:AnF_Perf_Mesures}

Here we recall some basic measures which are employed to quantify
the performance of wireless communication systems.  One of the most
common measures is the so-called symbol error rate (SER), which
quantifies the rate in which the transmitted symbols (or signals)
are detected incorrectly at the receiver.

For signals which are designed using standard {\em $M$-ary phase
shift keying (MPSK)} digital modulation formats, the phase of a
transmitted signal is varied to convey information, where $M \, \in
\{ 2, 4, 8, 16, \ldots \}$ represents the number of possible signal
phases. The SER can be expressed as follows \cite{Simon2005}
\begin{align} \label{eq:SERExact}
P_{\rm MPSK} = \frac{1}{\pi} \int\limits_0^\Theta {\cal M}_\gamma \left(
\frac{g_{\rm MPSK}}{\sin^2 \theta } \right) d \theta,
\end{align}
where ${\cal M}_\gamma ( \cdot )$ is the moment generating function
of the instantaneous SNR $\gamma$ at the receiver, and $\Theta = \pi
(M-1) / M$ and $g_{\rm MPSK} = \sin^2(\pi/M)$ are
modulation-specific constants. As also presented in
Ref.~\onlinecite{MckayZanella2009}, the SER may be approximated in terms of
${\cal M}_\gamma ( \cdot )$ but without the integral, via the
following expression:
\begin{align} \label{eq:SERApprox}
P_{\rm MPSK} \approx \left( \frac{\Theta}{2 \pi} - \frac{1}{6}
\right) {\cal M}_\gamma (g_{\rm MPSK}) + \frac{1}{4} {\cal M}_\gamma
\left( \frac{ 4 g_{\rm MPSK}}{3} \right) + \left( \frac{\Theta}{2
\pi} - \frac{1}{4} \right) {\cal M}_\gamma \left(\frac{g_{\rm
MPSK}}{\sin^2 \Theta}\right)  \; .
\end{align}

In addition to the SER, another useful quantity is the so-called
amount of fading (AoF), which serves to quantify the degree of
fading (i.e., the level of randomness) in the wireless channel, and
is expressed directly in terms of the \emph{cumulants} of $\gamma$.
Specifically, the AoF is defined as follows:
\begin{align}
{\rm AoF} = \kappa_2 / \kappa^2_1,
\end{align}
where $\kappa_1$ and $\kappa_2$ are the mean and variance of
$\gamma$ respectively.  As shown in Ref.~\onlinecite{Karagiannidis2004}, for
example, this quantity can be used to describe the achievable
capacity of the wireless link when the average SNR is low.


For our AF relaying system under consideration, it is clear that a
major challenge is to characterize the moment generating function
and also the cumulants of the instantaneous SNR given in
(\ref{eq:SNR_AF_Def}). This is the key focus of the paper.

\subsection{Statistical Characterization of the SNR $\ga$}\label{SubSec:Stat_Char_ga}
Here we introduce the moment generating function and cumulants of
interest, and pose the key mathematical problems to be dealt with in
the remainder of the paper.

Defining the positive integers,
\bea
m:={\rm max}(N_R,N_D),&\qquad&n={\rm min}(N_R,N_D),
\eea where $n\neq 0$ due to physical considerations, it was shown in Ref.~\onlinecite{DharMckayMallik2010}
that the moment generating function of the instantaneous SNR (\ref{eq:SNR_AF_Def}) may be
expressed as the multiple integral,
\begin{align}
\label{eq:MgfDharMckay}
\mathcal{M}_\gamma(s)&=
\frac{\frac{1}{n!}\int\limits_{[0,\infty)^n}\prod\limits_{1\leq
i<j\leq
n}(x_j-x_i)^2\prod\limits_{k=1}^{n}x_k^{m-n}e^{-x_k}\left(\frac{1+\ti{a}^2x_k}{1+\ti{a}^2(1+\frac{\bg
s}{R{\N} })x_k}\right)^{\N}dx_k}
{\frac{1}{n!}\int\limits_{[0,\infty)^n}\prod\limits_{1\leq
i<j\leq
n}(x_j-x_i)^2\prod\limits_{k=1}^{n}x_k^{m-n}e^{-x_k}dx_k}.
\end{align}
From the above, we may then compute the $l^{\rm th}$ cumulant of
$\ga$ via
\begin{align}
\label{def:kappa_l} \kappa_l=\:(-1)^l\frac{d^l}{ds^l}\:\log{\cal
M}_\ga(s)\Bigg|_{s=0}.
\end{align}
Let
\begin{align}
\al &:= m-n,\\
\label{defn:t}{t} &:= \frac{1}{\ti{a}^2},\\
\label{defn:c}c&:= \frac{\bg}{R{\N}},\\
\label{tf:T(s)}T=T(s)&:= \frac{{t}}{\left(1+cs\right)}.
\end{align}
We now consider the moment generating function
(\ref{eq:MgfDharMckay}) as a function of two variables, $(T,t)$, or
$(s,t)$, since $T=t/(1+cs)$. We may then write the moment generating
function (\ref{eq:MgfDharMckay}) as
\begin{align}
\label{eq:mgfMultipleIntegral}
\mathcal{M}_\gamma(T,{t})=\left(\frac{T}{{t}}\right)^{n{\N}}
\frac{\frac{1}{n!}\int\limits_{[0,\infty)^n}\prod\limits_{1\leq
i<j\leq n}(x_j-x_i)^2\prod\limits_{k=1}^n x_k^\al
e^{-x_k}\left(\frac{{t}+x_k}{T+x_k}\right)^{\N} dx_k}
{\frac{1}{n!}\int\limits_{[0,\infty)^n}\prod\limits_{1\leq i<j\leq
n}(x_j-x_i)^2\prod\limits_{k=1}^n x_k^\al e^{-x_k} dx_k}.
\end{align}
We see that this involves the weight with the parameters $T$ and $t$
\begin{align}
\label{eq:w(x,T,t)} w_{{\rm AF}}(x,T,{t}) &= x^\al
e^{-x}\left(\frac{{t}+x}{T+x}\right)^{\N},
\end{align}
which is a deformation of the classical generalized Laguerre weight
\begin{align}
w_{{\rm Lag}}^{(\al)}(x)&= x^\al e^{-x},\qquad 0\leq x<\infty.
\end{align}
Note that $w_{{\rm AF}}(x,t,t)=w_{{\rm Lag}}^{(\al)}(x).$

The multiple integral representation (\ref{eq:mgfMultipleIntegral})
is expressed as a ratio of Hankel determinants:
\begin{align}
\label{eq:Mgf(Dn)}
\mathcal{M}_\ga(T,{t}) &=
\left(\frac{T}{{t}}\right)^{n{\N}}\frac{D_n[w_{{\rm AF}}(\cdot,T,{t})]}{D_n[w_{{\rm Lag}}^{(\al)}(\cdot)]},\\
&=
\left(\frac{T}{{t}}\right)^{n{\N}}\frac{\det\Big(\mu_{i+j}(T,t)\Big)_{i,j=0}^{n-1}}{\det\Big(\mu_{i+j}(t,t)\Big)_{i,j=0}^{n-1}},
\; \label{eq:Mgf(detmu)}
\end{align}
where $\mu_{j}(T,t)$ is the $j$th moment of the weight
\begin{align}
\mu_j(T,t)=&\int\limits^\infty_0 \! x^jw_{{\rm AF}}(x,T,t) \ dx, \qquad j=0,1,2,\dots\nonumber
\end{align}
For $\N\in \mathbb{N}$, the moments of the weight $w_{{\rm AF}}$ are
expressed in terms of the Kummer function of the second kind,
\begin{align}
\label{eq:Moments(Kummer)} \mu_j(T,{t}) &=
t^{\al+j+1}\Gamma(\al+j+1)\sum\limits_{k=0}^{\N}\binom{{\N}}{k}\left(\frac{{t}-T}{T}\right)^kU(\al+j+1,\al+j+2-k,T).
\end{align}
The Hankel determinant for $T=t$, namely,
$$
D_n[w_{{\rm AF}}(\cdot,t,t)]=D_n[w_{{\rm Lag}}^{(\al)}(\cdot)],
$$
can be regarded as a normalization constant, so that
$\mathcal{M}_\ga(t,t)=1$, and its closed form expression is well-known (see Ref.~\onlinecite{Mehta2004}).
\begin{align}
D_n[w_{{\rm AF}}(\cdot,t,{t})]
&=\prod\limits_{i=0}^{n-1}\Gamma(\al+i+1)\Gamma(i+1), \nonumber\\
\label{eq:NormConst_n_al}&=\frac{G(n+1)G(n+\al+1)}{G(\al+1)},
\end{align}
where $G(z)$ is the Barnes G-function defined by
$G(z+1)=\Gamma(z)G(z)$ with $G(1)=1$.

From a simple change of variables, the cumulants can be calculated by transforming (\ref{def:kappa_l}) using (\ref{tf:T(s)}) as
\begin{align}
\label{eq:kappa_l(T)} \kappa_l =
c^l\left(\frac{T^2}{{t}}\frac{d}{dT}\right)^l\log
\mathcal{M}_\ga(T,{t})\bigg\vert_{T={t}}.
\end{align}

Note that an equivalent expression to the result given in (\ref{eq:Moments(Kummer)}) has been
previously reported in Ref.~\onlinecite{DharMckayMallik2010}.
This result,
whilst having numerical computation advantages when the system
dimensions are small, has a number of shortcomings. Most
importantly, its complexity does not lead to useful insight into the
characteristics of the probability distribution of the SNR $\gamma$,
and it does not clearly reveal the dependence on the system
parameters.  Moreover, in this form, the expression is not amenable
to asymptotic analysis, e.g., as the number of antennas grow
sufficiently large, and in such cases its numerical computation
becomes unwieldy.

In the following, we will seek alternative
simplified representations with the aim of overcoming these
shortcomings.  The key challenge is to characterize the Hankel
determinant in (\ref{eq:Mgf(Dn)}).
\begin{remark}
It is of interest to mention here that for the special case $\alpha
= 0$, the Hankel determinant (\ref{eq:Mgf(Dn)}) is directly related
to the shot-noise moment generating function of a disordered
quantum conductor. This was investigated in Ref.~\onlinecite{MuttalibChen2008} using the Coulomb Fluid method.
Specifically, the two moment generating
functions are related by
\bea
\mathcal{M}_\ga(T,t)|_{\alpha=0}&=&\frac{\mathcal{M}_{\text{Shot-Noise}}(T,z)}{z^{n\N}},
\eea
where $z:=t/T$ and $\mathcal{M}_{\text{Shot-Noise}}(T,z)$ is the
Hankel determinant generated via
\bea
w(x,T,z)&=&e^{-Tx}\left(\frac{z+x}{1+x}\right)^{\N},\qquad
0\leq x<\infty.
\eea
Of course, $T$, $z$ and $\N$ have different
interpretations in this situation.

 We mention here
that the operator theory approach of Ref.~\onlinecite{BasorChenWidomJFA} justifies the Coulomb Fluid results obtained
in Ref.~\onlinecite{BasorChenWidomMSRI} on the shot-noise problem.
\end{remark}

\subsection{Alternative Characterization of the SNR $\ga$}\label{SubSec:Alt_Char_ga}
An alternative characterization for the moment generating function that will also prove to be
useful is derived as follows. From the definitions of  $m$ and $n$,
there are two possible choices for the parameter $t$,
\bea\label{eq:CF:t(n)Dep}
t=\frac{1}{\ti{a}^2}=\frac{(1+\bg)N_R}{\ti{b}}&=
\begin{cases}
\frac{(1+\bg)n}{\ti{b}} & N_R\leq N_D,\\
\frac{(1+\bg)nr}{\ti{b}},\qquad \text{where}\qquad r:=m/n, \qquad &N_R>N_D.
\end{cases}
\nonumber\\
\eea

We first consider the sub-case $N_R\leq N_D$. To this end, starting
with (\ref{eq:mgfMultipleIntegral}) and with the change of
variables, $x_i\to n x_i $, $i=1,\dots,n$, we obtain\footnote{For
the sake of brevity, instead of defining a new function, we write
$\mathcal{M}_\ga(T^\prime,{t}^\prime)$ in place of
$\mathcal{M}_\ga\left(nT^\prime,nt^\prime\right)$. }
\begin{scriptsize}
\bea\label{eq:CF:MGF_Scaled}
\mathcal{M}_\ga(T^\prime,t^\prime)&=&
\left(\frac{T^\prime}{{t^\prime}}\right)^{n{\N}}
\frac{\frac{1}{n!}\int\limits_{[0,\infty)^n}\prod\limits_{l=1}^ndx_l\exp\Bigg[2\sum\limits_{1\leq j<k\leq n}\log\vert x_j-x_k\vert-n\sum\limits_{j=1}^n\left(x_j-\beta\log x_j\right)-\sum\limits_{j=1}^n{\N}\log\left(\frac{T^\prime+x_j}{{t}^\prime+x_j}\right)\Bigg]}
{\frac{1}{n!}\int\limits_{[0,\infty)^n}\prod\limits_{l=1}^ndx_l\exp\Bigg[2\sum\limits_{1\leq j<k\leq n}\log\vert x_j-x_k\vert-n\sum\limits_{j=1}^n\left(x_j-\beta\log x_j\right)\Bigg]},\nonumber\\
\eea
\end{scriptsize}
where
\bea
\label{def:tprime}t^\prime&:=&\frac{t}{n},\\
\label{def:Tprime}T^\prime&:=&\frac{T}{n},\\
\label{def:CF:beta}\beta&:=&\frac{m}{n}-1.
\eea
Note that in terms of $s$, $T^\prime$ can be written as
\bea
T^\prime&=&\frac{t^\prime}{1+cs}.
\eea
Equivalently, we can write (\ref{eq:CF:MGF_Scaled}) as
\begin{equation}\label{def:CF:MGF_Zn/Z0}
\mathcal{M}_\ga(T^\prime,t^\prime)=\left(\frac{T^\prime}{{t^\prime}}\right)^{n{\N}}\frac{Z_n(T^\prime,t^\prime)}{Z_n({t}^\prime,{t}^\prime)},
\end{equation}
where
\begin{small}
\bea\label{def:CF:ZnCFM} 
Z_n (T^\prime,t^\prime) &:=& D_n\left(nT^\prime,nt^\prime\right)
\nonumber\\
&=&\frac{1}{n!}\int\limits_{[0,\infty)^n}
\exp \left[ - \Phi (x_1, \ldots, x_n ) - \sum_{j=1}^n {\N}\log\left(\frac{T^\prime+x_j}{{t}^\prime+x_j}\right)
\right] \prod_{l=1}^n d x_l,\;
\eea
\end{small}
and
\begin{align}
\label{eq:CF:PhiDefinition_Intro}
 \Phi(x_1, \ldots, x_n) := -2 \sum_{1
\leq j < k \leq n} \log | x_j - x_k | + n \sum_{j=1}^n \left(x_j-\beta\log x_j\right) \;.
\end{align}
\begin{remark}
We introduce the variables $T^\prime$ and $t^\prime$ in order to account for the $n$-dependence of the variables $T$ and $t$.
 This is important since the above representation will be useful for deriving a large $n$ approximation
for the moment generating function based on the
Coulomb Fluid linear statistics approach in Section \ref{Sec:CF}.
\end{remark}

For the sub-case $N_R>N_D$, we have from (\ref{eq:CF:t(n)Dep}) that
$t$ is instead given by \bea t&=&\frac{(1+\bg)(1+\bt)n}{\ti{b}},
\eea where $\bt=\frac{m}{n}-1.$ Hence equations
(\ref{def:CF:MGF_Zn/Z0})--(\ref{eq:CF:PhiDefinition_Intro}) and the
results of Sections \ref{Sec:CF}-\ref{Sec:Large_n_Corr} are valid
for the case $N_R>N_D$ upon transforming $t^\prime$ and $T^\prime$
to $t^\dag$ and $T^\dag$ via \bea\label{def:tdag}
t^\prime&=:&(1+\bt) t^\dag \eea and \bea\label{def:Tdag}
T^\prime&=:&(1+\bt) T^\dag \eea respectively. In this case, we may
write $T^\dag$ in terms of the variable $s$ as \bea T^\dag&=&
\frac{t^\dag}{1+cs}. \eea
\section{Painlev\'e Characterization via the Ladder Operator
Framework}\label{Sec:P_Char}

Using the ladder operator framework, and treating $T$ and $t$ as
independent variables, the result below is proved in
Appendix \ref{sec:Toda}. This gives a PDE satisfied by
$\log\mathcal{M}_\ga(T,t)$ in the variables $T$ and $t.$

\begin{theorem}\label{Thm:Hn(Tt0)}
Let the quantity $H_n(T,t)$ be defined through the Hankel
determinant $D_n(T,{t})$ as \bea \label{def:HnIntro} H_n(T,{t})&:=&
(T\partial_T+{t}\partial_{{t}})\log
D_n(T,{t}) \\
\label{eq:Hn=LAFlogMgf}&=&
(T\partial_T+{t}\partial_{{t}})\log\mathcal{M}_\ga(T,t),
\eea
where
the second equality follows from (\ref{eq:Mgf(Dn)}). Then
$H_n(T,t)$ satisfies the following PDE:\footnote{Dropping the $T$ and $t$ dependence notation for
the sake of brevity.}: 
\bea \label{eq:PDE(Hn)Intro}
H_n&=&-2(\partial_TH_n)(\partial_{{t}}H_n)+(2n-\N+\al+T)\partial_TH_n+(2n+\N+\al+{t})\partial_{{t}}H_n\nonumber\\
&&\pm A_1(H_n)\mp
A_2(H_n)-\frac{\Big[(T\partial_{TT}^2+{t}\partial_{T{t}}^2)H_n\Big]\Big[(T\partial_{T{t}}^2+{t}\partial_{{t}{t}}^2)H_n\Big]
+A_1(H_n)A_2(H_n)}{2\Big[T(\partial_TH_n)+{t}(\partial_{{t}}H_n)-H_n+n(n+\al)\Big]},\nonumber\\
\eea
where
\begin{small}
\bea\label{defn:A1(Hn)}
\{A_1(H_n)\}^2&=&\Big((T\partial_{TT}^2+{t}\partial_{T{t}}^2)H_n\Big)^2
\nonumber\\&&
+4\Big(T(\partial_TH_n)+{t}(\partial_{{t}}H_n)-H_n+n(n+\al)\Big)\Big(\partial_TH_n\Big)\Big(\partial_TH_n-\N\Big),
\eea
\end{small}
and
\begin{small}
\bea\label{defn:A2(Hn)}
\{A_2(H_n)\}^2&=&\Big((T\partial_{T{t}}^2+{t}\partial_{{t}{t}}^2)H_n\Big)^2
\nonumber\\&&
+4\Big(T(\partial_TH_n)+{t}(\partial_{{t}}H_n)-H_n+n(n+\al)\Big)\Big(\partial_{{t}}H_n\Big)\Big(\partial_{{t}}H_n+\N\Big).
\eea
\end{small}
\end{theorem}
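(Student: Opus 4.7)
My plan is to derive the PDE via the ladder operator framework for the monic orthogonal polynomials $P_n(x;T,t)$ associated with $w_{\rm AF}(x,T,t)=x^\al e^{-x}\bigl(\frac{t+x}{T+x}\bigr)^{\N}$, together with the Toda-like flows generated by differentiating the orthogonality relations in the two deformation parameters. Writing $v(x)=x-\al\log x+\N\log(T+x)-\N\log(t+x)$, the potential derivative
\[
v'(x)=1-\frac{\al}{x}+\frac{\N}{T+x}-\frac{\N}{t+x}
\]
has simple poles at $x=0,-T,-t$, so the classical formulas
\[
A_n(z)=\frac{1}{h_n}\int_0^\infty\frac{v'(z)-v'(y)}{z-y}P_n^2(y)w_{\rm AF}(y)\,dy,\qquad
B_n(z)=\frac{1}{h_{n-1}}\int_0^\infty\frac{v'(z)-v'(y)}{z-y}P_n(y)P_{n-1}(y)w_{\rm AF}(y)\,dy
\]
yield rational functions of $z$ with a standard pole structure. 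First I would expand them in partial fractions and read off four natural auxiliary quantities at $z=-T$ and $z=-t$ (call them $R_n(T,t),r_n(T,t)$ at $z=-T$ and $\tilde R_n(T,t),\tilde r_n(T,t)$ at $z=-t$), together with the residue at $z=0$ which encodes the endpoint contribution.

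Next I would impose the three compatibility identities $(S_1)$, $(S_2)$, $(S_2')$ of the ladder operator scheme. These give, on the one hand, algebraic constraints expressing the recurrence coefficients $\al_n,\bt_n$ of $P_n$ in terms of $R_n,r_n,\tilde R_n,\tilde r_n$, and on the other hand difference equations in $n$ among the auxiliary quantities. In parallel I would differentiate the orthogonality relation $\int_0^\infty P_n^2w_{\rm AF}\,dy=h_n$ with respect to $T$ and $t$ to obtain the pair of Toda-type equations
\[
T\pl_T\log h_n=-\N\, R_n+(\text{lower terms}),\qquad t\pl_t\log h_n=\N\,\tilde R_n+(\text{lower terms}),
\]
and analogous first-order relations for $\pl_T\al_n$ and $\pl_t\al_n$. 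Summing over $n$ and using $D_n=\prod_{j=0}^{n-1}h_j$ will give $\pl_T\log D_n$ and $\pl_t\log D_n$, and hence $H_n=(T\pl_T+t\pl_t)\log D_n$, directly in terms of the auxiliary quantities of the highest index.

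The main obstacle is the elimination: I must eliminate the four auxiliary functions $R_n,r_n,\tilde R_n,\tilde r_n$ together with $\al_n,\bt_n$ to end up with a single relation involving only $H_n$ and its first and second partial derivatives in $T,t$. The strategy is to solve the two Toda equations for $R_n$ and $\tilde R_n$ in terms of $(\pl_TH_n,\pl_tH_n)$, and to solve two further linear combinations of the compatibility identities for $r_n$ and $\tilde r_n$ in terms of second derivatives such as $(T\pl_{TT}^2+t\pl_{Tt}^2)H_n$ and $(T\pl_{Tt}^2+t\pl_{tt}^2)H_n$. The remaining constraints from $(S_1)$ and $(S_2')$ are quadratic in $r_n,\tilde r_n$, which is precisely what produces the $\pm\sqrt{\{A_1\}^2}$ and $\mp\sqrt{\{A_2\}^2}$ branches in (\ref{eq:PDE(Hn)Intro}) after isolating the square. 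The denominator $2\bigl[T\pl_TH_n+t\pl_tH_n-H_n+n(n+\al)\bigr]$ will appear as the Jacobian of this change of variables, reflecting the Hamiltonian structure $H_n+n(n+\al)=T\pl_T(\cdot)+t\pl_t(\cdot)$ of the underlying two-time isomonodromic system, so the factorization leading to the stated PDE should follow from careful bookkeeping of these quadratic residuals.
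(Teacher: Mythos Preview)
Your overall architecture is exactly the paper's: ladder operators for $w_{\rm AF}$, the four auxiliary quantities $R_n,r_n,R_n^\ast,r_n^\ast$ attached to the poles at $-T$ and $-t$, the compatibility conditions $(S_1),(S_2),(S_2')$, Toda flows from differentiating orthogonality in $T$ and $t$, and a final elimination. So the plan is correct and matches the paper's route.

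There is, however, a concrete misassignment in your elimination step that would derail the computation if followed literally. The Toda flows give $\partial_T\log h_n=-R_n$ and $\partial_t\log h_n=R_n^\ast$, and differentiating the off-diagonal orthogonality gives $\partial_T\P_1(n)=r_n$, $\partial_t\P_1(n)=-r_n^\ast$. The key closed-form identity (obtained by combining the telescoped $(S_1)$ with $(S_2')$) is $H_n=\P_1(n)+n(n+\al)$, so the \emph{first} partial derivatives of $H_n$ yield the off-diagonal quantities:
\[
\partial_TH_n=r_n,\qquad \partial_tH_n=-r_n^\ast,
\]
not $R_n,\tilde R_n$ as you wrote. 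The diagonal quantities $R_n,R_n^\ast$ are instead recovered from \emph{quadratic} equations obtained by combining $\partial_T\bt_n=\bt_n(R_{n-1}-R_n)$ with $\bt_nR_{n-1}=r_n(r_n-\N)/R_n$ (and the analogous $t$-pair), after expressing $\bt_n=(T\partial_T+t\partial_t)H_n-H_n+n(n+\al)$. Solving those quadratics is precisely what produces the $\pm A_1(H_n)$ and $\pm A_2(H_n)$ branches and the second-derivative combinations $(T\partial_{TT}^2+t\partial_{Tt}^2)H_n$ and $(T\partial_{Tt}^2+t\partial_{tt}^2)H_n$. The final substitution is into the $(S_2')$ expression for $H_n$ in terms of $r_n,r_n^\ast,R_n,R_n^\ast,\bt_n$, not into a residual of $(S_1)$. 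Swap the roles of $(R_n,\tilde R_n)$ and $(r_n,\tilde r_n)$ in your last paragraph and the rest of your outline goes through verbatim.
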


\begin{remark}
If $H_n(T,t)$ is a function of $T$ only, i.e., $H_n(T,t)=Y_n(T)$,
then the PDE (\ref{eq:PDE(Hn)Intro})
reduces to the Jimbo-Miwa-Okamoto $\sigma$-form
\cite{JimboMiwa1981vII} associated with Painlev\'e V: \bea
\Big(T{Y_n}^{\prime\prime}\Big)^2&=&\Big({Y_n}-T{Y_n}^\prime+2({Y_n}^\prime)^2+(\nu_0+\nu_1+\nu_2+\nu_3){Y_n}^\prime\Big)^2\nonumber
\\
&&-4\big(\nu_0+{Y_n}^\prime\big)\big(\nu_1+{Y_n}^\prime\big)\big(\nu_2+{Y_n}^\prime\big)\big(\nu_3+{Y_n}^\prime\big),\nonumber
\eea where ${}^\prime$ denotes differentiation with $T$, and \bea
\nu_0=0,\qquad\nu_1=-n-\al,\qquad \nu_2=-n,\qquad
\nu_3=-N_s.\nonumber \eea
Such a reduction can be obtained from a
$t\to\infty$ limit in (\ref{eq:mgfMultipleIntegral}).
\end{remark}
\begin{remark}
A similar reduction can be found when $H_n(T,t)$ is a function of
$t$ only, i.e., $H_n(T,t)=Y_n(t)$, which is a $\sigma-$form of
Painlev\'e V with parameters \bea \nu_0=0,\qquad\nu_1=-n-\al,\qquad
\nu_2=-n,\qquad \nu_3=N_s.\nonumber \eea This can be obtained from a
$T\to\infty$ limit in (\ref{eq:mgfMultipleIntegral}).
\end{remark}

With a change of variables the PDE, equation
(\ref{eq:PDE(Hn)Intro}), can be converted into a form which will be
convenient for the later computation of cumulants. Specifically, let
\bea
T&=&\frac{v}{1+cs},\\
{t}&=&v.
\eea
Under this transformation, $H_n(T,t)$ becomes
$H_n\Big(\frac{v}{1+cs},v\Big)$, which we write as $H_n(s,v)$ for
the sake of brevity. Hence,
\bea
\label{eq:LAFs1logMgf}H_n(s,v)&=&v\partial_v\log
\mathcal{M}_\ga(s,v).
\eea
We have that $v={t}$ and $s=\frac{1}{c}(\frac{{t}}{T}-1)$ and consequently,
\bea
\frac{\partial}{\partial T} &=&
-\frac{(1+cs)^2}{vc}\frac{\partial}{\partial s},\\
\frac{\partial}{\partial t}&=&%
\frac{(1+cs)}{vc}\frac{\partial}{\partial s}+\frac{\partial}{\partial v}.
\eea
Under the change of variables, the original PDE (\ref{eq:PDE(Hn)Intro}) becomes
the following PDE in the variables $(s,v)$:
\begin{small}
\bea\label{eq:PDE(s,v)}
H_n&=&2\frac{(1+cs)^2}{vc}\Big(\partial_sH_n\Big)\Big(\partial_v H_n\Big)+\frac{2(1+cs)^3}{(vc)^2}\Big(\partial_sH_n\Big)^2+(2n+N_s+\al+v)\Big(\partial_vH_n\Big)\nonumber\\
&&+2N_s\frac{(1+cs)}{vc}\Big(\partial_sH_n\Big)-(2n-N_s+\al)\frac{s(1+cs)}{v}\Big(\partial_sH_n\Big)
\pm A_1^\ast(H_n)\mp A_2^\ast(H_n)
\nonumber\\
&& \hspace*{-0.5cm}
-\frac{A_1^\ast(H_n)A_2^\ast(H_n)(vc)^2-(1+cs)^2\Big[(v\partial^2_{vs}-\partial_s)H_n\Big]
\Big[(1+cs)\big(v\partial_{vs}^2-\partial_s\big)H_n+v^2c\big(\partial^2_{vv}H_n\big)\Big]}{2(vc)^2\Big[v(\partial_vH_n)-H_n+n(n+\al)\Big]},
\nonumber\\
\eea
\end{small}
where $A_1^\ast(H_n)$ and $A_2^\ast(H_n)$ obtained from a
corresponding transformation are
\begin{small}
\bea\label{eq:A1*2}
(vc)^2A_1^\ast(H_n)^2&=&(1+cs)^4\Big[\big(v\partial_{vs}^2-\partial_{s}\big)H_n\Big]^2\nonumber\\
&&
+4(1+cs)^2\bigg[v\big(\partial_vH_n\big)-H_n+n(n+\al)\bigg]\bigg[\partial_sH_n\bigg]\bigg[(1+cs)^2\big(\partial_sH_n\big)+N_svc\bigg],
\nonumber\\
\eea
\end{small}
and
\begin{small}
\bea\label{eq:A2*2}
(vc)^2A_2^\ast(H_n)^2&=&\Big[(1+cs)\big(v\partial_{vs}^2-\partial_s\big)H_n+v^2c\big(\partial^2_{vv}H_n\big)\Big]^2\nonumber\\
&&+4\bigg[v\big(\partial_vH_n\big)-H_n+n(n+\al)\bigg]\bigg[\Big((1+cs)\partial_s+vc\partial_v\Big)H_n\bigg]
\nonumber\\&&\;\;
\times\bigg[\Big((1+cs)\partial_s+vc\partial_v\Big)H_n+N_svc\bigg].
\eea
\end{small}
We note that in the PDE (\ref{eq:PDE(s,v)}), the only higher order
derivatives are the mixed derivatives with respect to $s$ and $v$,
and second order derivatives with respect to $v$.




It is of interest to note that the Hankel determinant in
(\ref{eq:mgfMultipleIntegral}) or (\ref{eq:Mgf(Dn)}) has the
following asymptotic forms:
\bea
\lim_{T\to\infty} \bigg(T^{n\N}D_n[w_{{\rm AF}}(\cdot,T,{t})]\bigg)&=&D_n[w_{\rm dLag}(\cdot,{t},\al,\N)],\\
\lim_{{t}\to\infty}
\bigg(t^{-n\N}D_n[w_{{\rm AF}}(\cdot,T,{t})]\bigg)&=&D_n[w_{\rm dLag}(\cdot,T,\al,-\N)],
\eea
where the generating weight is the
``first-time" deformation of the Laguerre weight
\bea \label{def:dLag_weight}
w_{\rm dLag}(x,T,\al,\N)=
 x^\al e^{-x}(x+T)^{\N}.
\eea In previous work by the authors,~\cite{ChenMckay2010} the
Hankel determinant $D_n[w_{{\rm dLag}}(\cdot,T,\al,\N)]$ was shown
to arise in the analysis of the moment generating function of the
single-user MIMO channel capacity, and was shown to involve a
Painlev\'e V differential equation.

\section{Coulomb Fluid Method For Large $n$ Analysis}\label{Sec:CF}

In this section, we make use of the Coulomb Fluid method, which is
particularly convenient when the size of the matrix $n$ is large, to
describe the MIMO-AF problem. The idea is to treat the
eigenvalues as identically charged particles, with logarithmic
repulsion, and held together by an external potential. When $n$, in
this context the number of particles, is large, this assembly is
regarded as a continuous fluid, first put forward by Dyson,
\cite{Dyson1962I,Dyson1962II,Dyson1962III} where the eigenvalues
were supported on the unit circle. For a detailed description of
cases where the charged particles are supported on the line, see
Refs.~\onlinecite{ChenIsmail1997T},~\onlinecite{ChenManning1994,ChenLawrence1997}.

An extension of the methodology to the study of linear statistics,
namely, the sum of functions of the eigenvalues of the form
$$
\sum\limits_{j=1}^n f(x_j),
$$
can be found in Ref.~\onlinecite{ChenLawrence1997} and will be used extensively
in this paper. The main benefit of this approach, based on singular
integral equations, is that it leads to relatively simple expressions
for characterizing our moment generating function.

In Section \ref{SubSec:Coulomb_Prelim}, for completeness, we give a
brief overview of the key elements of the Coulomb Fluid method,
following
Refs.~\onlinecite{ChenIsmail1997T},~\onlinecite{ChenLawrence1997},~\onlinecite{ChenManning1994,ChenMckay2010}.
In Section \ref{SubSec:CF_AnF_Calc}, we compute explicit solutions
for the key quantities of interest in the Coulomb Fluid framework.
Subsequently, in Section \ref{SubSec:SER_Analysis}, we combine our
Coulomb Fluid results with either (\ref{eq:SERExact}) or
(\ref{eq:SERApprox}) to directly yield analytical approximations for
the SER of the MIMO-AF scheme under consideration. Quite remarkably,
these are shown to be extremely accurate, even for very small
dimensions. Similar accuracy is demonstrated in Section
\ref{eq:SubSec:Cumulants} for SNR cumulant approximations obtained
from the Coulomb Fluid results.

\subsection{Preliminaries of the Coulomb Fluid Method}\label{SubSec:Coulomb_Prelim}
We start by considering the ratio in the moment generating function
expression (\ref{def:CF:MGF_Zn/Z0}), noting that it is
of the form
\begin{equation}\label{defn:CF:Dn/D0}
\frac{Z_n(T^\prime,t^\prime)}{Z_n({t}^\prime,{t}^\prime)}
=e^{-\big[F_n(T^\prime,{t}^\prime)-F_n({t}^\prime,{t}^\prime)\big]},
\end{equation}
where \bea\label{defn:CF:ZnCFM} Z_n (T^\prime,t^\prime) :=
\frac{1}{n!}\int\limits_{[0,\infty)^n} \exp \left[ - \Phi (x_1,
\ldots, x_n ) - \sum_{j=1}^n f(x_j,T^\prime,{t}^\prime) \right]
\prod_{l=1}^n d x_l  \eea
\begin{align}
\label{eq:CF:PhiDefinition} \Phi(x_1, \ldots, x_n) := -2 \sum_{1
\leq j < k \leq n} \log | x_j - x_k | + n \sum_{j=1}^n \v_0 (x_j) \;
\end{align}
and $F_n:=-\log Z_n$ is known as the
\emph{free energy}.

This expression embraces the moment generating function expression
(\ref{def:CF:MGF_Zn/Z0}) with appropriate selection of the functions
$\textsf{v}_0(x)$ and $f(x,T^\prime,t^\prime)$.
A key motivation for writing our problem in this form is that it
admits a very intuitive interpretation in terms of statistical
physics, as originally observed by Dyson (see
Refs.~\onlinecite{Dyson1962I,Dyson1962II,Dyson1962III}). Specifically, if the
eigenvalues $x_1,\dots,x_n$ are interpreted as the positions of $n$
identically charged particles, then $\Phi(x_1, \ldots, x_n)$ is
recognized as the total energy of the repelling charged particles,
which are confined by the external potential $n\textsf{v}_0(x)$. The
function $f(x,T^\prime,{t}^\prime)$ acts as a perturbation to the
system, resulting in a modification to the external potential. The
quantity $F_n(T^\prime,t^\prime)$ may be interpreted as the free
energy of the system under an external perturbation
$f(x,T^\prime,t^\prime)$, with $F_n(t^\prime,t^\prime)$ the free
energy of the unperturbed system.

For sufficiently large $n$, the system of particles, following
Dyson, may be approximated as a continuous fluid where techniques of
macroscopic physics and electrostatics can be applied. For large $n$
we expect the external potential $n\textsf{v}_0(x)$ to be strong
enough to overcome the logarithmic repulsion between the particles
(or eigenvalues), and hence the particles or fluid will be confined
within a finite interval to be determined through a minimization
process. For this continuous fluid, we introduce a macroscopic
density $\sigma(x)dx$, referred to as the equilibrium density. Since
$\textsf{v}_0(x)$ is convex for $x\in \mathbb{R}$, this density is
supported on a single interval denoted by $(a,b)$, to be determined
later (see Ref.~\onlinecite{ChenIsmail1997T} for a detailed explanation). The
equilibrium density is obtained by minimizing the free-energy
functional:
\begin{equation}\label{eq:CF:FM(sig)}
F_n(T^\prime,{t}^\prime):=\int\limits^{b}_{a}\sigma(x)\Big(n^2\textsf{v}_0(x)+nf(x,T^\prime,t^\prime)\Big)dx-n^2\int\limits^{b}_{a}\int\limits^{b}_{a}\sigma(x)\log\vert
x-y\vert\sigma(y)dx dy,
\end{equation}
subject to \bea\label{eq:CF:CFConstraint}
\int\limits^{b}_{a}\sigma(x)dx&=&1. \eea With Frostman's Lemma
\cite[p. 65]{Tsuji1959}, the minimizing $\sigma(x)dx$ can be
characterized through the
integral equation 
\bea
n^2\textsf{v}_0(x)+n{f}(x,T^\prime,t^\prime)-2n^2\int\limits^{b}_{a}\log\vert
x-y\vert\sigma(y)dy&=&A, \eea where $x\in[a,b]$ and $A$ is the
Lagrange multiplier for the normalization condition
(\ref{eq:CF:CFConstraint}), which can be interpreted as the chemical
potential of the fluid.~\cite{ChenIsmail1997T,ChenManning1994}
Noting that the integral equation above has a logarithmic kernel,
taking a derivative with respect to $x\in(a,b)$ converts it into a
singular integral equation of the form \bea\label{eq:CF:CFSIntEqn}
{\textsf{v}_0}^\prime(x)+\frac{{f}^\prime(x,T^\prime,t^\prime)}{n}&=&
2\mathcal{P}\int\limits^{b}_{a}\frac{\sigma(y)}{x-y}dy, \eea where
$\mathcal{P}$ denotes Cauchy principal value.

Noting the structure (in $n$) of the left-hand side of
(\ref{eq:CF:CFSIntEqn}), it is clear that $\sigma(\cdot)$ must take
the general form:
\begin{equation}\label{eq:CF:OptimalSig}
\sigma(x)=\sigma_0(x)+\frac{\sigma_c(x,T^\prime,t^\prime)}{n},
\end{equation}
where $\sig_0(x)dx$ is the density of the original system in the
absence of any perturbation, while $\sig_c(x,T^\prime,t^\prime)$
represents the deformation of $\sigma_0(x)$ caused by
$f(x,T^\prime,t^\prime)$.  Furthermore, to satisfy
(\ref{eq:CF:CFConstraint}), we have \bea
\int\limits_a^b\sig_0(x)dx=1, \qquad\qquad
\int\limits_a^b\sig_c(x,T^\prime,t^\prime)dx=0. \eea Substituting
(\ref{eq:CF:OptimalSig}) into (\ref{eq:CF:CFSIntEqn}), and comparing
orders of $n$, we see that $\sig_0(x)$ solves
\bea\label{eq:CF:voprimeSInt} {\textsf{v}_0}^\prime(x)&=&
2\mathcal{P}\int\limits^{b}_{a}\frac{\sigma_0(y)}{x-y}dy, \eea and
$\sig_c(x,T^\prime,t^\prime)$ solves \bea\label{eq:CF:fprimeSInt}
{f}^\prime(x,T^\prime,t^\prime)&=&
2\mathcal{P}\int\limits^{b}_{a}\frac{\sigma_c(y,T^\prime,t^\prime)}{x-y}dy.
\eea
Following Ref.~\onlinecite{ChenIsmail1997T}, where the choice for the
solution for $\sigma_0$  has been extensively discussed based on the
theory described in Ref.~\onlinecite{Gakhov1966}; the solution to
(\ref{eq:CF:voprimeSInt}) subject to the boundary condition
$\sig_0(a)=\sig_0(b)=0$ reads
\begin{equation}\label{eq:CF:sig0}
\sigma_0(x)=\frac{\sqrt{(b-x)(x-a)}}{2\pi^2}
\int\limits^{b}_{a}\frac{\textsf{v}_0^\prime(x)-\textsf{v}_0^\prime(y)}{(x-y)\sqrt{(b-y)(y-a)}}dy,
\end{equation}
together with a supplementary condition,
\begin{equation}
\label{eq:CF:BSupSigma1}\int\limits^{b}_{a}\frac{{\textsf{v}}_0^\prime(x)}{\sqrt{(b-x)(x-a)}}dx=0.
\end{equation}
The solution to (\ref{eq:CF:fprimeSInt}) subject to
$\int\limits_a^b\sig_c(x,T^\prime,t^\prime)dx=0$ is
\begin{equation}\label{eq:CF:sigc}
\sigma_c(x,T^\prime,t^\prime)=\frac{1}{2\pi^2\sqrt{(b-x)(x-a)}}
\mathcal{P}\int\limits^b_a\frac{\sqrt{(b-y)(y-a)}}{y-x}f^\prime(y,T^\prime,t^\prime)dy.
\end{equation}
Finally, the normalization condition (\ref{eq:CF:CFConstraint})
becomes
\begin{equation}
\label{eq:CF:BSupSigma2}\int\limits^{b}_{a}\frac{x{\textsf{v}}_0^\prime(x)}{\sqrt{(b-x)(x-a)}}dx=2\pi.
\end{equation}
The end points of the support of the density $\sig_0(x)$, $a$, and
$b$, are determined by (\ref{eq:CF:BSupSigma1}) and
(\ref{eq:CF:BSupSigma2}), and will depend on parameters associated
with $\textsf{v}_0.$ For a description see Ref.~\onlinecite{ChenIsmail1997T}.
(We mention here a related problem, namely, the probability that
there is a gap in the spectrum of the random matrix, has been
studied using the Coulomb Fluid approach in
Refs.~\onlinecite{ChenManning1994_Gap_short,ChenManning1996_Gap_long}.)

With the above results, for sufficiently large $n$, we may
approximate the ratio (\ref{defn:CF:Dn/D0}) as (see
Ref.~\onlinecite{ChenLawrence1997} for more details)
\bea\label{eq:CF:ZnCFApprox}
\frac{Z_n(T^\prime,t^\prime)}{Z_n(t^\prime,t^\prime)}&\approx&
\exp\Big(-S_2^{{\rm AF}}(T^\prime,{t}^\prime)-nS_1^{{\rm
AF}}(T^\prime,{t}^\prime)\Big), \eea where \bea
\label{def:CF:S1AF}S_1^{{\rm
AF}}(T^\prime,{t}^\prime)&=&\int\limits_a^b\!\sigma_0(x)
f(x,T^\prime,t^\prime)dx,\\
\label{def:CF:S2AF}S_2^{{\rm
AF}}(T^\prime,{t}^\prime)&=&\frac{1}{2}\int\limits_a^b
\!\sigma_c(x,T^\prime,t^\prime)f(x,T^\prime,t^\prime)dx. \eea In the
sequel, we will find explicit solutions for these quantities.

\subsection{Coulomb Fluid Calculations for the SNR Moment Generating Function}\label{SubSec:CF_AnF_Calc}

Using (\ref{eq:CF:ZnCFApprox}), the moment generating function
$\mathcal{M}_\ga(T^\prime,{t}^\prime)$ in (\ref{def:CF:MGF_Zn/Z0})
takes the form
\begin{eqnarray}
\label{eq:CF:Mgf(Tptp)}
\mathcal{M}_\ga(T^\prime,{t}^\prime)&=&\left(\frac{T^\prime}{{t}^\prime}\right)^{n\N}
\frac{Z_n(T^\prime,{t}^\prime)}{Z_n({t}^\prime,{t}^\prime)},\nonumber\\
&\approx&\exp\left(-S_2^{{\rm
AF}}(T^\prime,{t}^\prime)-n\left[S_1^{{\rm AF}}
(T^\prime,{t}^\prime)-{\N}\log\left(\frac{T^\prime}{{t}^\prime}\right)\right]\right)
,
\end{eqnarray}
for which, comparing (\ref{def:CF:ZnCFM}) and
(\ref{eq:CF:PhiDefinition_Intro})
 with (\ref{defn:CF:ZnCFM}) and (\ref{eq:CF:PhiDefinition}), the functions $\v_0(x)$ and
 $f(x,T',t')$ are identified as
\bea
\label{eq:CF:v0(x)}\textsf{v}_0(x)&:=& x - \beta\log x,\\
\label{eq:CF:f(xTt0)}f(x,T^\prime,{t}^\prime)&:=&{\N}\log\left(\frac{T^\prime+x}{{t}^\prime+x}\right),
\eea where $\textsf{v}_0(x)$ given by (\ref{eq:CF:v0(x)}) is convex.

From here, the $l$-th cumulant can be extracted from the formula
\bea\label{eq:CF:CumulantFormulaTptp}
\kappa_l&=&c^l\left(\frac{{T^\prime}^2}{{t^\prime}}\frac{d}{dT^\prime}\right)^l\log
\mathcal{M}_\ga(T^\prime,{t}^\prime)\bigg\vert_{T^\prime={t}^\prime}.
\eea The objective of the subsequent analysis is to evaluate the
quantities $S_1^{{\rm AF}}$ and $S_2^{{\rm AF}}$. As we shall see,
we will need to solve numerous integrals which are not readily
available. Thus, to aid the reader, we have compiled these integrals
in Appendix \ref{App:Int}.

We start by considering the end points of the support  $a$ and $b$.
These are determined by equations (\ref{eq:CF:BSupSigma1}) and
(\ref{eq:CF:BSupSigma2}). With the integral identities
(\ref{AInt:1/sqrt})-(\ref{AInt:x/sqrt}) in Appendix \ref{App:Int},
we obtain, after a few easy steps, 
\begin{equation}\begin{aligned}
ab&=\beta^2,\\
a+b&=2(2+\bt), 
\end{aligned}\end{equation} which leads to
\begin{equation}\label{eq:CF:(a)(b)[bt]}\begin{aligned}
a&=&{\!\!\!}2+\beta-2\sqrt{1+\beta},\\
b&=&{\!\!\!}2+\beta+2\sqrt{1+\beta}.
\end{aligned}\end{equation}
The limiting density $\sigma_0(x)$ in (\ref{eq:CF:sig0}) can be
computed using the integral identity (\ref{AInt:1/x+tsqrt}) as
\begin{equation}\label{eq:CF:Sig0(mu)}
\sigma_0(x)=\frac{\sqrt{(b-x)(x-a)}}{2\pi x}, \qquad x\in (a,b),
\end{equation}
which is the Mar\^cenko-Pastur law.
\cite{MarcenkoPastur1967,KatzavCastillo2010,ChenMckay2010}
Meanwhile, with the aid of the integral identity
(\ref{AInt:Psqrt/(x+t)}) in Appendix \ref{App:Int},
$\sigma_c(x,T^\prime,{t}^\prime)$ given by (\ref{eq:CF:sigc}) reads
\bea
\sigma_c(x,T^\prime,{t}^\prime)&=&\frac{{\N}}{2\pi
\sqrt{(b-x)(x-a)}}\left(\frac{\sqrt{(T^\prime+a)(T^\prime+b)}}{x+T^\prime}-\frac{\sqrt{({t}^\prime+a)({t}^\prime+b)}}{x+{t}^\prime}\right).
\nonumber\\
\eea

Using $\sig_0(x)$ and invoking the integral identities
(\ref{AInt:log(x+t)/sqrt})-(\ref{AInt:xlog(x+t)/sqrt}) in Appendix \ref{App:Int}, gives
\begin{small}
\bea\label{eq:CF:S1AF} S_1^{{\rm
AF}}(T^\prime,{t}^\prime)&=&\frac{{\N}}{2}\bigg({t}^\prime-T^\prime+\sqrt{(T^\prime+a)(T^\prime+b)}-
\sqrt{(t^\prime+a)(t^\prime+b)}\bigg)
\nonumber\\&&
+{\N}(a+b)\log\left(\frac{\sqrt{T^\prime+a}+\sqrt{T^\prime+b}}{\sqrt{{t}^\prime+a}+\sqrt{{t}^\prime+b}}\right)
\nonumber\\
&&+\frac{{\N}\sqrt{ab}}{2}\log\left(\frac{\Big(\sqrt{ab}+\sqrt{(t^\prime+a)(t^\prime+b)}\Big)^2-{{t}^\prime}^2}{\Big(\sqrt{ab}+\sqrt{(T^\prime+a)(T^\prime+b)}\Big)^2-{T^\prime}^2}\right)\nonumber\\
&&+\frac{{\N}(a+b)}{4}\log\left(\frac{\Big(\sqrt{(t^\prime+a)(t^\prime+b)}+{t}^\prime\Big)^2-ab}{\Big(\sqrt{(T^\prime+a)(T^\prime+b)}+T^\prime\Big)^2-ab}\right)
+\frac{{\N}(a+b)}{4}\log\left(\frac{T^\prime}{{t}^\prime}\right).\nonumber\\
\eea
\end{small}
Moreover, using $\sig_c(x,T^\prime,t^\prime)$ and invoking the
integral identities (\ref{AInt:1/x+tsqrt}),
(\ref{AInt:log(x+t)/(x+t)sqrt}) and
(\ref{AInt:log(x+t2)/(x+t1)sqrt}) in Appendix \ref{App:Int} gives
\bea \label{eq:CF:S2AF}S_2^{{\rm
AF}}(T^\prime,{t}^\prime)&=&\frac{{\N}^2}{2}\log\left(\frac{4\sqrt{(T^\prime+a)(T^\prime+b)}\sqrt{(t^\prime+a)(t^\prime+b)}}{\big(\sqrt{(T^\prime+a)(T^\prime+b)}+\sqrt{(t^\prime+a)(t^\prime+b)}\big)^2-(T^\prime-{t}^\prime)^2}\right).
\nonumber\\
\eea

\subsection{SER Performance Measure Analysis Based on Coulomb Fluid}\label{SubSec:SER_Analysis}
Combining (\ref{eq:CF:Mgf(Tptp)}) with (\ref{eq:CF:(a)(b)[bt]}),
(\ref{eq:CF:S1AF}), and (\ref{eq:CF:S2AF}) yields a closed-form
asymptotic expression for the moment generating function of the
instantaneous SNR in (\ref{eq:SNR_AF_Def}).  This, in turn, combined
with either (\ref{eq:SERExact}) or (\ref{eq:SERApprox}), directly
yields analytical approximations for the SER of the MIMO-AF scheme
under consideration. The accuracy of these approximations is
confirmed in Fig.\ \ref{fig:SERQPSK}, where they are compared with
simulation results generated by numerically computing the exact SER
via Monte Carlo methods. Different antenna configurations are shown,
as represented by the form $({\N}, N_R, N_D)$. The curves labeled
``Coulomb (Exact SER)'' were generated by substituting our Coulomb
Fluid approximation into the exact expression of the SER
(\ref{eq:SERExact}), whilst the curves labeled ``Coulomb (Approx
SER)'' are generated by substituting our Coulomb Fluid approximation
into the approximate expression for the SER (\ref{eq:SERApprox}).

From these curves, it is remarkable that the Coulomb Fluid based
approximations, derived under the assumption of large $n$, very
accurately predict the SER even for small values of $n$. This is
evident, for example, by examining the set of curves corresponding
to the configuration (2, 3, 2), for which $n=2$.  In fact, even for
the extreme cases with $n=1$, the Coulomb Fluid curves still yield
quite high accuracy.

We note that the results in Fig.\ \ref{fig:SERQPSK} are
representative of those presented previously in Ref.~\onlinecite[Fig.\
5]{DharMckayMallik2010}.  The key difference therein was that the
analytic curves were based on substituting into either
(\ref{eq:SERExact}) or (\ref{eq:SERApprox}) an equivalent
determinant form of the moment generating function to that given in
(\ref{eq:Mgf(detmu)}) and (\ref{eq:Moments(Kummer)}). That
representation, involving a determinant of a matrix with elements
comprising sums of Kummer functions, is far more complicated than
the Coulomb Fluid representation, which is a simple algebraic
equation involving only elementary functions. The fact that the
Coulomb Fluid representation yields accurate approximations for the
SER for small as well as large numbers of antennas makes it a useful
analytical tool for studying the performance of arbitrary MIMO-AF
systems.

\begin{figure}[!ht]
\includegraphics[width=.9\textwidth]{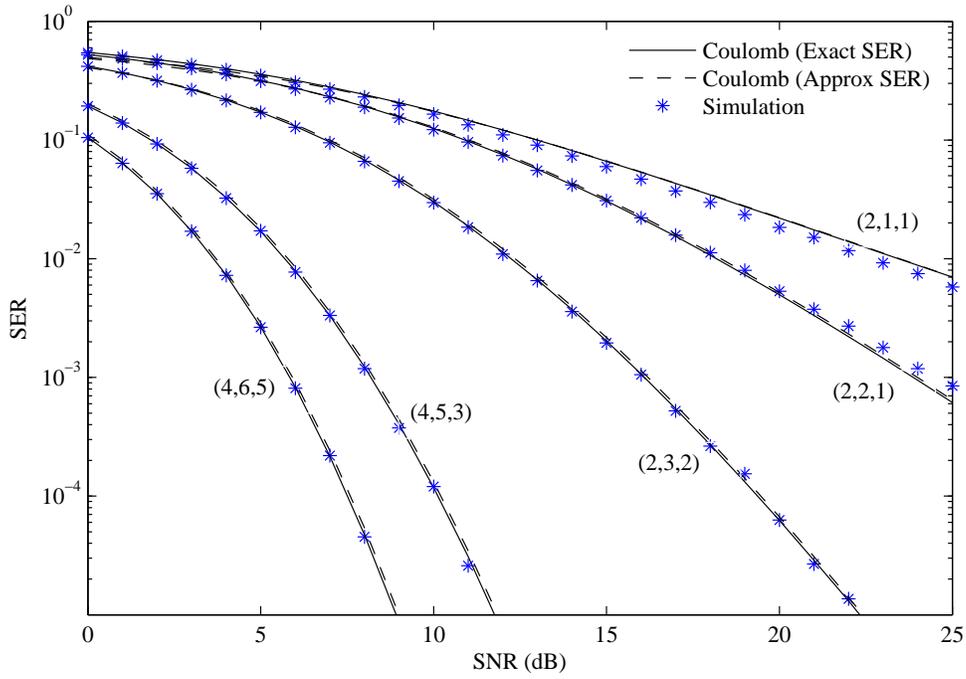}
\caption{Illustration of the SER versus average received SNR (at
relay) $\bar{\gamma}$; comparison of analysis and simulations. Each
set of curves represents a specific antenna configuration of the
form $({\N}, N_R, N_D)$. For configurations with $\N = 2$, the
full-rate Alamouti OSTBC is used (i.e., $R = 1$), whilst for $\N
=4$, a rate $1/2$ orthogonal design is employed (i.e., $R = 1/2$). In
all cases, QPSK digital modulation is assumed, such that $M = 4$.
The relay power $\tilde{b}$ is assumed to scale with $\bar{\gamma}$
by setting $\tilde{b} = \bar{\gamma}$.} \label{fig:SERQPSK}
\end{figure}


\subsection{Coulomb Fluid Analysis of Large $n$ Cumulants of SNR}
\label{eq:SubSec:Cumulants}

In this subsection we use the Coulomb Fluid results to study the
asymptotic cumulants of the SNR. We suppose $\bt$ is fixed, and
distinguish between two cases, $\bt=0$ and $\bt>0$.
\subsubsection{Case 1: $\bt=0$} Here $N_R=N_D$. Then from
(\ref{eq:CF:(a)(b)[bt]}), $a=0$ and $b=4$, leading to
\begin{small}
\bea\label{eq:CF:S1AFCase1}
S_1^{{\rm AF}}(T^\prime,{t}^\prime)-{\N}\log\left(\frac{T^\prime}{{t}^\prime}\right)&=&\frac{{\N}}{2}
\Big({t}^\prime-T^\prime+\sqrt{T^\prime(T^\prime+4)}-\sqrt{{t}^\prime({t}^\prime+4)}\Big)
\nonumber\\&&
+2{\N}\log\left(\frac{\sqrt{T^\prime {t}^\prime}+\sqrt{{t}^\prime(T^\prime+4)}}{\sqrt{T^\prime {t}^\prime}+\sqrt{T^\prime({t}^\prime+4)}}\right),
\eea
\end{small}
and
\bea\label{eq:CF:S2AFCase1}
S_2^{{\rm AF}}(T^\prime,{t}^\prime)&=&
\frac{{\N}^2}{2}\log\left(\frac{2\sqrt{T^\prime {t}^\prime}\sqrt{(T^\prime+4)({t}^\prime+4)}}{\sqrt{T^\prime {t}^\prime}\sqrt{(T^\prime+4)({t}^\prime+4)}+2T^\prime+2{t}^\prime+T^\prime {t}^\prime}\right).\:
\eea

From the closed-form approximation of the moment generating function (\ref{eq:CF:Mgf(Tptp)})
the cumulants can be extracted using (\ref{eq:CF:CumulantFormulaTptp}).

Alternatively, recall that the variable $T^\prime$ is related to $s$ via
$$
T^\prime=\frac{{t}^\prime}{1+cs},
$$
where $c=\bg/(R{\N})$. Therefore we expand $\log {\cal
M}_{\ga}(t'/(1+cs),t')$ obtained from the Coulomb Fluid formalism
(\ref{eq:CF:S1AFCase1}) and (\ref{eq:CF:S2AFCase1})
in a Taylor series about $s=0$. From
(\ref{eq:CF:Mgf(Tptp)}),
\bea\label{eq:Mgf(s)}
\log\mathcal{M}_\ga(t'/(1+cs),t')&=&\sum\limits_{j=1}^\infty(-1)^j\frac{s^j}{j!}\kappa_j(t^\prime),
\eea
where the first few cumulants are
\bea
\label{eq:CF:kappa1_bt_0}\frac{\kappa_1(t^\prime)}{c{\N}}&=&\frac{n}{2}\left(t^\prime+2-\sqrt{{t^\prime}^2+4{t^\prime}}\right),
\\
\label{eq:CF:kappa2_bt_0}\frac{\kappa_2(t^\prime)}{c^2{\N}}&=&\frac{{\N}}{(t^\prime+4)^2}+n(t^\prime+1)
-n\sqrt{{t^\prime}^2+4{t^\prime}}\left(1-\frac{1}{{t^\prime}+4}\right),\\
\label{eq:CF:kappa3_bt_0}\frac{\kappa_3({t^\prime})}{c^3{\N}}&=&\frac{12{\N}}{(t^\prime+4)^3}
+n(3{t^\prime+2})
-n\sqrt{{t^\prime}^2+4{t^\prime}}\left(3-\frac{4}{{t^\prime}+4}-\frac{2}{({t^\prime}+4)^2}\right),\\
\label{eq:CF:kappa4_bt_0}\frac{\kappa_4({t^\prime})}{c^4{\N}}&=&\frac{174{\N}}{(t^\prime+4)^4}
+6n(2{t^\prime+1})
-6n\sqrt{{t^\prime}^2+4{t^\prime}}\left(2-\frac{3}{{t^\prime}+4}-\frac{2}{({t^\prime}+4)^2}-\frac{2}{({t^\prime}+4)^3}\right),\nonumber\\ \\
\label{eq:CF:kappa5_bt_0}\frac{\kappa_5({t^\prime})}{c^5{\N}}&=&\frac{3120{\N}}{(t^\prime+4)^5}
+12n(5{t^\prime+2})\nonumber
\\&&-12n\sqrt{{t^\prime}^2+4{t^\prime}}\left(5-\frac{8}{{t^\prime}+4}-\frac{6}{({t^\prime}+4)^2}-\frac{8}{({t^\prime}+4)^3}-\frac{10}{({t^\prime}+4)^4}\right),
\eea
and
\bea
\label{eq:CF:kappa6_bt_0}\frac{\kappa_6({t^\prime})}{c^6{\N}}&=&\frac{67440{\N}}{(t^\prime+4)^6}
+120n(3{t^\prime+1})\nonumber
\\&&-120n\sqrt{{t^\prime}^2+4{t^\prime}}\left(3-\frac{5}{{t^\prime}+4}-\frac{4}{({t^\prime}+4)^2}-\frac{6}{({t^\prime}+4)^3}-\frac{10}{({t^\prime}+4)^4}
-\frac{14}{({t^\prime}+4)^5}\right).\nonumber\\
\eea

\subsubsection{Case 2: $\bt>0$} In this case, the end points of the
support, $a$ and $b,$ are given by (\ref{eq:CF:(a)(b)[bt]}).
%
Hence, going through the same process, the first five cumulants are
\begin{small}
\bea \label{eq:CF:kappa1_Gen_bt} \frac{\kappa_1^{\rm
CF}(t^\prime)}{c{\N}}&=&
\frac{n}{2}\left(t^\prime+2+\bt-\sqrt{(t^\prime+\bt)^2+4{t^\prime}}\right),\\
\label{eq:CF:kappa2_Gen_bt} \frac{\kappa_2^{\rm
CF}(t^\prime)}{c^{2}{\N}}&=&
\frac{{\N}(1+\bt){t^\prime}^2}{\big((t^\prime+\bt)^2+4t^\prime\big)^2}
+n\left({t^\prime}+1+\frac{\bt}{2}\right)-n\sqrt{(t^\prime+\bt)^2+4{t^\prime}}
\left(1-\frac{1}{2}\frac{\bt^2+(\bt+2)t^\prime}{\Big((t^\prime+\bt)^2+4{t^\prime}\Big)}\right),\nonumber\\
\\
\label{eq:CF:kappa3_Gen_bt} \frac{\kappa_3^{\rm
CF}(t^\prime)}{c^{3}{\N}}&=&
\frac{6{\N}(1+\bt)\big(\bt^2+(\bt+2)t^\prime\big){t^\prime}^2}{\big((t^\prime+\bt)^2+4t^\prime\big)^3}
+n\left(3{t^\prime}+\bt+2\right)\nonumber\\
&&-n\sqrt{(t^\prime+\bt)^2+4{t^\prime}}
\Bigg(3-\frac{2\big(1+\bt+\bt^2+(\bt+2){t^\prime}\big)}{(t^\prime+\bt)^2+4{t^\prime}}
+\frac{2\big(\bt^3+\bt^2+2(\bt+1)(\bt+2){t^\prime}\big)}{\big((t^\prime+\bt)^2+4{t^\prime}\big)^2}\Bigg), \nonumber\\
\\
\label{eq:CF:kappa4_Gen_bt} \frac{\kappa_4^{\rm
CF}(t^\prime)}{c^{4}{\N}}&=& \frac{36\N{{t^\prime}}^{2} \left(
1+\beta \right)  \Big(  \left( {\beta}^{2}+{\frac
{29}{6}}\,\beta+{\frac {29}{6}} \right)
{{t^\prime}}^{2}+2\,{\beta}^{2} \left( \beta+2\right)
{t^\prime}+{\beta}^{4} \Big)}
{\big((t^\prime+\bt)^2+4{t^\prime}\big)^4} +3\,n \left(
4t^\prime+\bt+2\right) 
\nonumber\\*&&
-3n\sqrt{(t^\prime+\bt)^2+4{t^\prime}} \Bigg( 1 +{\frac {3{t^\prime}
\left( {t^\prime}+2+\beta \right) }{({t^\prime}+\bt)^2+4{t^\prime}}}
+{\frac {2{{t^\prime}}^{2} \left( {t^\prime}-3 -3\,\beta \right) }{
\big((t^\prime+\bt)^2+4{t^\prime}\big)^2}} 
\nonumber\\*&&
\qquad-{\frac {2{{t^\prime}}^
{3} \left({t^\prime}^2 +(3+\bt){t^\prime}-2-3\,\beta\right)
}{\big((t^\prime+\bt)^2+4{t^\prime}\big)^3}}
\Bigg),
\eea
\end{small}
and
\begin{small}
\bea \label{eq:CF:kappa5_Gen_bt} \frac{\kappa_5^{\rm
CF}(t^\prime)}{c^{5}{\N}}&=& {\frac { 240\N{t^\prime}^2(1+\bt)\Big(
\left( { \beta}^{2}+\frac{13}{2}\beta+\frac{13}{2} \right)
{t^\prime}^{2}+2\,{\beta}^{2} \left( 2+\beta
 \right) t^\prime+{\beta}^{4} \Big)  \left(  \left( 2+\beta \right) t^\prime+{
\beta}^{2} \right)}
{\big((t^\prime+\bt)^2+4{t^\prime}\big)^5}}
\nonumber\\*&&
+12n\left(5t^\prime+\bt+ 2\right)
\nonumber\\*&&
-12n\sqrt{(t^\prime+\bt)^2+4{t^\prime}}
\Bigg( 1+{
\frac {4{t^\prime} \left(11t^\prime-4\bt- 8\right) }{({t^\prime}+\bt)^2+4{t^\prime}}}
-{\frac {4{t^\prime}^{2} \left(5{t^\prime}^2+2(5\bt+49){t^\prime}-37(\bt+1)\right) }{ \big((t^\prime+\bt)^2+4{t^\prime}\big)^2}}
\nonumber\\*&&
\qquad-{\frac {2{{t^\prime}}^{3} \left(10{t^\prime}^2+12(\bt+4){t^\prime}^2-3(125\bt+334){t^\prime}+348\bt+232\right) }{\big((t^\prime+\bt)^2+4{t^\prime}\big)^3}}
\nonumber\\*&&
\qquad+{\frac {10{t^\prime}^{4}
 \left(28{t^\prime}^3+3(12\bt+23){t^\prime}^2-2(135\bt+242)t^\prime+158\bt+79\right) }{ \big((t^\prime+\bt)^2+4{t^\prime}\big)^4}}
\nonumber\\*&&
\qquad-{\frac {280{u}^{5} \left(2{t^\prime}^3+(3\bt+4){t^\prime}({t^\prime}-4)+5\bt+2\right) }{ \big((t^\prime+\bt)^2+4{t^\prime}\big)^5}} \Bigg) .
\eea
\end{small}
The accuracy of these approximations is demonstrated in
Fig.\ref{Fig2}, where they are compared with simulation results
generated by numerically computing the exact cumulants via Monte
Carlo methods. As before, different antenna configurations are
shown, as represented by the form $({\N}, N_R, N_D)$. From these
curves, the accuracy of our Coulomb Fluid based approximations is
quite remarkable, even for small values of $n$.

We note that exact expressions for $\kappa_1$ and $\kappa_2$ (i.e.,
the mean and variance) were derived previously in Ref.~\onlinecite[Corollaries
1 and 2]{DharMckayMallik2010}, and these were expressed in terms of
summations of determinants (for the mean) as well as a rank-$3$
tensor (for the variance), each involving Kummer functions.  Such
results are obviously far more complicated than the Coulomb fluid
based cumulant expressions in
(\ref{eq:CF:kappa1_Gen_bt})--(\ref{eq:CF:kappa3_Gen_bt}), which
involve just very simple algebraic functions.

\begin{figure}[H]
\centering
{\includegraphics[width=0.45\columnwidth]{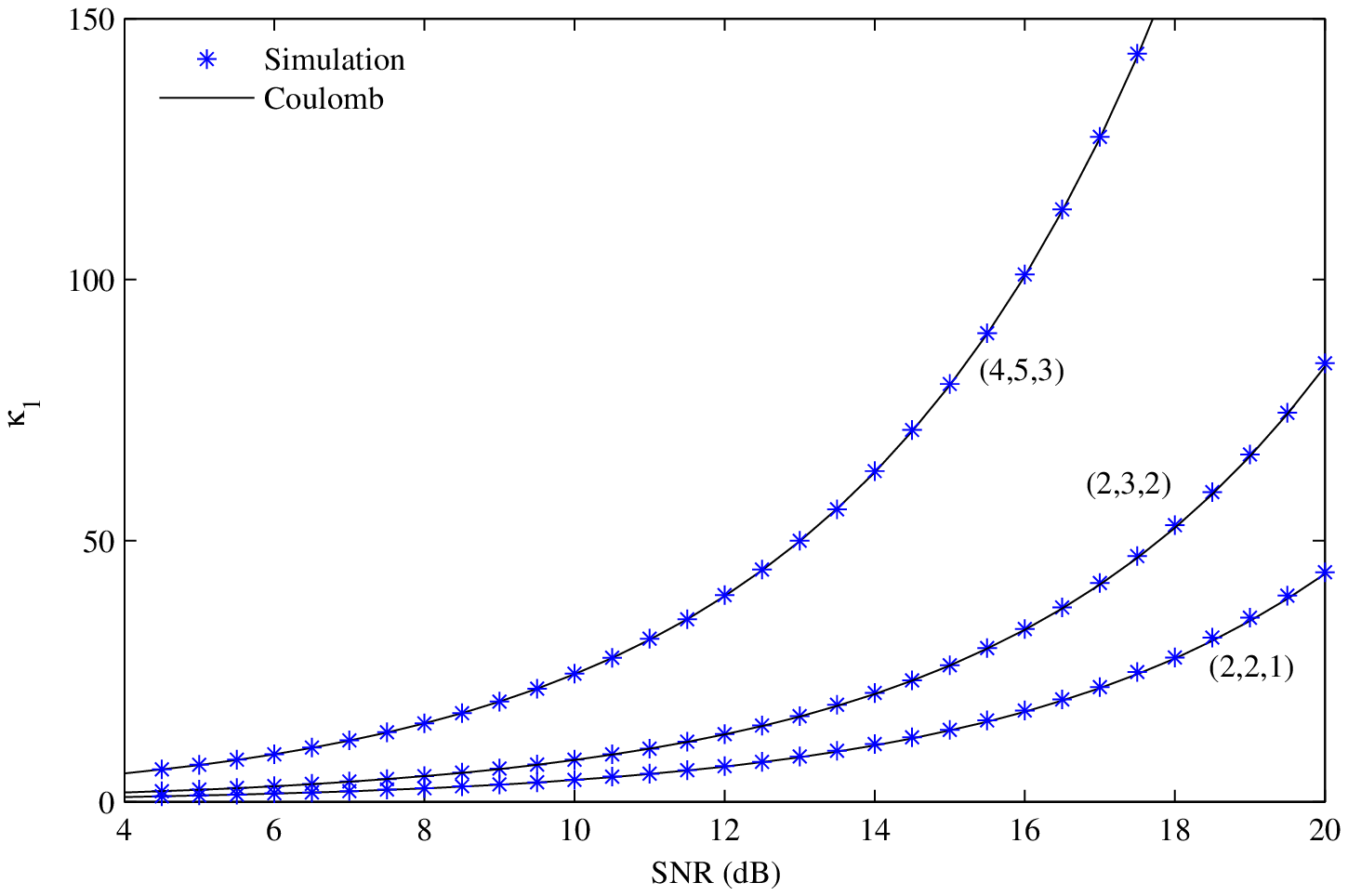}}
{\includegraphics[width=0.45\columnwidth]{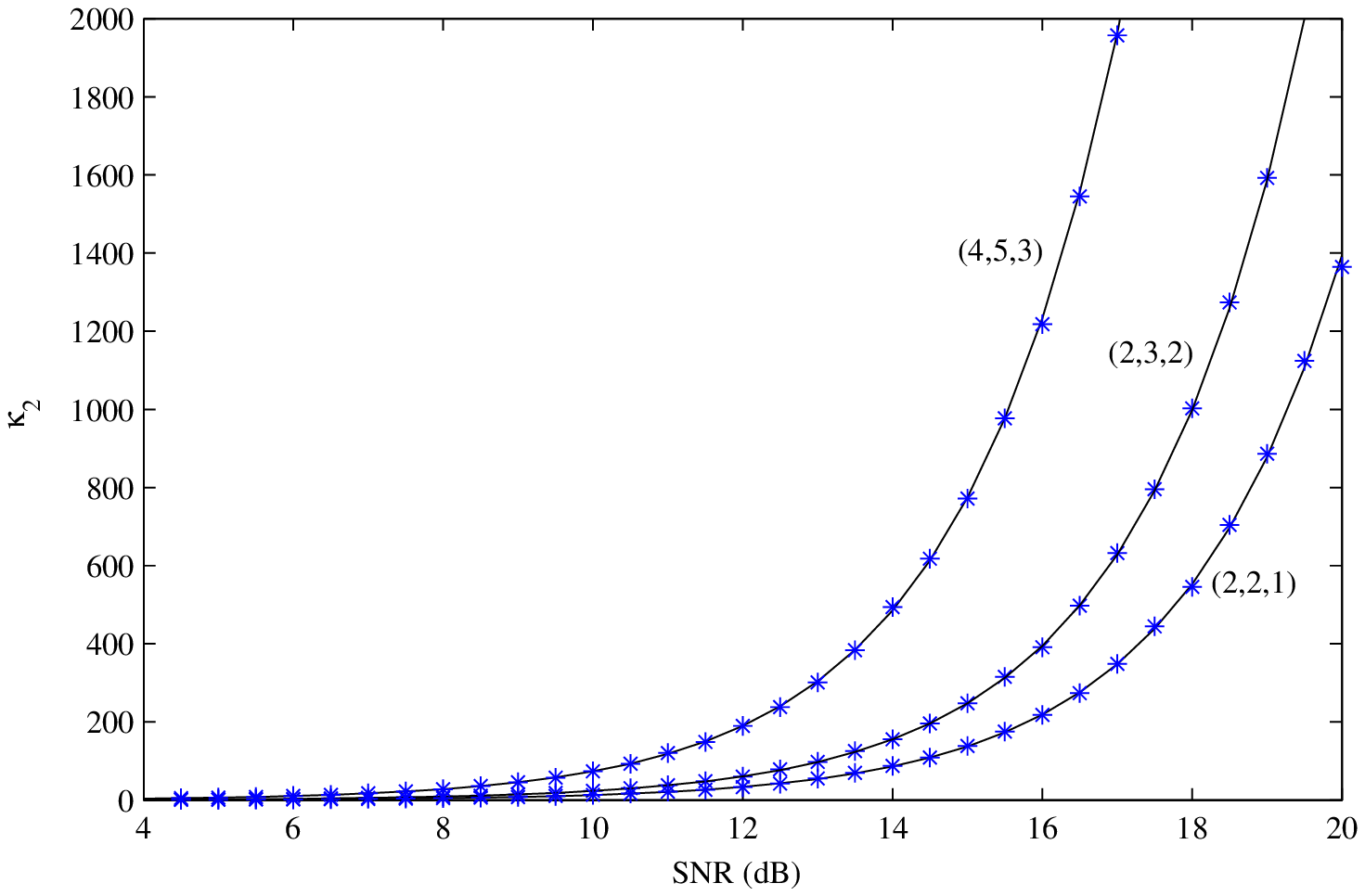}}
{\includegraphics[width=0.45\columnwidth]{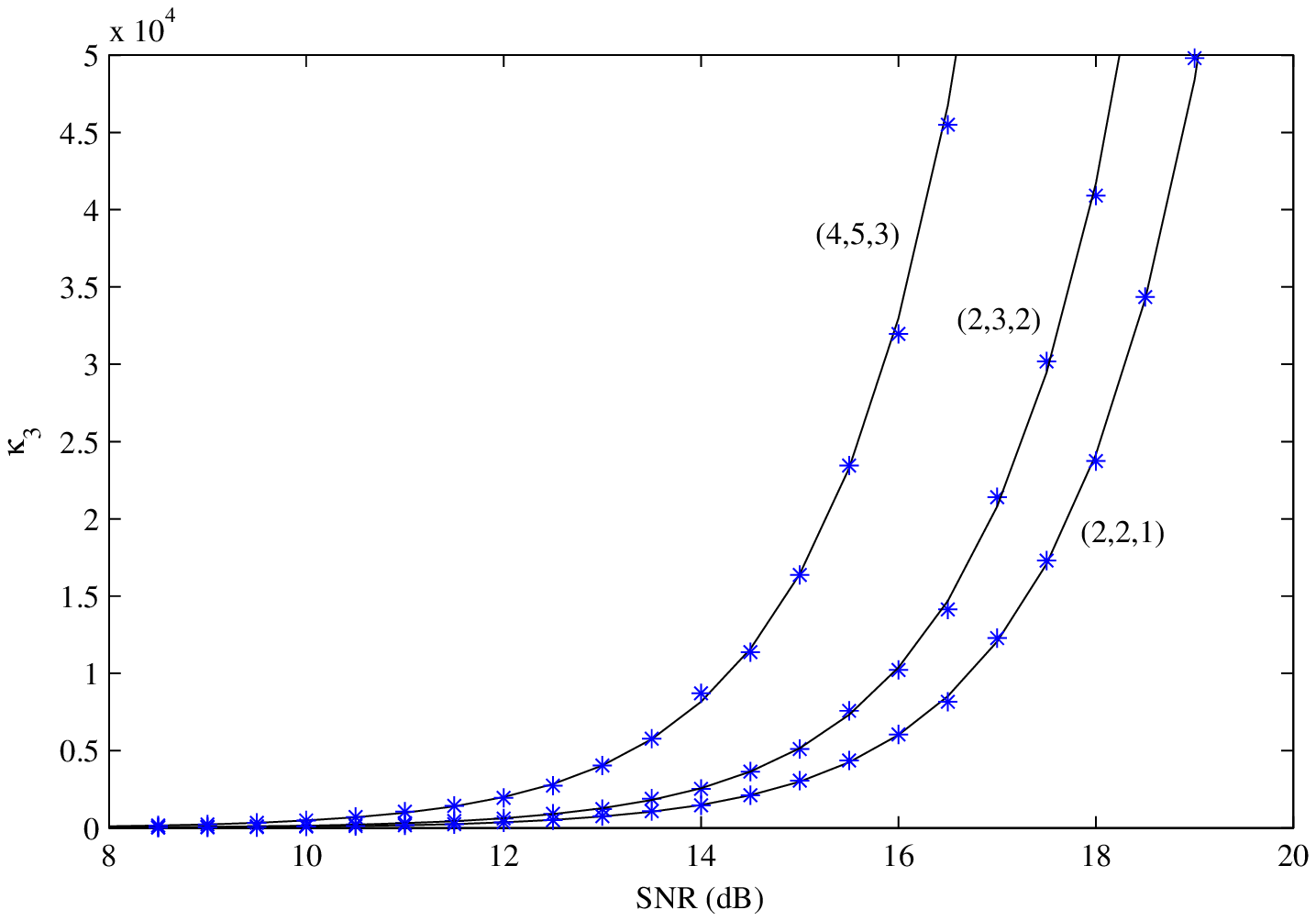}}
 \caption{Illustration of the mean $\kappa_1$, variance $\kappa_2$, and third cumulant $\kappa_3$
of the received SNR at the destination $\gamma$, as a function of
the average received SNR at the relay $\bar{\gamma}$. In each case,
the simulated cumulants are compared with those obtained based on
the Coulomb fluid approximation. As in Fig.\ \ref{fig:SERQPSK}, each
set of curves represents a specific antenna configuration of the
form $({\N}, N_R, N_D)$. For configurations with $\N = 2$, the
full-rate Alamouti OSTBC is used (i.e., $R = 1$), whilst for $\N
=4$, a rate $1/2$ orthogonal design is employed (i.e., $R = 1/2$). The
relay power $\tilde{b}$ is assumed to scale with $\bar{\gamma}$ by
setting $\tilde{b} = \bar{\gamma}$.} \label{Fig2}
\end{figure}

%
\section{Power Series Expansion For $H_n(s,{v})$}\label{Sec:Cumulant_Painleve_Analysis}

In this section we seek power series solutions for
(\ref{eq:PDE(s,v)}), to determine the cumulants. We begin by
removing the square roots of the PDE
and arrive at the following lemma.
%
%
%
%
%
%
%
\begin{lemma}
Let
\bea
\label{eq:k:defn}K&:=&2\Big(v(\partial_vH_n)-H_n+n(n+\al)\Big),
\eea
and
\begin{small}
\bea
L&:=&K\bigg[H_n-2\frac{(1+cs)^2}{vc}\Big(\partial_sH_n\Big)\Big(\partial_v H_n\Big)-\frac{2(1+cs)^3}{(vc)^2}\Big(\partial_sH_n\Big)^2\nonumber\\
&&-(2n+{\N}+\al+v)\partial_vH_n
-2{\N}\frac{(1+cs)}{vc}\partial_sH_n+(2n-{\N}+\al)\frac{s(1+cs)}{v}\partial_sH_n\bigg]\nonumber\\
&&-\frac{(1+cs)^2}{(vc)^2}\Big[(v\partial^2_{vs}-\partial_s)H_n\Big]\Big[(1+cs)\big(v\partial_{vs}^2-\partial_s\big)H_n
+v^2c\big(\partial^2_{vv}H_n\big)\Big],
\eea
\end{small} 
then the PDE (\ref{eq:PDE(s,v)}) may
be rewritten in an equivalent square-root-free form, as
\bea
\label{eq:PDE(Hn)LongIntro}
4K^2A_1^2\big(L+A_2^2\big)^2&=&\Big(L^2+K^2(A_1^2-A_2^2)-A_1^2A_2^2\Big)^2,
\eea
where $A_1$ and $A_2$ are given by (\ref{eq:A1*2}) and (\ref{eq:A2*2}) respectively.
\end{lemma}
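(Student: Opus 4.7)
The strategy is to clear the single rational term in (\ref{eq:PDE(s,v)}) and then iteratively square out the two radicals $A_1^\ast(H_n)$ and $A_2^\ast(H_n)$; both squarings produce polynomial identities in $H_n$ and its derivatives, and the second one is (\ref{eq:PDE(Hn)LongIntro}).

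First I would collect all non-radical, non-fractional terms on the right of (\ref{eq:PDE(s,v)}) into a single expression $B$ depending on $H_n$, $\partial_sH_n$, $\partial_vH_n$ and $s,v$, so that the PDE reads
\begin{equation}
H_n - B \;=\; \pm A_1^\ast \mp A_2^\ast \;-\; \frac{(vc)^2\,A_1^\ast A_2^\ast - (1+cs)^2\,\mathcal{D}}{(vc)^2\,K},
\end{equation}
where $\mathcal{D} := [(v\partial^2_{vs}-\partial_s)H_n]\,[(1+cs)(v\partial^2_{vs}-\partial_s)H_n + v^2 c\,\partial^2_{vv}H_n]$ and the denominator is recognized as $(vc)^2 K$ because $K = 2[v\partial_vH_n - H_n + n(n+\al)]$ absorbs the factor of $2$ in the original denominator. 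Multiplying through by $K$ and transposing $(1+cs)^2\mathcal{D}/(vc)^2$ to the left produces exactly the quantity $L$ in the statement, and gives the compact relation
\begin{equation}\label{eq:pp-L}
L \;=\; \pm K A_1^\ast \mp K A_2^\ast - A_1^\ast A_2^\ast.
\end{equation}

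The second step is the double squaring. Fixing the sign choice $(+A_1^\ast,-A_2^\ast)$---the other consistent choice yields the same squared identity---I would rewrite (\ref{eq:pp-L}) as $A_2^\ast(K+A_1^\ast) = KA_1^\ast - L$ and square once. This replaces $A_2^\ast$ by $(A_2^\ast)^2$, which is a known polynomial in $H_n$ and its derivatives via (\ref{eq:A2*2}). Expanding and grouping the remaining $A_1^\ast$-linear terms on one side yields the intermediate half-squared form
\begin{equation}\label{eq:pp-half}
2K A_1^\ast\bigl(L + (A_2^\ast)^2\bigr) \;=\; L^2 + K^2\bigl((A_1^\ast)^2 - (A_2^\ast)^2\bigr) - (A_1^\ast)^2 (A_2^\ast)^2.
\end{equation}
A second squaring of (\ref{eq:pp-half}) eliminates the last radical $A_1^\ast$ in favor of $(A_1^\ast)^2$, which by (\ref{eq:A1*2}) is again polynomial in $H_n$ and its derivatives, and what is left is precisely (\ref{eq:PDE(Hn)LongIntro}).

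The main obstacle is not conceptual but algebraic bookkeeping: one has to track the factor of $2$ hidden in the definition of $K$, the $(vc)^2$ denominator, and the precise grouping of the many $s$- and $v$-derivative terms so that the rearrangement of (\ref{eq:PDE(s,v)}) matches the definition of $L$ verbatim. A secondary subtlety is that squaring is an implication rather than an equivalence, so (\ref{eq:PDE(Hn)LongIntro}) is a necessary condition satisfied by $H_n$; this is enough for the purpose stated at the start of Section \ref{Sec:Cumulant_Painleve_Analysis}, namely to serve as the starting point for the power-series expansion of $H_n(s,v)$ and the subsequent extraction of cumulants.
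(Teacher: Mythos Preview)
Your proposal is correct and is precisely the approach the paper has in mind: the paper itself offers no detailed proof, only the sentence ``We begin by removing the square roots of the PDE and arrive at the following lemma,'' and your two-step rearrange-and-square procedure is the natural realization of that. Your identification of $L$ with $K(H_n-B)-\tfrac{(1+cs)^2}{(vc)^2}\mathcal{D}$ via the factor of $2$ hidden in $K$ is exactly right, and your intermediate relation $2KA_1^\ast(L+(A_2^\ast)^2)=L^2+K^2((A_1^\ast)^2-(A_2^\ast)^2)-(A_1^\ast)^2(A_2^\ast)^2$ squares to (\ref{eq:PDE(Hn)LongIntro}) verbatim; the remark that the opposite sign branch $(-A_1^\ast,+A_2^\ast)$ leads to the same squared identity (the first squaring produces the same equation with $-2KA_1^\ast$ on the left, and the second squaring erases the sign) is also correct.
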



Now, recall from (\ref{eq:LAFs1logMgf}) that the function $H_n(s,v)$
is related to the moment generating function
(\ref{eq:mgfMultipleIntegral}) through
\bea
\label{eq:LAFs1logMgfv2}H_n(s,v)&=&v\partial_{v}\log
\mathcal{M}_\ga(s,v),
\eea
and note that the logarithm of the moment
generating function has the expansion,
\bea \log
\mathcal{M}_\ga(s,v)&=&\sum\limits_{j=1}^\infty
(-1)^j\frac{\kappa_j^{}(v)}{j!}s^j,
 \eea
where $\kappa_j(v)$ is the $j$th cumulant.
Note that for the sake of brevity we write $\mathcal{M}_\ga(s,v)$ in place of $\mathcal{M}_\ga\left(\frac{v}{1+cs},v\right)$.
Consequently, from
(\ref{eq:LAFs1logMgfv2}), $H_n(s,v)$ has the following
expansion in $s,$
\bea\label{eq:PowHn(st0)}
H_n(s,v)=v\sum\limits_{j=1}^\infty
(-1)^jc^j{\N}\frac{a_j^\prime(v)}{j!}s^j,
\eea
where ${}^\prime$ denotes differentiation with respect to $v$, and $a_j(v)$ is related
to the $j$th cumulant $\kappa_j(v)$ by
\footnote{We introduce $a_j$
in order to make a clearer comparison to the Coulomb Fluid model and
\cite{DharMckayMallik2010} without having to exactly specify a value
for the constant $c$, since it falls out of the subsequent
expansion.} 
\bea a_j(v)c^j{\N}&=&\kappa_j(v). \eea

\subsection{Analysis of $\kappa_1$}

In this subsection, the computation of the mean $\kappa_1$ is presented. To this end,
upon substituting (\ref{eq:PowHn(st0)}) into the PDE
(\ref{eq:PDE(Hn)LongIntro}), keeping the lowest power of $s$ in the
resulting expansion gives a highly nonlinear ODE expressed in the factored form
\bea\label{eq:Ps1_1^1Psi_2^1=0} \Psi_1^{(1)}
\Psi_2^{(1)} &=& 0,
\eea where
\begin{small}
\bea\label{eq:Psi_1^1}
\Psi_1^{(1)}&:=&
\left(a_1''(v)\right)^2
\Bigg[\frac{{\N}^2v^2}{4}\left(a_1''(v)\right)^2+n(n+\al)\Bigg({\N}^2\left(a_1'(v)\right)^2-{\N}^2
\left(a_1'(v)\right)-n(n+\al)\Bigg)\Bigg],\;
\eea
\end{small}
and
\begin{small}
\bea\label{eq:ODEa1(v)}
\Psi_2^{(1)}&:=&
\bigg[v^2\left(a_1'''(v)\right)+v\left(a_1''(v)\right)+4n(n+\al)\bigg(a_1'(v)-\frac{1}{2}\bigg)\bigg]^2
\nonumber\\
&&-(2n+v+\al)^2\bigg[v^2\left(a_1''(v)\right)^2+4n(n+\al)\left(a_1'(v)\right)\bigg(a_1'(v)-1\bigg)\bigg].
\eea
\end{small}
Hence,  (\ref{eq:Ps1_1^1Psi_2^1=0}) is equivalent to either $\Psi_1^{(1)}=0$ or $\Psi_2^{(1)}=0$.

The ODE $\Psi_1^{(1)}=0$ has two types of solutions, the first of the form $a_1(v)= c_1v+c_2$ where $c_1$ and $c_2$ are constants to be determined.
The second is more complicated but can be succinctly written as
\begin{small}
\bea
a_1(v)&=&f(c_1,{\N},n,\al)\;v\sin\left(2\sqrt{n}\sqrt{n+\al}\log v\right)
\nonumber\\&&
+g(c_1,{\N},n,\al)\;v\cos\left(2\sqrt{n}\sqrt{n+\al}\log v\right)
+c_2+\frac{v}{2},
\eea
\end{small}
where $c_1$ and $c_2$ are integration constants, and $f$ and $g$ are algebraic functions (we omit these for brevity).

Whilst these are valid mathematical solutions, it is clear that by
taking $n$ large, they differ drastically from that predicted by
the Coulomb Fluid method and, for small $n$, they differ with the
solutions obtained in Ref.~\onlinecite{DharMckayMallik2010}. This suggests that
these are not the solutions of interest to our problem; thus, we set
them aside and henceforth focus on $\Psi_2^{(1)}=0$.

It becomes evident at this point that we require initial
conditions for the cumulants $\kappa_j(v)/(c^j{\N})$, since the PDE
(\ref{eq:PDE(Hn)LongIntro}) gives rise to a system of ODEs satisfied
by $\kappa_j(v)/(c^j{\N})$. However, (\ref{eq:PDE(Hn)LongIntro}) is
not supplemented by any initial conditions from which one deduces
the initial conditions at $t^\prime=0$ of the system of ODEs.

As we shall see, results from the Coulomb Fluid analysis
(\ref{eq:CF:kappa1_Gen_bt})-(\ref{eq:CF:kappa3_Gen_bt}) are crucial,
as these will specify initial conditions at $t^\prime=0$.
Consequently, the Coulomb Fluid results are in fact leading order
approximations to the exact results. Without the Coulomb Fluid
analysis, each cumulant $\kappa_j(v)/(c^j{\N})$ would carry unknown
constants of integration.

We now examine the equation $\Psi_2^{(1)} = 0$ in the large $n$
region, which leads to an asymptotic characterization of $a_1(v)$
and thus $\kappa_1$. To proceed, we scale the variable $v$ by
$v=n{t^\prime}$.  Also, note that $\al=n(m/n-1)\equiv n\beta$. We
find that $a_1(t^\prime)$ satisfies the following ODE:
\bea\label{eq:LargenODEa1(tp)}
0&=&\bigg[\frac{{t^\prime}^2}{n}\left(a_1'''({t^\prime})\right)+\frac{t^\prime}{n}
\left(a_1''({t^\prime})\right)+2n(1+\bt)\left(2a_1'({t^\prime})-n\right)\bigg]^2
\nonumber\\
&&-({t^\prime}+2+\bt)^2\bigg[{t^\prime}^2\left(a_1''({t^\prime})\right)^2+4n^2(1+\bt)
\left(a_1'({t^\prime})\right)\left(a_1'({t^\prime})-n\right)\bigg].\nonumber\\
\eea
For large $n$, keeping the leading order term, we arrive at
\bea\label{eq:ODEa1(t)Approx}
\left(\frac{da_1}{d{t^\prime}}-\frac{n}{2}\right)^2&=&\frac{n^2({t^\prime}+2+\bt)^2}{4({t^\prime}+\bt)^2+16{t^\prime}},
\eea whose solutions are \bea a_1(t^\prime)&=&
\frac{n}{2}\left(t^\prime\pm\sqrt{(t^\prime+\bt)^2+4t^\prime}\right)+C_1\nonumber
\eea
where $C_1$ is a constant of integration. We retain the
solution that is bounded as $t^\prime\to\infty$,
\bea
a_1(t^\prime)&=&
\frac{n}{2}\left(t^\prime-\sqrt{(t^\prime+\bt)^2+4t^\prime}\right)+C_1.
\eea
Comparing this with the corresponding result
(\ref{eq:CF:kappa1_Gen_bt}) obtained from the Coulomb fluid
approximation, we see that $\kappa_1^{CF}(0)=cN_sn,$ which
suggests $a_1(0)=n$ and hence $C_1=n(1+\bt/2)$. Therefore, the first
cumulant $\kappa_1$ to $O(n)$ becomes
\bea
\frac{\kappa_1^{\rm Large \; n}(t^\prime)}{c{\N}}&=&a_1(t^\prime) \nonumber\\
\label{eq:kappa1LN}&=&\frac{n}{2}\left(2+\bt+t^\prime-\sqrt{(t^\prime+\bt)^2+4t^\prime}\right).
\eea

\subsection{Analysis of $\kappa_2$}

Having characterized $\kappa_1$, we now turn to the variance
$\kappa_2$.  To this end, equating the coefficients of the next
lowest power of $s$ to zero in the expansion of the PDE
(\ref{eq:PDE(Hn)LongIntro}), results in an ODE
which can be factored into the following form
\begin{small}
\bea
\Psi_1^{(1)} \left(\Psi_1^{(2)} +\Psi_2^{(2)} + \Psi_3^{(2)}
\right) &=&0.
\eea
\end{small}
The ODE $\Psi_1^{(1)}=0$, makes a reappearance, which we set aside,
 leaving the ODE
\bea\label{eq:ODE(a1va2v)}
\Psi_1^{(2)}+\Psi_2^{(2)}+\Psi_3^{(2)}&=&0 , \eea where
$\Psi_1^{(2)}$, $\Psi_2^{(2)}$ and $\Psi_3^{(2)}$ are given by
\begin{footnotesize}
\bea
\frac{\Psi_1^{(2)}}{2{\N}}&=&
8{v}^{2} \left( {a_{{1}}}^{\prime}-\frac{1}{2}\right)  \left( {v}^{2}
 \left(2 n-v+\alpha \right)  \left( {a_{{1}}}^{{\prime\prime}}
 \right) ^{2}+ \left(2n+v+\alpha \right)n(n+\al) {a_{{1}}}^{\prime}
 \left( {a_{{1}}}^{\prime}-1 \right)\right) {a_{
{1}}}^{{\prime\prime\prime}}\nonumber\\
&&
-{v}^{4} \left(2n+v+\alpha \right)
 \left( {a_{{1}}}^{{\prime\prime}} \right) ^{4}
 +2{v}^{3} \left(2n-v+\alpha \right)  \left( {a_{{1}}}^{\prime}-\frac{1}{2} \right)  \left( {a_{{1}}}
^{{\prime\prime}} \right) ^{3}\nonumber\\
&&
-{v}^{2} \Bigg[\Big( (2n+v+\al)\big((\al+v)(\al-v)+20n(n+\al)\big)-16n(n+\al)(2n+\al)\Big)
\left( {a_{{1}}}^{\prime} \right)\left( {a_{{1}}}^{\prime} -1\right)\nonumber\\
&&
\hspace{10mm}-n\left( n+\alpha
 \right)  \left( 2n-3v+\alpha \right)  \Bigg]  \left( {a_{{1
}}}^{{\prime\prime}} \right) ^{2}
\nonumber\\
 &&
 +8 \left( 2n+v+\alpha
 \right)n(n+\al) {a_{{1}}}^{\prime} \left( {a_{{1}}}^{\prime}-1 \right)    \nonumber\\
  &&
 \hspace{10mm} \times\bigg[ v \left( {a_{{1}}}^{\prime}-\frac{1}{2}
 \right) {a_{{1}}}^{{\prime\prime}}+  \left( {(v+\al)
}^{2}+4vn\right) {a_{{1}}}^
{\prime}(1-{a_{{1}}}^
{\prime})+ n\left( n+\alpha \right)\bigg],
\eea
\bea
\Psi_2^{(2)}&=&
\Bigg[ 2{v}^{4} \left( 2n(n+\al)-1 \right)  \left( {a_{{1}}
}^{{\prime\prime}} \right) ^{2}-{v}^{3} \left( {v}^{}{a_{{2}}}^{
{\it '''}}+8n(n+\al) \left( {a_{{2}}}^{\prime} -1\right)   \right) {a_{{1}}}^{{\prime\prime}}\nonumber\\
&&
+4{v}^{2} n(n+\al)\bigg( v
 \left( {a_{{1}}}^{\prime}-\frac{1}{2}\right) {a_{{2}}}^{{\prime\prime}}+ \Big((v+\al)(2n+v+\al)+2n(4n+\al-1)\Big)  \left( {a_{{1}}}^{\prime} \right) \left( {a_{{1}}}^{\prime}-1 \right) \nonumber\\
&&
+n(n+\alpha)+ \left({a_{{2}}}^{\prime}-1\right) {a_
{{1}}}^{\prime}-\frac{1}{2}{a_{{2}}}^{\prime} \bigg)  \Bigg] {a_{{1}}}^{{\it \prime\prime\prime}}
+4\,{v}^{3} \left( {n}(n+\al)-\frac{1}{2} \right)  \left( {a_{{1}}}^{{\prime\prime}} \right) ^{3}-v^4a_2^{\prime\prime\prime}\left(a_1^{\prime\prime}\right)^2\nonumber\\
 &&
 +v^2 \Bigg[ v\left( 2n+v+\alpha \right) ^{2}{a_{{2}}}^{{\prime\prime}}- 2\bigg(\Big(4(2n+v+\al)^2-8n(n+\al)\Big)n(n+\al)-v(2n+\al)-\al^2 \bigg) {a_{{1}}}^{\prime}\nonumber\\
&&
- \Big( 12{n}(n+\al)+2vn
+ \left( v+\alpha \right)
\alpha \Big) {a_{{2}}}^{\prime}
+4n \left( n+\alpha \right)  \Big( (2n+v+\al)^2-2n(n+\al)+3 \Big)    \Bigg]  \left( {a_{{1}}}^{{\prime\prime}}
 \right) ^{2}\nonumber\\
 &&
 -4n(n+\al)v^2a_1^{\prime\prime}\left(a_1^{\prime}-\frac{1}{2}\right)\left(va_2^{\prime\prime\prime}-a_2^{\prime\prime}\right)
 -12vn^2(n+\al)^2a_1^{\prime\prime},
 \eea
 \end{footnotesize}
 and
\begin{footnotesize}
 \bea
 \frac{\Psi_3^{(2)}}{4n(n+\al)}&=&
 v\Bigg[  \Big((2n+v+\al)^2+4n(n+\al)-2 \Big)
 \left( {a_{{1}}}^{\prime} \right) ^{2}
 - \Big((2n+v+\al)^2-4n(n+\al)+\frac{1}{2}\Big) {a_{{2}}}^{\prime}\nonumber\\
 &&
 \hspace{2mm}+\bigg(  \Big( 2(2n+v+\al)^2+1-8n(n+\al) \Big) {a_{{2}}}^{\prime}-3(2n+v+\al)^2+1+4n(n+\al) \bigg) {a_{{1}}}^{\prime}
\Bigg] {a_{{1}}}^{{\prime\prime}}\nonumber\\
&&
+ v \bigg[
 \left( 4nv+ \left( v+\alpha \right) ^{2} \right)  \left( {a_{{1}}}
^{\prime} \right)(1-a_1^\prime)+ n\left( n+\alpha \right)  \bigg] {a_{{2}}}^{{\prime\prime}}
\nonumber\\&&
+\Big( 2nv+\left( v+\alpha
 \right) \alpha \Big) {a_{{2}}}^{\prime}(1-a_1^\prime)a_1^\prime
\nonumber\\
&&+
 \bigg[ 2(2n+v+\al)\Big(2n+\al-2(2n+v+\al)n(n+\al)\Big)
 \nonumber\\&&
 \qquad+8n(n+\al)\big(2n(n+\al)-1\big)\bigg]  \left( {a_{{1}}}^{\prime} \right)^{2}(a_1^\prime-1)\nonumber\\
 &&
+ 2\Big((2n+v+\al)^2-4n(n+\al) \Big)n(n+\al)  \left( {a_{{1}}}^{\prime} \right) ^{2}
-2 n\left( n+\alpha \right)  \left( -\frac{1}{2}{a_
{{2}}}^{\prime}+ n\left( n+\alpha \right)  \right)
\nonumber\\
&&+ 2\Big(6n(n+\al)-(2n+v+\al)^2-1 \Big) n \left( n+\alpha \right)
  {a_{{1}}}^{\prime}.
\eea
\end{footnotesize}
Clearly, $\Psi_1^{(2)}$ depends on $a_1$ only, whilst $\Psi_2^{(2)}$
and $\Psi_3^{(2)}$ depend on both $a_1$ and $a_2$.  To extract the
large $n$ behavior of $\kappa_2$, we replace $v$ by $n{t^\prime}$ in
(\ref{eq:ODE(a1va2v)}) and make use of the large $n$ formula for
$a_1(t')$ in (\ref{eq:kappa1LN}), resulting to a highly non-linear
ODE satisfied by $a_2(t^\prime)$. To proceed further, keeping the
highest powers of $n$, a first order equation is obtained for
$a_2(t^\prime)$,
\begin{small}
\bea
\frac{da_2({t^\prime})}{d{t^\prime}}&=&
n
-n\sqrt{(t^\prime+\bt)^2+4{t^\prime}} \Bigg[ {\frac {  {t^\prime}+\beta+2}{(t^\prime+\bt)^2+4t^\prime}
-{\frac {{t^\prime}(\bt+1) }{ \big((t^\prime+\bt)^2+4t^\prime\big)^2}}}
 \Bigg]
 \nonumber\\&&
 -{\frac {2{\N}{t^\prime}(\bt+1)({t^\prime}-\bt)({t^\prime}+\bt) }{ \big((t^\prime+\bt)^2+4t^\prime\big)^3}}.
\eea
\end{small}
Integrating this leads to an expression for $a_2(t^\prime)$ which,
again yields an unknown integration constant.  This constant can be determined
through a comparison with the Coulomb Fluid results for
$\kappa_2(t^\prime)$ in (\ref{eq:CF:kappa2_Gen_bt}) at $t'=0,$ giving $a_2(0)=n$.

The above analysis gives the leading order characterization of the variance,
\begin{small}
\bea\label{eq:kappa2LN}
\frac{\kappa_2^{\rm Large \; n}(t^\prime)}{c^2{\N}}&=&a_2(t^\prime)\nonumber\\
&=&
n\left({t^\prime}+1+\frac{\bt}{2}\right)-n\sqrt{(t^\prime+\bt)^2+4{t^\prime}}
\left[1-\frac{1}{2}\frac{\bt^2+(\bt+2)t^\prime}{\Big((t^\prime+\bt)^2+4{t^\prime}\Big)}\right]
+\frac{{\N}(1+\bt){t^\prime}^2}{\big((t^\prime+\bt)^2+4t^\prime\big)^2}.
\nonumber\\
\eea
\end{small}

\subsection{Beyond $\kappa_1$ and $\kappa_2$}

The same procedure for computing $\kappa_1$ and
$\kappa_2$ easily extends to higher cumulants.
Here we give the leading order formula for
$\kappa_3$, $\kappa_4$ and $\kappa_5$,
\begin{footnotesize}
\bea
\frac{\kappa_3^{\rm Large \; n}(t^\prime)}{c^3{\N}}&=&a_3(t^\prime) \nonumber\\
&=&
n\left(3{t^\prime}+\bt+2\right)
+\frac{6{\N}{t^\prime}^2(1+\bt)\big(\bt^2+(\bt+2)t^\prime\big)}{\big((t^\prime+\bt)^2+4t^\prime\big)^3},
\nonumber\\&&
-n\sqrt{(t^\prime+\bt)^2+4{t^\prime}}
\Bigg[3-\frac{2\big(1+\bt+\bt^2+(\bt+2){t^\prime}\big)}{(t^\prime+\bt)^2+4{t^\prime}}
+\frac{2\big(\bt^3+\bt^2+2(\bt+1)(\bt+2){t^\prime}\big)}{\big((t^\prime+\bt)^2+4{t^\prime}\big)^2}\Bigg]
\nonumber\\&&
\eea
\bea
\frac{\kappa_4^{\rm Large \; n}(t^\prime)}{c^4{\N}}&=&a_4(t^\prime) \nonumber\\
&=&
-3n\sqrt{(t^\prime+\bt)^2+4{t^\prime}}
\nonumber\\&&\times
\left[1+{\frac {3{t^\prime} \left( {t^\prime}+2+\beta \right) }{({t^\prime}+\bt)^2+4{t^\prime}}}
+{\frac {2{{t^\prime}}^{2} \left( {t^\prime}-3
-3\,\beta \right) }{ \big((t^\prime+\bt)^2+4{t^\prime}\big)^2}}
-{\frac {2{{t^\prime}}^
{3} \left({t^\prime}^2 +(3+\bt){t^\prime}-2-3\,\beta\right) }{\big((t^\prime+\bt)^2+4{t^\prime}\big)^3}}
\right]
\nonumber\\&&
+3n \left( 4t^\prime+\bt+2\right)
+\frac{36\N{{t^\prime}}^{2} \left( 1+\beta \right)  \Big(  \left( {\beta}^{2}+{\frac
{29}{6}}\,\beta+{\frac {29}{6}} \right) {{t^\prime}}^{2}+2\,{\beta}^{2} \left(
\beta+2\right) {t^\prime}+{\beta}^{4} \Big)}
{\big((t^\prime+\bt)^2+4{t^\prime}\big)^4}
\;,
\eea
\end{footnotesize}
and
\begin{footnotesize}
\bea
\frac{\kappa_5^{\rm Large \; n}(t^\prime)}{c^5{\N}}&=&a_5(t^\prime) \nonumber\\
&=&
-12n\sqrt{(t^\prime+\bt)^2+4{t^\prime}}
\Bigg[ 1+{
\frac {4{t^\prime} \left(11t^\prime-4\bt- 8\right) }{({t^\prime}+\bt)^2+4{t^\prime}}}
-{\frac {4{t^\prime}^{2} \left(5{t^\prime}^2+2(5\bt+49){t^\prime}-37(\bt+1)\right) }{ \big((t^\prime+\bt)^2+4{t^\prime}\big)^2}}
\nonumber\\&&
-{\frac {2{{t^\prime}}^{3} \left(10{t^\prime}^2+12(\bt+4){t^\prime}^2-3(125\bt+334){t^\prime}+348\bt+232\right) }{\big((t^\prime+\bt)^2+4{t^\prime}\big)^3}}
\nonumber\\&&
+{\frac {10{t^\prime}^{4}
 \left(28{t^\prime}^3+3(12\bt+23){t^\prime}^2-2(135\bt+242)t^\prime+158\bt+79\right) }{ \big((t^\prime+\bt)^2+4{t^\prime}\big)^4}}
 \nonumber\\&&
 -{\frac {280{u}^{5} \left(2{t^\prime}^3+(3\bt+4){t^\prime}({t^\prime}-4)+5\bt+2\right) }{ \big((t^\prime+\bt)^2+4{t^\prime}\big)^5}} \Bigg]
+12n\left(5t^\prime+\bt+ 2\right)
\nonumber\\&&
{+\frac { 240\N{t^\prime}^2(1+\bt)\Big(  \left( {
\beta}^{2}+\frac{13}{2}\beta+\frac{13}{2} \right) {t^\prime}^{2}+2\,{\beta}^{2} \left( 2+\beta
 \right) t^\prime+{\beta}^{4} \Big)  \left(  \left( 2+\beta \right) t^\prime+{
\beta}^{2} \right)}
{\big((t^\prime+\bt)^2+4{t^\prime}\big)^5}}.
\eea
\end{footnotesize}

\subsection{Comparison of Cumulants obtained from ODEs with those Obtained from Determinant Representation}
This subsection serves as a check for consistency of our equations.
For small values of $n$ we compute the Hankel determinant from the
moments formula (\ref{eq:Moments(Kummer)}), since
$
D_n(T,t)=\det\big(\mu_{i+j}(T,t)\big)_{i,j=0}^{n-1}.
$
The moment generating function in $s$ and $t$ reads

\bea\label{eq:MGF_det_rep}
\mathcal{M}_\ga(s,t)&=&\left(\frac{1}{1+cs}\right)^{n{\N}}
\frac{\det\Big(c_{i+j}(s,t)\Big)_{i,j=0}^{n-1}}
{\det\Big(c_{i+j}(0,t)\Big)_{i,j=0}^{n-1}}
\eea
where
\bea
 c_j(s,t):=t^{\al+j+1}\Gamma(\al+j+1)
\sum\limits_{k=0}^{\N}\binom{{\N}}{k}\left(cs\right)^kU\left(\al+j+1,\al+j+2-k,\frac{t}{1+cs}\right),\nonumber\\
\eea 
which was derived in Ref.~\onlinecite{DharMckayMallik2010}. For small
fixed integer values of $n$ and $\N$, e.g., $n=2, 3$ and $\N=10$,
and $n=4,5$ and $\N=1$, the above determinant can computed without
much difficulty, from which the cumulants follow. It can be seen
that $\frac{\kappa_1(t)}{c{\N}}=a_1(t)$ and
$\frac{\kappa_2(t)}{c^2{\N}}=a_2(t)$ obtained from
(\ref{eq:MGF_det_rep}), satisfied third order ODEs for $a_1(t)$
given by (\ref{eq:ODEa1(v)}) and $a_2(t)$ given by
(\ref{eq:ODE(a1va2v)}). Similar results hold for the higher
cumulants, which provide
a consistency check.

\section{Large $n$ Corrections of Cumulants obtained from Coulomb Fluid}\label{Sec:Large_n_Corr}
We have shown that the PDE
(\ref{eq:PDE(Hn)LongIntro}) satisfied by \bea
H_n(s,v)&=&v\partial_v\log\mathcal{M}_\ga(s,v)\nonumber \eea may be
used to generate a series of non-linear ODEs that are satisfied by the cumulants $\kappa_j(v)$. Under a
large $n$ assumption, where $v=nt^\prime$, the first few of these
ODEs are approximated as first order ODEs for
$\kappa_l^{{\rm Large\; n}}(t^\prime)$, whose solutions matched
exactly with that obtained from the Coulomb Fluid analysis for the
cumulants $\kappa_l^{\rm CF}(t^\prime)$.

We give a method in this section where the non-linear ODEs generated
from the PDE (\ref{eq:PDE(Hn)LongIntro}) are employed systematically
to obtain ``correction terms'' to the Coulomb Fluid results.


\subsection{Large $n$ expansion of $\kappa_1$}
We assume the first cumulant or $\kappa_1$ has the following large $n$ expansion
\bea
\label{eq:K1ExpansionLN}a_1(t^\prime)&=&\frac{\kappa_1({t^\prime})}{c{\N}}\nonumber\\
&=&
ne_{-1}(t^\prime)+e_0(t^\prime)+\sum\limits_{k=1}^\infty\frac{e_k(t^\prime)}{n^k}.
\eea Substituting the above into (\ref{eq:LargenODEa1(tp)}), and
setting the coefficients of $n^j$ to zero, a system of first order
ODEs for $e_k(t^\prime)$ is obtained. These are are solved
successively starting from $e_{-1}(t^\prime)$, followed by
$e_0(t^\prime)$ and so on.

The expression for $\kappa_1^{{\rm Large \; n}}(t^\prime)$ given by (\ref{eq:CF:kappa1_Gen_bt}), gives rise to the initial conditions
\bea
\label{eq:InCond_ei}
e_{-1}(0)=1 \qquad\text{and}\qquad e_k(0)=0 \qquad\text{for}\qquad k=0,1,2,3,\dots.
\eea

As a result
$e_{-1}(t^\prime)$ is found to
satisfy the ODE
\bea
\left(\frac{de_{-1}({t^\prime})}{d{t^\prime}}-\frac{n}{2}\right)^2&=&
\frac{n^2({t^\prime}+2+\bt)^2}{4({t^\prime}+\bt)^2+16{t^\prime}}\;,
\eea
and has two solutions
$$
e_{-1}(t^\prime)=\frac{1}{2}\left(2\mp\bt+t^\prime\pm\sqrt{(t^\prime+\bt)^2+4t^\prime}\right),
$$
both satisfying the initial condition $e_1(0)=1.$

We retain the solution that is finite at infinity
\bea\label{eq:LNC:em1prime}
e_{-1}(t^\prime)=\frac{1}{2}\left(2+\bt+t^\prime-\sqrt{(t^\prime+\bt)^2+4t^\prime}\right),
\eea to match with that obtained from the Coulomb Fluid computation,
namely, (\ref{eq:CF:kappa1_Gen_bt}).

Continuing, we set the coefficient of $n^3$ to zero, which implies,
\bea\label{eq:LNC:e0prime} \frac{de_0}{d{t^\prime}}&=&0. \eea
Setting the coefficient of $n^2$ to zero gives rise to an
ODE involving $e_{-1}^{'}(t^{\prime})$,
$e_{-1}^{''}(t')$, $e_{-1}^{'''}(t')$, $e_{0}^{'}(t')$  and $e_1^{'}(t^{\prime})$. Simplifying, with
$e_{-1}(t')$ given by (\ref{eq:LNC:em1prime}) and $e_{0}^{'}(t')$ given by (\ref{eq:LNC:e0prime}), we find that
$e_1(t^\prime)$ satisfies: \bea
\label{eq:LNC:e1prime}\frac{\big(({t^\prime}+\bt)^2+4t^\prime\big)^{7/2}}{t^\prime(1+\bt)}\;
\frac{de_1}{d{t^\prime}}&=&-3{t^\prime}^2-(\bt+2){t^\prime}+2\bt^2.
\eea Setting the coefficient of the next lowest power of $n$ to zero
gives rise to an ODE involving $e_{-1}^{'}(t^\prime)$, $e_{-1}^{''}(t^\prime)$, $e_{-}^{'''}(t^\prime)$, $e_{0}^{'}(t^\prime)$,
$e_0^{''}(t^\prime)$, $e_0^{'''}(t^\prime)$, $e_{1}^{'}(t^\prime)$, and $e_2^{'}(t^\prime)$. From the expression of
$e_{-1}(t')$ and $e_0'(t^\prime)$ given by (\ref{eq:LNC:em1prime}) and (\ref{eq:LNC:e0prime}) respectively, we find \bea
\frac{de_2}{d{t^\prime}}&=&0. \eea Continue with this process, the
next three terms $e_3(t^\prime)$,  $e_4(t^\prime)$ and
$e_5(t^\prime)$ are found to satisfy the following equations:
\begin{small}
\bea
\frac{\big(({t^\prime}+\bt)^2+4t^\prime\big)^{13/2}}{t^\prime(1+\bt)}\frac{de_3}{dt^\prime}&=&
l_1^{(3)}(t^\prime),\\
\frac{de_4}{d{t^\prime}}&=&0,\\
\label{eq:LNC:e5prime}\frac{\big(({t^\prime}+\bt)^2+4{t^\prime}\big)^{19/2}}{{t^\prime}(1+\bt)}
\frac{de_5}{dt^\prime}&=&
l_1^{(5)}(t^\prime),
\eea
\end{small}
where $l_1^{(3)}(t^\prime)$ and $l_1^{(5)}(t^\prime)$ are given by
\begin{small}
\bea
\label{eq:l_1^3}l_1^{(3)}(t^\prime)&=&-40{t^\prime}^6-16(\bt+2){t^\prime}^5+(149\bt^2-79\bt-79)
{t^\prime}^4+(\bt+2)(145\bt^2-27\bt-27){t^\prime}^3
\nonumber\\&&
-(19\bt^2-236\bt-236)\bt^2{t^\prime}^2+37(\bt+2)\bt {t^\prime}-2\bt^6,\\
\label{eq:l_1^5}l_1^{(5)}(t^\prime)&=&-1260{{t^\prime}}^{10}+412 \left(\bt+2\right) {{t^\prime}}^{9}
+ 12\left(325{\beta}^{2}-97\bt-97\right) {{t^\prime}}^{8}
\nonumber\\&&+8(\bt+2)(1999\bt^2-516\bt-516) {{t^\prime}}^{
7}
\nonumber\\&&
- \left( 7967\,{\beta}^{4}-70762\,{\beta}^{3}-61492\,{\beta}^{2}+
18540\,\beta+9270 \right) {{t^\prime}}^{6}
\nonumber\\&&-(\bt+2)(23233\bt^4-43750\bt^3-41500\bt^2+4500\bt+2250){{t^\prime}}^{5}\nonumber\\
 &&-2 \left( 4294\,{\beta}^{4}+33485\,{\beta}^{3}+10575{
\beta}^{2}-45820{\beta}-22910\right) \bt^2{{t^\prime}}^{4}
\nonumber\\&&
+34(\bt+2)(107\bt^2-695\bt-695)\bt^4{{t^\prime}}^{3}
+ \left( 1717\,{\beta}^{2}+10072\,{\beta
}+10072 \right) \bt^6{{t^\prime}}^{2}
\nonumber\\&&
- 217\left({\beta}^{}+2
\right)\bt^8 {t^\prime}+2\bt^{10}.
\eea
\end{small}
Solving ODEs
(\ref{eq:LNC:e0prime})--(\ref{eq:LNC:e5prime}) with initial
conditions $e_k(0)=0$, we obtain,
\begin{small}
\bea\label{eq:LNC:kappa_1_LN_Exp}
\frac{\kappa_1({t^\prime})}{c{\N}}&=& \frac{\kappa_1^{\rm
CF}({t^\prime})}{c{\N}}+(1+\bt){t^\prime}^2\sqrt{(t^\prime+\bt)^2+4{t^\prime}}
\sum\limits_{k=0}^{2}\frac{A_{2k+1}^{(1)}(t^\prime)}{n^{2k+1}}+\mathcal{O}\left(\frac{1}{n^7}\right),
\eea
\end{small}
where $A^{(1)}_{1}({t^\prime})$, $A^{(1)}_{3}({t^\prime})$ and $A^{(1)}_{5}({t^\prime})$ are given by
\begin{small}
\bea
A^{(1)}_{1}({t^\prime})&=&\frac{1}{\big((t^\prime+\bt)^2+4t^\prime\big)^{3}},\\
A^{(1)}_{3}({t^\prime})&=&
\frac{1}{\big((t^\prime+\bt)^2+4t^\prime\big)^{4}}+\frac{7{t^\prime}({t^\prime}-2\bt-4)}
{\big(({t^\prime}+\bt)^2+4t^\prime\big)^{5}}+\frac{105{t^\prime}^2(1+\bt)}{\big((t^\prime+\bt)^2+4t^\prime\big)^{6}},\\
A^{(1)}_{5}({t^\prime})&=&
\frac{1}{\big((t^\prime+\bt)^2+4t^\prime\big)^{5}}+\frac{2{t^\prime}(337{t^\prime}-38\bt-76)}{\big(({t^\prime}+\bt)^2+4t^\prime\big)^{6}}
-\frac{33{t^\prime}^2(15{t^\prime}^2+484{t^\prime}+60{t^\prime}\bt-106\bt-106)}{\big((t^\prime+\bt)^2+4t^\prime\big)^{7}}\nonumber\\
&&\frac{2002{t^\prime}^3(6{t^\prime}^2+52{t^\prime}+15{t^\prime}\bt-18\bt-12)}{\big((t^\prime+\bt)^2
+4t^\prime\big)^{8}}-\frac{50050{t^\prime}^4({t^\prime}^2+4{t^\prime}+2{t^\prime}\bt-2\bt-1)}
{\big(({t^\prime}+\bt)^2+4t^\prime\big)^{9}}.
\eea
\end{small}
%
%
%
%
\subsection{Large $n$ Expansion of $\kappa_2$}

In computing a series expansion for the variance, we proceed in a
similar way as was just done for the mean. First, we substitute the
expansion for $a_1(t^\prime)$
from  (\ref{eq:LNC:kappa_1_LN_Exp}) and
\bea
\label{eq:K2ExpansionLN}a_2(t^\prime)&=&\frac{\kappa_2({t^\prime})}{c^2{\N}} \nonumber\\
&=&
nf_{-1}(t^\prime)+f_0(t^\prime)+\sum\limits_{k=1}^\infty\frac{f_k(t^\prime)}{n^k},
\eea
into (\ref{eq:ODE(a1va2v)}).
Equating the coefficients of  $n^k$ to zero, a system of first order
ODEs for $f_k(t^\prime)$ are found. Comparing with the Coulomb Fluid
results for $\kappa_2(t^\prime)$ in (\ref{eq:CF:kappa2_Gen_bt})
gives rise to the initial conditions \bea\label{eq:InCond_fi}
f_{-1}(0)=1 \qquad\text{and}\qquad f_k(0)=0 \qquad\text{for}\qquad
k=0,1,2,3,\dots. \eea In this case, setting the coefficient of
$n^{12}$ to 0 results in a first order ODE for
$f_{-1}(t^\prime)$: 
\bea\label{eq:LNC:fm1prime}
\frac{df_{-1}}{d{t^\prime}}&=&
1-\left(\frac{{t^\prime}+\bt+2}{\big((t^\prime+\bt)^2+4t^\prime\big)^{1/2}}
-\frac{2(\bt+1){t^\prime}}{\big((t^\prime+\bt)^2+4t^\prime\big)^{3/2}}\right).
\eea The solution of this, with the initial condition
$f_{-1}(0)=1,$ reads \bea f_{-1}(t^\prime)&=&
\left({t^\prime}+1+\frac{\bt}{2}\right)-\sqrt{(t^\prime+\bt)^2+4{t^\prime}}\left(1-\frac{1}{2}
\frac{\bt^2+(\bt+2)t^\prime}{\Big((t^\prime+\bt)^2+4{t^\prime}\Big)}\right).
\eea
Continuing the process, we obtain an ODE for $f_0(t^\prime)$:
\bea\label{eq:LNC:f0prime}
\frac{df_0}{dt^\prime}&=&
-\frac{2{\N}(\bt+1){t^\prime}({t^\prime}-\bt)({t^\prime}+\bt)}{\big((t^\prime+\bt)^2+4t^\prime\big)^{3}}.
\eea
The solution with the condition $f_0(0)=0$ reads
\bea
f_0(t^\prime)&=&
\frac{{\N}(1+\bt){t^\prime}^2}{\big((t^\prime+\bt)^2+4t^\prime\big)^2},
\eea
from which it can be immediately seen that
$$
nf_{-1}(t^\prime)+f_0(t^\prime) =\frac{\kappa_2^{\rm
CF}(t^\prime)}{c^2{\N}}, $$ and we recover
 $\kappa_2^{\rm CF}(t^\prime)$ found previously.

The ODEs satisfied by $f_k(t^\prime)$, $k=1,2,3,4$
are reported in Appendix \ref{App:LNCorrDiffEqns}, equations
(\ref{Appdiffeqn:f1})--(\ref{Appdiffeqn:f4}). These will allow us to
compute $f_j(t')$ with appropriate initial conditions.


Briefly, the idea is that to determine the ODE satisfied by the
$k$-th correction term $f_k(t^\prime),$ for $k$  odd, the preceding
$\frac{1}{2}(k+1)$ ODEs satisfied by $f_j(t^\prime)$, $j=-1,1,3\dots,k$ are employed.
For even $k$, the previous $\frac{k}{2}$ ODEs
satisfied by $f_j(t^\prime)$, $j=0,2,4\dots,k$ are employed.

Going through the procedure described, we find that
$\kappa_2(t^\prime)$ has the following large $n$ expansion,
\begin{small}
\bea\label{eq:LNC:kappa_2_LN_Exp}
\frac{\kappa_2({t^\prime})}{c^2{\N}}&=& \frac{\kappa_2^{\rm
CF}({t^\prime})}{c^2{\N}}+(1+\bt){t^\prime}^2\sqrt{(t^\prime+\bt)^2+4{t^\prime}}
\sum\limits_{k=1}^{2}\frac{A_{2k-1}^{(2)}(t^\prime)}{n^{2k-1}}
\nonumber\\&&
+{\N}(1+\bt){t^\prime}^2\sum\limits_{k=1}^{2}\frac{B_{2k}^{(2)}(t^\prime)}{n^{2k}}+\mathcal{O}\left(\frac{1}{n^5}\right)\;,
\eea
\end{small}
where
\begin{small}
\bea
A^{(2)}_{1}({t^\prime})&:=&
\frac{3}{\big(({t^\prime}+\bt)^2+4t^\prime\big)^{3}}-\frac{5{t^\prime}({t^\prime}+2+\bt)}{\big((t^\prime+\bt)^2+4t^\prime\big)^{4}},\\
B^{(2)}_{2}({t^\prime})&:=&
\frac{1}{\big((t^\prime+\bt)^2+4t^\prime\big)^{3}}+\frac{7{t^\prime}({t^\prime}-2\bt-4)}{\big(({t^\prime}+\bt)^2+4t^\prime\big)^{4}}
+\frac{104{t^\prime}^2(1+\bt)}{\big((t^\prime+\bt)^2+4t^\prime\big)^{5}},\\
A^{(2)}_{3}({t^\prime})&:=&
\frac{3}{\big((t^\prime+\bt)^2+4t^\prime\big)^{4}}+\frac{7{t^\prime}(22{t^\prime}-9\bt-18)}
{\big(({t^\prime}+\bt)^2+4t^\prime\big)^{5}}
-\frac{21{t^\prime}^2(9{t^\prime}^2+73{t^\prime}+
9{t^\prime}\bt-49\bt-49)}{\big((t^\prime+\bt)^2+4t^\prime\big)^{6}}
\nonumber\\&&
+\frac{1155{t^\prime}^3({t^\prime}^2+3{t^\prime}+{t^\prime}\bt-3\bt-2)}{\big((t^\prime+\bt)^2+4t^\prime\big)^{7}},
\\
B^{(2)}_{4}({t^\prime})&:=&
\frac{1}{\big((t^\prime+\bt)^2+4t^\prime\big)^{4}}+\frac{2{t^\prime}(337{t^\prime}-38\bt-76)}
{\big(({t^\prime}+\bt)^2+4t^\prime\big)^{5}}
\nonumber\\&&
-\frac{{t^\prime}^2(495{t^\prime}^2+15944{t^\prime}+1980{t^\prime}\bt
-3496\bt-3496)}{\big((t^\prime+\bt)^2+4t^\prime\big)^{6}}\nonumber\\
&&+\frac{56{t^\prime}^3(214{t^\prime}^2+1853{t^\prime}+535{t^\prime}\bt-642\bt-428)}{\big((t^\prime+\bt)^2+4t^\prime\big)^{7}}
-\frac{49840{t^\prime}^4({t^\prime}^2+4{t^\prime}+2{t^\prime}\bt-2\bt-1)}{\big(({t^\prime}+\bt)^2+4t^\prime\big)^{8}}.
\nonumber\\
\eea
\end{small}
%

\subsection{Beyond $\kappa_1$ and $\kappa_2$}

The procedure adopted above for computing the large $n$ expansion
series for $\kappa_1(t^\prime)$ and $\kappa_2(t^\prime)$ easily
extends to the higher cumulants. By way of example, here we focus on
the third cumulant $\kappa_3$. In this case, an asymptotic expansion
for $a_3(t')$ is assumed, \bea
\label{eq:K3ExpansionLN}a_3(t^\prime)&=&\frac{\kappa_3({t^\prime})}{c^3{\N}} \nonumber\\
&=&
ng_{-1}(t^\prime)+g_0(t^\prime)+\sum\limits_{k=1}^\infty\frac{g_k(t^\prime)}{n^k},
\eea
along with the initial conditions,
\bea \label{eq:InCond_gi} g_{-1}(0)=2 \qquad\text{and}\qquad
g_k(0)=0 \qquad\text{for}\qquad k=0,1,2,3,\dots. \eea In this case,
we find that the large $n$ expansion of the third cumulant reads
\begin{small}
\bea \frac{\kappa_3({t^\prime})}{c^3{\N}}&=& \frac{\kappa_3^{\rm
CF}({t^\prime})}{c^3{\N}}+(1+\bt){t^\prime}^2\sqrt{(t^\prime+\bt)^2+4{t^\prime}}
\sum\limits_{k=1}^{2}\frac{A_{2k-1}^{(3)}(t^\prime)}{n^{2k-1}}
\nonumber\\&&
+{\N}(1+\bt){t^\prime}^2\sum\limits_{k=1}^{2}\frac{B_{2k}^{(3)}(t^\prime)}{n^{2k}}+\mathcal{O}\left(\frac{1}{n^5}\right),
\eea
\end{small}
where
\begin{footnotesize}
\bea
A^{(3)}_{1}({t^\prime})&=&\frac{12}{\big(({t^\prime}+\bt)^2+4t^\prime\big)^{3}}
-\frac{10{t^\prime}({t^\prime}+4\bt+8)}{\big((t^\prime+\bt)^2+4t^\prime\big)^{4}}
+\frac{140{t^\prime}^2(1+\bt)}{\big((t^\prime+\bt)^2+4t^\prime\big)^{5}}\nonumber\\
&&+{\N}^2\Bigg(\frac{2{t^\prime}({t^\prime}-2\bt-4)}{\big((t^\prime+\bt)^2+4t^\prime\big)^{4}}
+\frac{32{t^\prime}^2(1+\bt)}{\big((t^\prime+\bt)^2+4t^\prime\big)^{5}}\Bigg),\\
B^{(3)}_{2}({t^\prime})&=&
\frac{6}{\big((t^\prime+\bt)^2+4t^\prime\big)^{3}}
+\frac{6{t^\prime}(37{t^\prime}-19\bt-38)}{\big(({t^\prime}+\bt)^2+4t^\prime\big)^{4}}
-\frac{12{t^\prime}^2(21{t^\prime}^2+172{t^\prime}+21{t^\prime}\bt-134\bt-134)}{\big((t^\prime+\bt)^2+4t^\prime\big)^{5}}
\nonumber\\&&
+\frac{1560{t^\prime}^3({t^\prime}^2+3{t^\prime}+{t^\prime}\bt-3\bt-2)}{\big((t^\prime+\bt)^2+4t^\prime\big)^{5}},
\\
A^{(3)}_{3}({t^\prime})&=&
\frac{12}{\big((t^\prime+\bt)^2+4t^\prime\big)^{4}}
+\frac{42{t^\prime}(35{t^\prime}-8\bt-16)}{\big(({t^\prime}+\bt)^2+4t^\prime\big)^{5}}
-\frac{126{t^\prime}^2(11{t^\prime}^2+206{t^\prime}+26{t^\prime}\bt-67\bt-67)}{\big((t^\prime+\bt)^2+4t^\prime\big)^{6}}\nonumber\\
&&+\frac{924{t^\prime}^3(21{t^\prime}^2+152{t^\prime}+45{t^\prime}\bt-63\bt-42)}{\big((t^\prime+\bt)^2+4t^\prime\big)^{7}}
-\frac{60060{t^\prime}^4({t^\prime}^2+4{t^\prime}+2{t^\prime}\bt-2\bt-1)}{\big(({t^\prime}+\bt)^2+4t^\prime\big)^{8}}\nonumber\\
&&+{\N}^2\Bigg(\frac{20{t^\prime}(12{t^\prime}-\bt-2)}{\big(({t^\prime}+\bt)^2+4t^\prime\big)^{5}}
-\frac{36{t^\prime}^2(5{t^\prime}^2+166{t^\prime}+20{t^\prime}\bt-34\bt-34)}{\big((t^\prime+\bt)^2+4t^\prime\big)^{6}}\nonumber\\
&&\qquad+\frac{12{t^\prime}^3(378{t^\prime}^2+3317{t^\prime}+945{t^\prime}\bt-1134\bt-756)}{\big((t^\prime+\bt)^2+4t^\prime\big)^{7}}
-\frac{19392{t^\prime}^4({t^\prime}^2+4{t^\prime}+2{t^\prime}\bt-2\bt-1)}{\big(({t^\prime}+\bt)^2+4t^\prime\big)^{8}}\Bigg),\nonumber\\ \\
\frac{B^{(3)}_{4}({t^\prime})}{6}&=&
\frac{1}{\big((t^\prime+\bt)^2+4t^\prime\big)^{4}}
+\frac{{t^\prime}(1199{t^\prime}-97\bt-194)}{\big(({t^\prime}+\bt)^2+4t^\prime\big)^{5}}
+\frac{{t^\prime}^2(3074{t^\prime}^2-42140{t^\prime}-5340{t^\prime}\bt+6004\bt+6004)}{\big((t^\prime+\bt)^2+4t^\prime\big)^{6}}\nonumber\\
&&-\frac{{t^\prime}^3(4455{t^\prime}^3+40366{t^\prime}^2-460496{t^\prime}+4455{t^\prime}^2\bt-130120{t^\prime}\bt+94380\bt+62920)}{\big((t^\prime+\bt)^2+4t^\prime\big)^{7}}
\nonumber\\&&+\frac{28{t^\prime}^4(2247{t^\prime}^3+4309{t^\prime}^2-71182{t^\prime}+2247{t^\prime}^2\bt-32965{t^\prime}\bt+19104\bt+9552)}{\big(({t^\prime}+\bt)^2+4t^\prime\big)^{8}}\nonumber\\
&&-\frac{199360{t^\prime}^5({t^\prime}^3-15{t^\prime}+{t^\prime}^2\bt-10{t^\prime}\bt+5\bt+2)}{\big(({t^\prime}+\bt)^2+4t^\prime\big)^{9}}.
\eea
\end{footnotesize}
For ease of reference, the ODEs satisfied by
$g_k(t^\prime)$, $k=1,2,3,4,$ are placed in Appendix
\ref{App:LNCorrDiffEqns}, equations
(\ref{Appdiffeqn:g1})--(\ref{Appdiffeqn:g4})

In summary, this entire section has shown that by expanding the
cumulants $\kappa_1(t^\prime)$, $\kappa_2(t^\prime)$ and
$\kappa_3(t^\prime)$ into asymptotic series in $n$, the Coulomb
Fluid results are recovered as the leading order contributions in
the large $n$ scenario. It is also seen that, by examining the
finite-$n$ correction terms for each cumulant, no terms of
$\mathcal{O}(n^2)$ or higher are present within the expansions.

\section{Asymptotic Performance Analysis Based on Coulomb Fluid}\label{Sec:CF_Large_s}

In this section, we return to the analysis of the moment generating
function, and consider the high SNR scenario (i.e., as
$\bg\rightarrow\infty$). To this end, we will study the Coulomb
Fluid based approximation derived in Section \ref{Sec:CF} (to be
complemented in Section \ref{Sec:PV_Large_s} through analysis based on
Painlev\'e equations), where the variables $T'$ and $t'$ are taken
to be dependent on $\bg,$ namely, \bea
T^\prime(s,\bg)=\frac{t^\prime(\bg)}{1+\frac{\bg s}{R\N}},\qquad
t^\prime(\bg)=\frac{1}{n}\frac{(1+\bg)N_R}{\ti{b}},\;\;\; {\rm where
\;\;\;} \ti{b}:=\nu\bg \; . \eea Note that as
$\bg\rightarrow\infty$,
$$
t'(\bg)\longrightarrow \frac{N_R}{n\nu}.
$$
To obtain the desired high SNR expansion, we compute the moment
generating function $\mathcal{M}_\ga$ as $s\to\infty$ and $\bg\to
\infty$.  The Coulomb Fluid based representation
(\ref{eq:CF:Mgf(Tptp)}), when expressed in terms of $T'(s,\bg)$ and
$\frac{N_R}{n\nu}$ reads
\begin{small}
\bea\label{eq:CF:HSNR:Mgf_bt_eq_0} \mathcal{M}_\ga (s) &\approx&
\exp\left(-S_2^{{\rm
AF}}\Big(T^\prime(s,\bg),\frac{N_R}{n\nu}\Big)-n\left[S_1^{{\rm AF}}
\Big(T^\prime(s,\bg),\frac{N_R}{n\nu}\Big)-{\N}\log\left(\frac{n\nu
T^\prime(s,\bg)}{N_R}\right)\right]\right), 
\nonumber\\
\eea
\end{small}
where $S_1^{{\rm AF}}$ and $S_2^{{\rm AF}}$ are given by
(\ref{eq:CF:S1AF}) and (\ref{eq:CF:S2AF}) respectively.  Our goal is
to compute the expansion of $\mathcal{M}_\ga (s)$ as $s\to\infty$
and $\bg\to\infty$, which turns out to be an expansion in
$(\bg s)^{-1}.$  It turns out that the cases $\beta = 0$ and $\beta
\neq 0$ behave fundamentally differently, and as such, these are
treated separately.

\subsection{The Case of $\beta = 0$}
With $\bt=0$, we have $N_R=n$. An easy computation shows that
$\mathcal{M}_\ga$ admits the following expansion:
\begin{align} \label{eq:CF:HSNR:MGFExpand_bt_0}
\mathcal{M}_\ga(s) = \sum\limits_{\ell=0}^\infty \frac{A_\ell}{(\bg
s)^{d+\ell/2}},
\end{align}
where $A_0$, $A_1$, $A_2$, $A_3,\dots$ are constants independent of $s$ and
$\bg$.  The leading exponent $d$ is given by
\begin{align}\label{eq:CF:HSNR:d_bt_0}
d = {\N}\left(n-\frac{{\N}}{4}\right)
\end{align}
whilst the first few $A_\ell$ are
\begin{footnotesize}
\bea
A_0 &=&\left(\frac{R{\N}}{\nu}\right)^{{\N}(n-{\N}/4)}\frac{\left(1+\sqrt{1+4\nu}\right)^{2n{\N}}}
{4^{{\N}(n+{\N}/4)}(1+4\nu)^{\frac{{\N}^2}{4}}}\exp\left(-\frac{n{\N}}{2\nu}\left(1-\sqrt{1+4\nu}\right)\right),\\
\sqrt{\frac{\nu}{R{\N}}}A_1 &=&
A_0 \frac{{\N}}{2}\left({\N}\sqrt{1+4\nu}-4n\right),\\
%
\sqrt{\frac{\nu}{R{\N}}}A_2 &=& \frac{A_1
}{8}\bigg[2{\N}\left({\N}\sqrt{1+4\nu}-4n\right)-
\frac{{\N}(1+4\nu)+8n(2\nu-1)}{\left({\N}\sqrt{1+4\nu}-4n\right)}\bigg],\\
%
%
\sqrt{\frac{\nu}{R{\N}}}A_3 &=& \frac{A_2
}{6}\Bigg[{\N}\left({\N}\sqrt{1+4\nu}-4n\right)-\sqrt{1+4\nu}
\nonumber\\
&&\qquad\qquad+\frac{\Big[32{\N}n^2-8n(2{\N}^2+1)(2\nu-1)-3{\N}(1+4\nu)\Big]\sqrt{1+4\nu}}
{2{\N}\left({\N}\sqrt{1+4\nu}-4n\right)^2-{\N}(1+4\nu)-8n(2\nu-1)}\nonumber\\
&&\qquad\qquad-\frac{32n\Big[\frac{{\N}^2}{4}(1+4\nu)-2n{\N}(2\nu-1)-3\nu+\frac{1}{4}\Big]}
{2{\N}\left({\N}\sqrt{1+4\nu}-4n\right)^2-{\N}(1+4\nu)-8n(2\nu-1)}\Bigg].
\eea
\end{footnotesize}

We have refrained from presenting $A_k,k\geq 4$, as these are
rather long.

\begin{remark}
For the special case $\ti{b}=\bg$ or $\nu=1$, implying equal
relay and source power, $A_0$ reduces to the remarkably simple formula:
\bea\label{eq:CF:HSNR:A_Bt0_(t=1)} A_0 &=&
\frac{(R{\N})^{{\N}\left(n-\frac{{\N}}{4}\right)}\varphi^{2{\N}n}}
{20^{\frac{{\N}^2}{4}}}\exp\left(\frac{{\N}n}{\varphi}\right), \eea
where $\varphi=(1+\sqrt{5})/2$ is the Golden ratio.
\end{remark}

%
%

\subsubsection{High SNR Analysis of the Symbol Error Rate (SER)}

Based on (\ref{eq:SERExact}), the SER of MPSK modulation can be
expanded at high SNR using (\ref{eq:CF:HSNR:MGFExpand_bt_0}),
resulting in
\begin{align} \label{eq:HSNR:PMPSK_Beta0_0}
P_{\rm MPSK} = \frac{1}{\pi} \sum\limits_{\ell=0}^\infty \frac{
A_\ell}{ (\bg g_{\rm MPSK})^{d+\ell/2} } \mathcal{I}_{d,
\ell}(\Theta)
\end{align}
where
\begin{align} \label{eq:HSNR:IFirstInt}
\mathcal{I}_{d, \ell}(\Theta) = \int\limits_0^\Theta \sin^{2 d + \ell}
\theta d \theta \; .
\end{align}
Considering the first order expansion, following Ref.~\onlinecite{WangGiannakis2003}, we
may write
\begin{align} \label{eq:HSNR:PMPSK_Beta0}
P_{\rm MPSK} = \Big( G_a \bar{\gamma} \Big)^{-G_d} + o \Big(
\bar{\gamma}^{-G_d} \Big),
\end{align}
where we identify
\begin{align}
G_d = N_s \left( n - \frac{N_s}{4} \right) \;
\end{align}
as the so-called \emph{diversity order}, and identify the factor
\begin{align}\label{eq:HSNR:Array_Gain_bt_0}
G_a = g_{\rm MPSK} \left( \frac{ A_0 \mathcal{I}_{G_d,0}(\Theta)}{
\pi} \right)^{-\frac{1}{G_d}}
\end{align}
as the so-called \emph{array gain} (or \emph{coding gain}).
We note
that the result for $G_d$ above is consistent with a previous result
obtained via a different method in Ref.~\onlinecite{SongShin2009}, whilst the
expression for $G_a$ appears new.

Whilst it appears that a closed-form solution for the integral (\ref{eq:HSNR:IFirstInt})
is not forthcoming in general (though it can be easily evaluated
numerically), such a solution does exist for the important special
case of BPSK modulation, for which $M = 2$. In this case we have the
particularization, $\Theta = \pi/2$, for which Ref.~\onlinecite{GradRyzhJeff2007}
gives
\begin{align}
\mathcal{I}_{d, \ell}(\pi/2) = \frac{\sqrt{\pi}}{2} \frac{\Gamma(d +
\ell/2 + 1/2) }{ \Gamma(d + \ell/2 + 1) } \; .
\end{align}
%
%
Hence, using (\ref{eq:HSNR:Array_Gain_bt_0}), $G_a$ admits the simplified
form
\begin{align}
G_a =  \left( \frac{A_0}{2 \sqrt{\pi}} \frac{\Gamma(G_d + 1/2) }{
\Gamma(G_d + 1) } \right)^{-\frac{1}{G_d}} \; .
\end{align}

The high SNR results above are illustrated in Fig.\
\ref{fig:SERQPSK_HighSNR_bt_0}. The ``Simulation'' curves are based on
numerically evaluating the exact SER relation (\ref{eq:SERExact});
the ``Coulomb Fluid (Exact)'' curves are based on substituting
(\ref{eq:CF:HSNR:Mgf_bt_eq_0}) into (\ref{eq:SERExact}) and numerically
evaluating the resulting integral; the ``Coulomb Fluid (Leading term
only)'' curves are based on (\ref{eq:HSNR:PMPSK_Beta0}); whilst
``Coulomb Fluid (Leading 4 terms)'' curves are based on the first
four terms of (\ref{eq:HSNR:PMPSK_Beta0_0}).  The leading-order
approximation is shown to give a reasonably good approximation at
high SNR, whilst the additional accuracy obtained by including a few
correction terms is also clearly evident.

\begin{figure}[!ht]
\includegraphics[width=.9\textwidth]{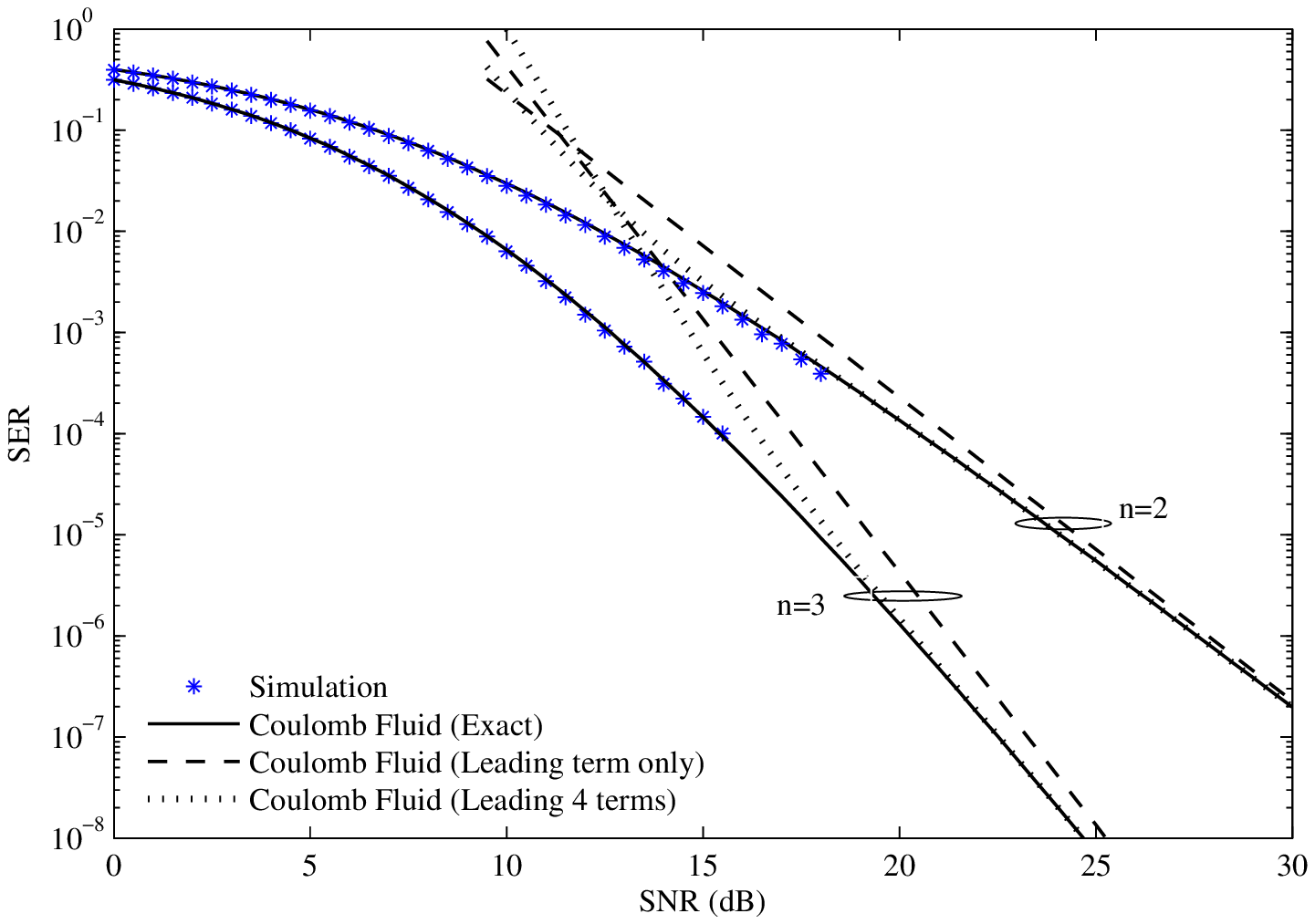}
\caption{Illustration of the SER versus average received SNR (at
relay) $\bar{\gamma}$; comparison of analysis and simulations.
Results are shown for $N_R = N_D = n$ and ${\N} = 2$, with the
full-rate Alamouti OSTBC code (i.e., $R = 1$). QPSK digital
modulation is assumed, such that $M = 4$. The relay power
$\tilde{b}$ is assumed to scale with $\bar{\gamma}$ by setting
$\tilde{b} = \frac{3}{2}\bg$.} \label{fig:SERQPSK_HighSNR_bt_0}
\end{figure}

\subsubsection{High SNR Analysis of the Probability Density Function of $\gamma$}

With the moment generating function expansion given above in
(\ref{eq:CF:HSNR:MGFExpand_bt_0}), we may also readily obtain an
approximation for the probability density function (PDF) of $\gamma$, denoted $f_\ga(x)$, by direct
Laplace Transform inversion. In particular, we obtain
\begin{align}
f_\ga(x) = \sum_{\ell = 0}^\infty \frac{A_\ell}{\Gamma(d + \ell/2)}
\frac{x^{d+\ell/2-1} }{ {\bar{\gamma}}^{d+\ell/2} } \; .
%
\end{align}
For the case of very large $\bar{\gamma}$, with
\begin{align} \label{eq:HSNR:FirstOrderPDFBeta0}
f_\ga(x) =
\frac{A_0}{\Gamma(d)}\frac{x^{d-1}}{\bg^d}+\mathcal{O}\left(\frac{1}{\bg^{d+1/2}}\right)
\; ,
\end{align}
the leading term gives an approximation for the PDF deep in the
left-hand tail. Of course, with the inclusion of more terms, a more
refined approximation is obtained.
\subsection{The Case of $\bt \neq 0$ (with $N_R < N_D$) }\label{sec:HSNR_bt_neq_0}

For the situation where $\bt\neq 0$, two sub-cases arise. This first
is $N_R < N_D$, for which $N_R=n$; the second is $N_R > N_D$, for
which $N_R = (1+\bt)n.$

In the following, we will focus on the first sub-case, $N_R<N_D$. It
turns out however, that our results also apply for $N_R>N_D$ upon
transforming the quantity $\nu$ to $\nu^\ast$ by \bea \nu =
\frac{\nu^\ast}{1+\bt}, \eea where $\nu^\ast(>0)$ is interpreted as
the (fixed) scaling factor between $\ti{b}$ and $\bg$, i.e., \bea
\nu^\ast=\frac{\ti{b}}{\bg} \; . \eea

The moment generating function given by
(\ref{eq:CF:HSNR:Mgf_bt_eq_0}) admits an expansion distinct from the
$\bt=0$ case which does \emph{not} have fractional powers of $1/(
\bg s )$, reading \bea \label{eq:CF:HSNR:MGFBetaNonZero}
\mathcal{M}_\ga(s)&=&\sum\limits_{l=0}^\infty \frac{A_l}{(\bg
s)^{d+l}} \eea where \bea d&=&n{\N}, \eea and
\begin{small}
\bea
\label{eq:HSNR:bt:A0nu}A_0&=& \frac{\Big(2+\bt+\nu\bt^2+\bt\sqrt{\big(1+\nu\bt\big)^2+4\nu}\Big)^{\frac{{\N}}{2}({\N}-n\bt)}
\Big(1+2\nu+\nu\bt+\sqrt{\big(1+\nu\bt\big)^2+4\nu}\Big)^{\frac{n{\N}}{2}(2+\bt)}}
{\nu^{n{\N}}\Big(\big(1+\nu\bt\big)^2+4\nu\Big)^{\frac{{\N}^2}{4}}\Big(1+\bt\Big)^{\frac{n{\N}}{2}(2+\bt)}
\bt^{\frac{{\N}}{2}({\N}-2n\bt)}}\nonumber\\
&&\times\frac{\big(R{\N}\big)^{n{\N}}}{2^{\frac{{\N}}{2}(2n+{\N})}}\exp\left(-\frac{n{\N}}{2\nu}
\Big(1+\nu\bt-\sqrt{\big(1+\nu\bt\big)^2+4\nu}\Big)\right),\\
A_1&=& \frac{{A_0{\N}^2R}}{2\nu\bt^2}\left[{\N}\bt\sqrt{(1+\nu\bt)^2+4\nu}-(2n+{\N})\bt(1+\nu\bt)-2{\N}\right].
\eea
\end{small}

In this situation, the sub-leading terms are very complicated,
however, the $j$th term in the expansion can be written in the
following form: \bea A_j&=&\frac{A_0R^j{\N}^{j+1}}{2
(j!)\left(\nu\beta^2\right)^j}\left[E_j{\N}\bt\sqrt{(1+\nu\bt)^2+4\nu}+F_j\right],
\eea where $E_j$ and $F_j$ also depend upon $\nu$, ${\N}$, $n$ and
$\bt$.

In $A_2$, $E_2$, and $F_2$ are given by
\begin{footnotesize}
\bea
E_2&=&-({\N}^2+2n{\N}+1)\bt(1+\nu\bt)-2{\N}^2-2,\\
F_2&=&
\bigg[  \left( {\N}+n \right)  \left( {{\N}}^{2}+n{\N}+1 \right) +n(n{\N}+1)
 \bigg] \bt^2(1+\nu\bt)^2+4{\N} \left( {{\N}}^{2}+n{\N}+1 \right)\bt(1+\nu\bt)\nonumber\\
 &&+ 2\left( n-{{\N}}^{3
}+2{\N} \right) \beta+2{\N} \left( {{\N}}^{2}+4 \right) \; .
\eea
\end{footnotesize}
In $A_3$, $E_3$, and $F_3$ are given by
\begin{footnotesize}
\bea
E_3&=&
\bigg[  \left( {{\N}}^{2}+3n{\N}+3\,{n}^{2}+1 \right)  \left( {{\N}}^{2}+2
\right) -6\,{n}^{2} \bigg] \bt^2(1+\nu\bt)^2
\nonumber\\&&
+ \left( 4{{\N}}^{2}+8+6n{\N}
\right)  \left( {{\N}}^{2}+1 \right) \bt(1+\nu\bt)+{{\N}}^{4} \left( 3-\beta \right) +
6\left(\beta+3 \right) {{\N}}^{2}
\nonumber\\&&
+12+3n{\N}\beta+4\,\beta \\
%
%
F_3&=&
- \bigg[  \left(  \left( {\N}+n \right) ^{3}+{\N}+{n}^{3} \right)  \left( {{\N}
}^{2}+2 \right) +4\,n \left( 1-{n}^{2} \right)  \bigg] \bt^3(1+\nu\bt)^3\nonumber\\
&&
-6{\N}
 \Big( {{\N}}^{2}+n{\N}+2 \Big)  \Big( {{\N}}^{2}+n{\N}+1 \Big) \bt^2(1+\nu\bt)^2\nonumber\\
 &&
 +
\bigg[  3\left(\bt-3\right) {{\N}}^{5}+ 6\left( \beta-1
\right) n{{\N}}^{4}-36{{\N}}^{3}- \left( 24+15\,\beta \right) n{{\N}}^{2}
\nonumber\\&&
\;\;-6\left(\beta\,{n}^{2}+4\,\beta+8 \right) {\N}-12\,\beta\,n
\bigg] \bt(1+\nu\bt)\nonumber\\
&&-2(3\bt+4) \left( \frac{4}{3}n+n{{\N}}^{2}-{{\N}}^{5}+8{\N} \right) -24{{\N}}^{3}-10{{\N}}^{5}+{\frac {32}{3}}n+8n{{\N}}^{2
}
.
\eea
\end{footnotesize}
In $A_4$, $E_4$, and $F_4$ are given by
\begin{scriptsize}
\bea
E_4&=&
- \bigg[  \bigg\{ 2+{{\N}}^{3}({\N}+4n)+3{{\N}}^{2} \left(1+ 2{n}^{2}
 \right) +2n{\N} \left( 3+2\,{n}^{2} \right) \bigg\}  \left( {{\N}}^{2}+3
 \right) + 4n{\N}\left(1-3{n}^{2} \right) \bigg] \bt^3(1+\nu\bt)^3\nonumber\\
 &&
 - \bigg[
36+2{\N}^3(3{\N}+8n)+ 6{\N}^2\left(5+ 2{n}^{2} \right)+44
n{\N} \bigg]  \left( {{\N}}^{2}+1 \right) \bt^2(1+\nu\bt)^2\nonumber\\
&&
+ \bigg[ 2{\N}^6 \left(\bt-5\right) +4n{\N}^5 \left( \beta-3 \right)-6{\N}^4 \left( \,
\beta+13 \right)-6n{\N}^3 \left( 5\,\beta+12 \right)-4{\N}^2
 \left( 50+17\bt+3\bt n^2\right)\nonumber\\
 &&\qquad-2n{\N}\left(
24+23\,\beta \right)-36(3+\bt)\bigg] \bt(1+\nu\bt)
+ 4{\N}^6\left(\bt-1
 \right)-12{\N}^4 \left( \bt+5 \right)-12n{{\N}}^{3}
\beta\nonumber\\
&&- 8{\N}^2\left( 17\,\beta+28 \right)-28n{\N}\beta-36(3+2\bt)
\\
%
%
%
F_4&=&
 \bigg[  2{n}^{4}{\N}^3+4n^3{\N}^2 \left( {{\N}}^{2}+3 \right) +2n^2{\N}
  \left( 3{{\N}}^{4}+9{{\N}}^{2} +11\right) +2n \left( {{\N}}^
{2}+3 \right)  \left( 2{{\N}}^{4}+3{{\N}}^{2}+2 \right) 
\nonumber\\*&&
 +{{\N}}^{
7}+6{{\N}}^{5}+11{{\N}}^{3}+6{\N} \bigg] \bt^4(1+\nu\bt)^4
\nonumber\\*&&
+8{\N} \left( {{\N}}^{2}+n{\N}+
1 \right)  \left( {{\N}}^{2}+n{\N}+2 \right)  \left( n{\N}+3+{{\N}}^{2} \right) \bt^3(1+\nu\bt)^{3}
\nonumber\\*&&
+ \Bigg[ 12{{\N}}^{2}\beta\,{n}^{3}+ \bigg\{ 12{\N}^3 \left( 4+3\,\beta
 \right)+ 12{\N}^5\left(1-\beta \right)+60{\N}\beta
 \bigg\} {n}^{2}
 \nonumber\\*&&\qquad
 + \bigg\{  12{\N}^6\left( 3-\beta \right)+6{\N}^4 \left(
26-\beta \right)+ 30{\N}^2\left( 6+5\,\beta \right)+72
\,\beta \bigg\} n+4\left(5 -\beta \right) {{\N}}^{7}+6 \left(23 -
\beta\right) {{\N}}^{5}
\nonumber\\*&&
\qquad\qquad+ \left( 310+46\,\beta \right) {{\N}}^{3}
+ \left( 144\,
\beta+288 \right) {\N} \Bigg] \bt^2(1+\nu\bt)^{2}
\nonumber\\*&&
+ \bigg[  8\left( 4+
3{M}^{2} \right) {n}^{2}{\N}\bt+ \Big(  8\left( 1-3\beta \right) {
{\N}}^{6}+ 24\left(\beta+3 \right) {{\N}}^{4}+ \left( 280\,\beta+256
 \right) {{\N}}^{2}+96\,\beta \Big) n\nonumber\\
 &&\qquad  +16\left( 1-\beta \right) {{\N}}
^{7}+ \left( 536+184\,\beta \right) {{\N}}^{3}+ \left( 168-24\,\beta \right) {{\N}}
^{5}+ \left( 576\,\beta+768 \right) {\N} \bigg] \bt(1+\nu\bt)
\nonumber\\*&&
+ 2\left( 1+{\beta}^{2}-6\beta \right) {{\N}}^{7}+ \left( -72\,\beta-24\,{\beta}^{2
}+48 \right) {{\N}}^{5}-12\,\beta\,n \left( \beta-1 \right) {{\N}}^{4}+
 \left( -8\,{\beta}^{2}+168\,\beta+352 \right) {{\N}}^{3}
 \nonumber\\*&&
 +24\, \left( \frac{10}{3}+\beta \right) n{\N}^2\beta+ \Big( 768+ 6\left( {n}^{2}+16
 \right) {\beta}^{2}+768\,\beta \Big) {\N}+24\, \left( \beta+\frac{5}{2}
 \right) n\beta\; .
\eea
\end{scriptsize}
\subsubsection{High SNR Analysis of the Symbol Error Rate (SER)}

Based on (\ref{eq:SERExact}), the SER of MPSK modulation can be
expanded at high SNR using (\ref{eq:CF:HSNR:MGFBetaNonZero}) into
\begin{align}\label{eq:HSNR:SER_MPSK_N0_Expand}
P_{\rm MPSK} = \frac{1}{\pi} \sum\limits_{\ell=0}^\infty \frac{
A_\ell}{ (\bg g_{\rm MPSK})^{d+\ell} }\mathcal{I}_{d +
\ell}(\Theta)
\end{align}
where
\begin{align} \label{eq:HSNR:ISecInt}
\mathcal{I}_{r}(\Theta) = \int_0^\Theta \sin^{2 r} \theta d \theta
\; .
\end{align}
Note that here, in contrast to (\ref{eq:HSNR:IFirstInt}), the exponent
$r$ is a \emph{positive integer}.  As such, (\ref{eq:HSNR:ISecInt})
admits the following closed-form solution: (Ref.~\onlinecite[2.513.1]{GradRyzhJeff2007})
\begin{align}
\mathcal{I}_{r}(\Theta) = \frac{\Theta}{2^{2 r}} \binom{2 r}{r} +
\frac{(-1)^r}{2^{2 r -1}} \sum_{j=0}^{r-1} (-1)^j \binom{2 r}{j}
\frac{\sin \Big( 2( r- j)\Theta \Big) }{2(r- j) } \; .
\end{align}

As before, a first-order approximation is of key interest, giving
\begin{align}\label{eq:HSNR:PMPSK_BetaN0}
P_{\rm MPSK} = \Big( G_a \bar{\gamma} \Big)^{-G_d} + o \Big(
\bar{\gamma}^{-G_d} \Big),
\end{align}
where we identify the {\em diversity order}
\begin{align}
G_d = n N_s   \; ,
\end{align}
and the {\em array gain}
\begin{align}
G_a = g_{\rm MPSK} \left( \frac{A_0 \mathcal{I}_{G_d}(\Theta)}{\pi}
\right)^{-\frac{1}{G_d}} \; .
\end{align}
The result for $G_d$ above is consistent with a result obtained via
a different method in Ref.~\onlinecite{SongShin2009}, whilst the expression for
$G_a$ appears new. The high SNR results above, for the case $\beta
\neq 0$, are illustrated in Fig.\ \ref{fig:SERQPSK_HighSNR_BNotZ}. As
before, the ``Simulation'' curves are based on numerically
evaluating the exact SER relation (\ref{eq:SERExact}), and the
``Coulomb Fluid (Exact)'' curves are based on substituting
(\ref{eq:CF:HSNR:Mgf_bt_eq_0}) into (\ref{eq:SERExact}) and
numerically evaluating the resulting integral.  Moreover, the
``Coulomb Fluid (Leading term only)'' curves are based on
(\ref{eq:HSNR:PMPSK_BetaN0}), whilst the ``Coulomb Fluid (Leading 5
terms)'' curves are based on the first five terms of
(\ref{eq:HSNR:SER_MPSK_N0_Expand}).  Again, the leading-order
approximation is shown to give a reasonably good approximation at
high SNR, whilst the additional accuracy obtained by including a few
correction terms is also evident.

\begin{figure}[!ht]
\includegraphics[width=.9\textwidth]{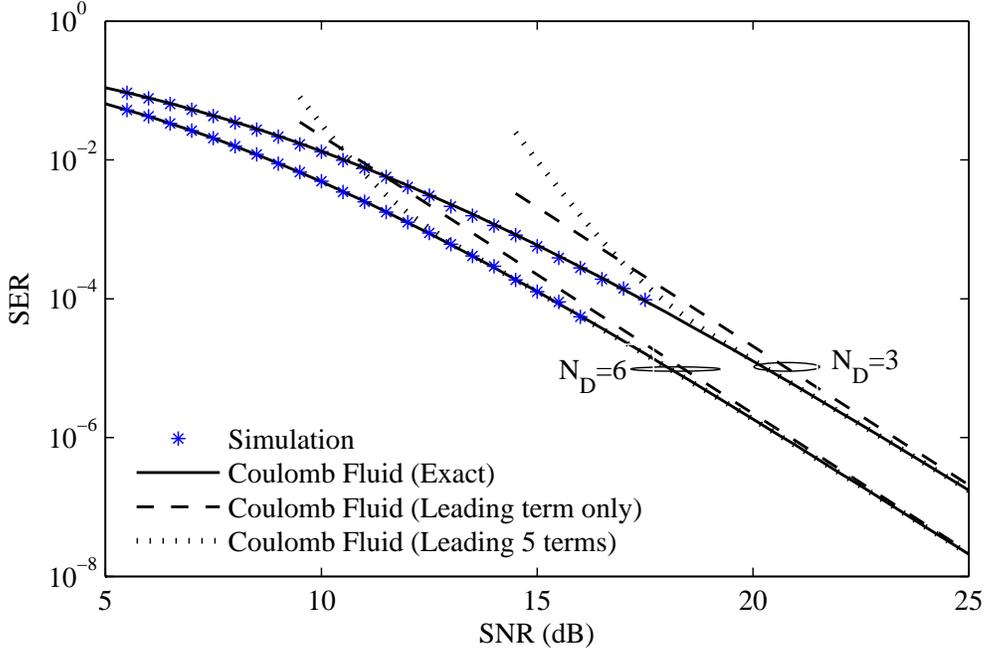}
\caption{Illustration of the SER versus average received SNR (at
relay) $\bar{\gamma}$; comparison of analysis and simulations.
Results are shown for $N_R = 2$ and ${\N} = 2$, with the full-rate
Alamouti OSTBC code (i.e., $R = 1$). QPSK digital modulation is
assumed, such that $M = 4$. The relay power $\tilde{b}$ is assumed
to scale with $\bar{\gamma}$ by setting $\tilde{b} = \frac{3}{2}
\bar{\gamma}$.} \label{fig:SERQPSK_HighSNR_BNotZ}
\end{figure}

\subsubsection{High SNR Analysis of the Probability Density Function of $\gamma$}

As before, based on the moment generating function expansion (\ref{eq:CF:HSNR:MGFBetaNonZero}),
applying for $\beta \neq 0$, we can immediately take a Laplace
inversion to obtain the following high SNR representation for the
PDF of $\gamma$,
\begin{align}
f_\ga(x) = \sum_{\ell = 0}^\infty \frac{A_\ell}{\Gamma(d + \ell)}
\frac{x^{d+\ell-1} }{ {\bar{\gamma}}^{d+\ell} } \; .
\end{align}
We are mainly interested in the leading order term, and so we write the above as
\begin{align} \label{eq:HSNR:FirstOrderPDFBeta}
f_\ga(x) =
\frac{A_0}{\Gamma(d)}\frac{x^{d-1}}{\bg^d}+\mathcal{O}\left(\frac{1}{\bg^{d+1}}\right)
\; \; .
\end{align}
Note that despite the similarity with (\ref{eq:HSNR:FirstOrderPDFBeta0}),
interestingly, these results do not coincide upon taking $\beta \to
0$ in (\ref{eq:HSNR:FirstOrderPDFBeta}), due to the differences in $d$
and $A_0$. This seems to indicate that the double asymptotics
$\bar{\gamma} \to \infty$ and $\beta \to 0$ are non-commutative.

\section{Characterizing $A_0$ Through Painlev\'e $V$}\label{Sec:PV_Large_s}

In this section, we obtain the leading term of the large $s$ and large $\bg$
expansion (\ref{eq:CF:HSNR:MGFBetaNonZero}) from a Painlev\'e $V$
differential equation, thus demonstrating the accuracy of the
Coulomb Fluid approximation. We will focus in this section on the
case $\beta \neq 0$.

Recall that the moment generating function of SNR $\ga$, regarded a function of $s$ and $t$,
given by (\ref{eq:mgfMultipleIntegral}), reads,
\begin{align}
\mathcal{M}_\gamma(s,{t})=\frac{1}{\mathcal{K}_{n,\al}}\left(\frac{1}{1+cs}\right)^{n{\N}}
\frac{1}{n!}\int\limits_{[0,\infty)^n}\prod\limits_{1\leq
i<j\leq n}(x_j-x_i)^2\prod\limits_{k=1}^n x_k^\al
e^{-x_k}\left(\frac{{t}+x_k}{\frac{t}{1+cs}+x_k}\right)^{\N} dx_k,
\end{align}
where $\mathcal{K}_{n,\al}$ is a normalization constant in (\ref{eq:NormConst_n_al}). We also recall that $c$ and $t$ are given by  
\bea
c=\frac{\bg}{R\N},\qquad t=\frac{(1+\bg)N_R}{\ti{b}}, \quad\text{where}\quad \ti{b}:=\nu\bg,
\eea
respectively. So as
$\bg\rightarrow\infty$, we note that
$$
t\longrightarrow \frac{N_R}{\nu}.
$$
A simple computation shows that $\mathcal{M}_\ga(s,t)$ admits the following expansion for large $\bg s$:
\begin{footnotesize}
\bea
\mathcal{M}_\gamma(s,{t})&=&\frac{(R\N)^{n\N}}{\mathcal{K}_{n,\al}(\bg s)^{n\N}}\left(1-\frac{n\N}{\bg s}+\dots\right)
\nonumber\\&&
\times
\frac{1}{n!}\int\limits_{[0,\infty)^n}\prod\limits_{1\leq
i<j\leq n}(x_j-x_i)^2\prod\limits_{k=1}^{n} x_k^{\al-\N}
e^{-x_k}\left({t}+x_k\right)^{\N}\left(1-\frac{tR\N^2}{\bg sx_k}+\dots\right) dx_k \nonumber \\
\label{eq:PV:MGFExp_Ls}
&=&\frac{(R\N)^{n\N}}{D_n[w_{{\rm
dLag}}(\cdot,0,\al-\N,\N)]}\frac{1}{(\bg s)^{n\N}}D_n\big[w_{{\rm dLag}}(\cdot,t,\al-\N,\N)\big]
\nonumber\\&&
-\frac{R^{n\N+1}\N^{n\N+2}}{D_n[w_{{\rm
dLag}}(\cdot,0,\al-\N,\N)]}\frac{1}{(\bg s)^{n\N+1}}\Bigg[nD_n\big[w_{{\rm dLag}}(\cdot,t,\al-\N,\N)\big]
\nonumber\\&&
\quad+
\frac{t}{n!}\int\limits_{[0,\infty)^n}\prod\limits_{1\leq
i<j\leq n}(x_j-x_i)^2\left(\sum\limits_{l=1}^{n}x_l^{-1}\right)\prod\limits_{k=1}^{n} x_k^{\al-\N}
e^{-x_k}\left({t}+x_k\right)^{\N}dx_k\Bigg]+\mathcal{O}\left(\frac{1}{(\bg s)^{n\N+2}}\right),
\nonumber\\
\eea
\end{footnotesize}
where $w_{{\rm dLag}}(x,t,\al-\N,\N)$
is the deformation of the classical Laguerre weight, i.e.,
\bea\label{eq:HSNR:PV:DLag_def}
w_{{\rm dLag}}(x,t,\al-\N,\N)&=&
x^{\al-\N} e^{-x}(t+x)^{\N},\quad t>0, \al-\N>-1, \eea 
and $\mathcal{K}_{n,\al}=D_n[w_{{\rm
dLag}}(\cdot,0,\al-\N,\N)]$ is \emph{independent} of $t$.

The condition
$\al-\N>-1$ is required for the validity of the orthogonality
relation with respect to the generalized Laguerre weight $w_{\rm
Lag}^{(\al-\N)}(x)$ (see Ref.~\onlinecite{Szego1939}). This, in turn, will ensure
the validity of the first two terms in (\ref{eq:PV:MGFExp_Ls}),
while for the multiple integral in the third term to converge, the
condition $\al-\N>0$ is required.

For the problem at hand, $\al$ and $\N$ are integers, satisfying $\al\geq0$ and $\N>0$.
Therefore $\al\geq\N$ 
implies that the result for $A_0$ presented below is valid for $\al>0$.

Comparing the expansion (\ref{eq:PV:MGFExp_Ls}) with
(\ref{eq:CF:HSNR:MGFBetaNonZero}), we see that the diversity order
is \bea d&=&n\N, \eea and \bea
A_0&=&(R\N)^{n\N}
\frac{D_n[w_{{\rm
dLag}}(\cdot,t,\al-\N,\N)]}
{D_n[w_{{\rm
dLag}}(\cdot,0,\al-\N,\N)]}. 
\eea 
We see that $A_0$ is up to a
constant the Hankel determinant which generates a particular
Painlev\'e V and shows up in the single-user MIMO problem studied in
Ref.~\onlinecite{ChenMckay2010}.

We obtain $A_0,$ through a large $n$ expansion of $$\frac{D_n[w_{{\rm
dLag}}(\cdot,t,\al-\N,\N)]}
{D_n[w_{{\rm
dLag}}(\cdot,0,\al-\N,\N)]},$$ for $\alpha>0.$  We will see that $A_0$
precisely matches that obtained in (\ref{eq:HSNR:bt:A0nu}).

From Ref.~\onlinecite{ChenMckay2010} we learned that the logarithmic derivative
of $D_n[w_{\rm dLag}(\cdot,t,\alpha-N_s,N_s)]$ with respect to $t,$
\bea H_n(t)&:=& 
t\frac{d}{dt}\log 
D_n[w_{{\rm dLag}}(\cdot,t,\al-\N,\N)]
\nonumber\\
&=&
t\frac{d}{dt}\log 
\left(\frac{D_n[w_{{\rm dLag}}(\cdot,t,\al-\N,\N)]}
{D_n[w_{{\rm dLag}}(\cdot,0,\al-\N,\N)]}\right)
\eea 
satisfies the Painlev\'e V:
\bea\label{eq:PV:PVODE_t}
\Big(t H_{n}^{\prime\prime}\Big)^2&=&\Big[\big(t+2n+\al\big)H_{n}^\prime-H_{n}+n\N\Big]^2\nonumber\\
&&-4\Big(tH_{n}^\prime-H_{n}+n(n+\al)\Big)\Big(H_{n}^\prime\Big)\Big(H_{n}^\prime+\N\Big),
\eea where $'$ denote derivative w.r.t. $t$.  

We restrict to the case where $N_R<N_D$, for which $N_R=n$. The case where $N_R>N_D$ can be considered in a similar fashion as outlined in Section \ref{sec:HSNR_bt_neq_0}. Setting $\al=n\bt$ in
the above equation, where $\bt=\frac{m}{n}-1$ is a fixed positive
number, and with the change of variable \bea\label{eq:PV:t(nu)}
t=\frac{n}{\nu}, \eea an easy computation shows that
$$Y_n(\nu):=H_n(n/\nu)$$ satisfies
\bea\label{eq:PV:PVODE_nu}
\frac{\nu^4}{n^2}\Big(2Y_n^\prime +\nu Y_{n}^{\prime\prime}\Big)^2&=&
\bigg[\nu\Big((\bt+2)\nu+1\Big)Y_n^\prime+Y_n-n\N\bigg]^2\nonumber\\
&&+\frac{4\nu^2}{n^2}\Big(-\nu Y_n^\prime-Y_n+(1+\bt)n^2\Big)\Big(Y_{n}^\prime\Big)\Big(-\nu^2Y_{n}^\prime+n\N\Big),\qquad
\eea
where $'$ denotes derivative with respect to $\nu$.

We seek a solution for $Y_n(\nu)$ in the form
\bea\label{eq:HSNR:PV:G_n_Seek}
Y_n(\nu)&=&
np_{-1}(\nu)+p_0(\nu)
+\sum\limits_{j=1}^{\infty}\frac{p_{j}(\nu)}{n^j},
\eea
from which  $A_0(\nu)$ is found to be
\begin{small}
\bea\label{eq:PV:A_0_Formula}
A_0(\nu)&=&(R\N)^{n\N}\exp\left(-\int\limits_{\infty}^{\nu}
\frac{np_{-1}(\nu^\prime)+p_0(\nu^\prime)+\sum_{j=1}^{\infty}n^{-j}p_{j}(\nu^\prime)}{\nu^\prime}d\nu^\prime\right), \\
&\approx& (R\N)^{n\N}\exp\left(-\int\limits_{\infty}^{\nu}
\frac{np_{-1}(\nu^\prime)+p_0(\nu^\prime)}{\nu^\prime}d\nu^\prime\right)
\Bigg[1-\frac{1}{n}\int\limits_{\infty}^{\nu}\frac{p_{1}(\nu^\prime)}{\nu^\prime}d\nu^\prime+\dots\Bigg].
\nonumber\\
\eea
\end{small}


Substituting (\ref{eq:HSNR:PV:G_n_Seek}) into (\ref{eq:PV:PVODE_nu})
leads to (\ref{eq:PV:PVODE_nu}) taking the form \bea
c_{-2}n^2+c_{-1}n+c_0+\sum\limits_{j=1}^{\infty}c_{j}n^{-j}=0, \eea where
$c_{-2}$ depends on $p_{-1}(\nu)$ and its derivatives, and $c_i$,
$i=-1,0,1,2,\dots$ depend on $p_{-1}(\nu)$, $p_0(\nu)$ up to
$p_{i+1}(\nu)$ and their derivatives. Of course, each $c_i$ also
depends upon
$\nu$, 
$\N$ and $\bt$.
Assuming that the coefficient of $n^k$ is zero, we find that the equation $c_{-2}=0$ gives us
\bea
\bigg[\nu\Big((\bt+2)\nu+1\Big)p_{-1}^\prime(\nu)+p_{-1}(\nu)-\N\bigg]^2
&=&
4\nu^2(1+\bt)p_{-1}^\prime(\nu)\Big(\nu^2p_{-1}^\prime(\nu)+\N\Big).
\nonumber\\
\eea
With MAPLE, the solutions of the above ODE for $p_{-1}(\nu)$ are found to be
\bea
\label{eq:HSNR:PV:p1(nu)_1}p_{-1}(\nu)&=&\frac{\Big(-\nu\bt-1+\sqrt{\big(1+\nu\bt\big)^2+4\nu}\Big)\N}{2\nu},\\
\label{eq:HSNR:PV:p1(nu)_2}p_{-1}(\nu)&=&\frac{\Big(-\nu\bt-1-\sqrt{\big(1+\nu\bt\big)^2+4\nu}\Big)\N}{2\nu},\\
p_{-1}(\nu)&=&\left(2+\bt+\frac{1}{\nu}\right)C_2+\N+2\sqrt{(1+\bt)C_2(\N+C_2)},
\eea where $C_2$ is a constant of integration.\footnote{Note that the
constants of integration in (\ref{eq:HSNR:PV:p1(nu)_1}) and
(\ref{eq:HSNR:PV:p1(nu)_2}) are zero.}.

The equation $c_{-1}=0$ is a coupled differential equation involving both $p_{-1}(\nu)$ and $p_0(\nu)$ given by
\begin{small}
\bea\label{eq:HSNR:PV:c1=0} 
\left[ 2\nu^3\Big(p_{-1}^\prime(\nu)\Big)^2 +2{\nu}^{2}p_{-1}^\prime(\nu)p_{-1}(\nu)+\nu \left( 1-{\nu}\beta \right)p_0^\prime(\nu)+p_0(\nu)\right]  \Big(\N-\nu p_{-1}(\nu)\Big) \nonumber\\
 -\nu\, \bigg[  \Big( {\beta}
^{2}{\nu}^{3}+ 2\left( \beta+2 \right) {\nu}^{2}+\nu \Big) p_0^\prime(\nu) -2\,{\nu}^{2}p_{-1}(\nu)^2 + \Big( 1+ \left( \beta+2 \right)
\nu \Big) p_{{0}}(\nu)\bigg]p_{-1}^\prime(\nu)\nonumber\\ 
-2\,{\nu}^{5}\Big(p_{-1}^\prime(\nu)\Big)^3= \Big( \nu\, \left( \nu+1 \right)  \left(
\beta\,\nu+1 \right) p_0^\prime(\nu) -  \left( \nu-1 \right)p_0(\nu)  \Big) p_{{-1}}(\nu).
\nonumber\\
\eea
\end{small}
With $p_{-1}(\nu)$ given by (\ref{eq:HSNR:PV:p1(nu)_1}), chosen to
match the result from the Coulomb Fluid (\ref{eq:HSNR:bt:A0nu}), we
find that the first order ODE in $p_0(\nu)$ has the solution,
\bea\label{eq:HSNR:PV:p0(nu)} p_0(\nu)&=&
\frac{\Big(2+\bt+\nu\bt^2-\bt\sqrt{\big(1+\nu\bt\big)^2+4\nu}\Big)\nu\N^2}{2\left(\big(1+\nu\bt\big)^2+4\nu\right)}.
\eea We disregard the second and third solutions for $p_{-1}(\nu);$ as
these would lead to $p_0(\nu)$ which do not generate the $A_0$ in
agreement with that obtained from the Coulomb Fluid method.


Substituting $p_{-1}(\nu)$ from (\ref{eq:HSNR:PV:p1(nu)_1}) and $p_0(\nu)$ from (\ref{eq:HSNR:PV:p0(nu)}) into  $c_0=0$ gives,
\begin{small}
\bea\label{eq:HSNR:PV:pm1(nu)}
p_{1}(\nu)&=&
{\frac {{\N}{\nu}^{2} \Big[  \Big( {\beta}^{2} \left( 2+\beta \right) {
\nu}^{2}+ 2\left( \bt^2+2\beta+2 \right) \nu+2+\beta
\Big) {\N}^{2}- \left(1+ \beta \right) \nu \Big] }
{ \Big(\big(1+\nu\bt\big)^2+4\nu\Big)^{5/2}}}
\nonumber\\&&
-{\frac {{\N}^{3}{\nu}^{2} \left( 1+ \left( 2+\beta \right) \nu \right)
\beta}{ \Big(\big(1+\nu\bt\big)^2+4\nu\Big)^{2}}}.
\eea
\end{small}

Hence, 
$A_0(\nu)$ has a large $n$ expansion, \bea
A_0(\nu)&=&\;q_0(\nu)\Bigg[1+\frac{q_1(\nu)}{n}+\mathcal{O}\left(\frac{1}{n^2}\right)\Bigg],
\label{eq:A0_expand} \eea where
\begin{footnotesize}
\bea
\label{eq:PV:A_0_Explicit}q_0(\nu)&=&
\frac{\Big(2+\bt+\nu\bt^2+\bt\sqrt{\big(1+\nu\bt\big)^2+4\nu}\Big)^{\frac{{\N}}{2}({\N}-n\bt)}
\Big(1+2\nu+\nu\bt+\sqrt{\big(1+\nu\bt\big)^2+4\nu}\Big)^{\frac{n{\N}}{2}(2+\bt)}}
{\nu^{n{\N}}\Big(\big(1+\nu\bt\big)^2+4\nu\Big)^{\frac{{\N}^2}{4}}\Big(1+\bt\Big)^{\frac{n{\N}}{2}(2+\bt)}
\bt^{\frac{{\N}}{2}({\N}-2n\bt)}}\nonumber\\
&&\times\frac{\big(R{\N}\big)^{n{\N}}}{2^{\frac{{\N}}{2}(2n+{\N})}}\exp\left(-\frac{n{\N}}{2\nu}\Big(1+\nu\bt-
\sqrt{\big(1+\nu\bt\big)^2+4\nu}\Big)\right).
\eea
\end{footnotesize}

The leading term of $A_0$ in
(\ref{eq:A0_expand}) agrees \emph{precisely} with the $A_0$ computed
via the Coulomb Fluid method in (\ref{eq:HSNR:bt:A0nu}).


Using a method similar to the cumulant analysis of Section
\ref{Sec:Large_n_Corr}, we can also compute the first correction
term to $A_0$ (i.e., the quantity $q_1(\nu)$) using
(\ref{eq:HSNR:PV:pm1(nu)}), which reads
\begin{small}
\bea
q_1(\nu)&=&
\frac{{\N} \Big(  \left( 2{\N}^2-1 \right) {\beta}^{2}+
2\left(8{\N}^2-1\right) (1+\beta)\Big)}{24(1+\bt)}
\Bigg[\frac{1}{\bt}-\frac{\nu^2(\bt^2\nu+3\bt+6)}{\Big(\big(1+\nu\bt\big)^2+4\nu\Big)^{3/2}}\Bigg]
\nonumber\\&&
-{\frac { \left( 2\,\beta\,\nu+4\,\nu+1 \right) {\N}^{3}}{2\beta
\Big(\big(1+\nu\bt\big)^2+4\nu\Big) }}
-\frac { \left( 2+\beta
 \right) \left( 2{\N}^2-1\right) {\N}}{ 8\Big(\big(1+\nu\bt\big)^2+4\nu\Big)^{3/2} \left(1+\beta\right) }
\Bigg[(2+\bt)\nu+\frac{1}{3}\Bigg]
\nonumber\\&&
+\frac{(2+\bt)\N^3\nu^2}{\Big(\big(1+\nu\bt\big)^2+4\nu\Big)^{3/2}}.
\eea
\end{small}
Higher order corrections could also be obtained in a similar way.


\section{Conclusion}
In this paper, we have introduced two methods for characterizing the
received SNR distribution in a certain MIMO communication system
adopting AF relaying. We showed that the mathematical problem of
interest pertains to computing a certain Hankel determinant
generated by a particular two-time deformation of the classical
Laguerre weight.  By employing the ladder operator approach,
together with Toda-type evolution equations in the time variables,
we established an exact representation of the Hankel determinant in
terms of a double-time PDE, which reduces
to a Painlev\'e V in various limits. This result yields an exact and
fundamental characterization of the SNR distribution, through its
moment generating function.  Complementary to the exact
representation, we also introduced the linear statistics Coulomb
Fluid approach as an efficient way to compute very quickly the
asymptotic properties of the moment generating function for
sufficiently large dimensions (i.e., for sufficiently large numbers
of antennas). These results--which have only started to be used
recently in problems related to wireless communications and information theory--produced simple closed-form approximations for the
moment generating function. These were employed to yield simple
closed-form approximations for the error probability (for a class of
$M$-PSK digital modulation), which were shown via simulations to be
remarkably accurate, even for very small dimensions.

To further demonstrate the utility of our methodology, we employed
our asymptotic Coulomb Fluid characterization in conjunction with
the PDE representation to provide a
rigorous study of the cumulants of the SNR distribution. Starting
with a large-$n$ framework, we computed in closed-form the
finite-$n$ corrections to the first few cumulants. It was seen that
the Coulomb Fluid approach supplies the crucial initial conditions
which are instrumental in obtaining asymptotic expansions from the
PDE.  We also derived asymptotic
properties of the moment generating function when the average SNR
was sufficiently high, and in such regime extracted key performance
quantities of engineering interest, namely, the array gain and
diversity order.

\begin{acknowledgments}
N. S. Haq is supported by an EPSRC grant.
M. R. McKay is
supported by the Hong Kong Research Grants Council under Grant No.
616911.
\end{acknowledgments}

%
%
%
%
%
\newpage
\appendix


\section{Characterization of Hankel Determinant Using the Theory of Orthogonal Polynomials}
In this section, we describe the process by which we characterize
the Hankel determinant by using the theory of orthogonal polynomials
and their associated ladder operators. See
Refs.~\onlinecite{BasorChenEhrhardt,ChenIsmail2005,ChenIts2009,ChenMckay2010,Szego1939}
for the background to this theory.

Consider a sequence of polynomials $\{P_n(x)\}$ orthogonal with
respect to the weight function $w_{\rm AF}(x,T,t)$, given by
(\ref{eq:w(x,T,t)}) \bea w_{{\rm AF}}(x,T,{t})&=& x^\al
e^{-x}\left(\frac{{t}+x}{T+x}\right)^{\N},\qquad 0\leq x<\infty,
\eea i.e.,
\begin{equation}\label{defn:OrthRel}
\int\limits_{0}^\infty P_n(x)P_m(x)w_{\rm AF}(x,T,t)dx =
h_n\delta_{n,m} ,
\end{equation}
where $h_n$ is the square of the $L^2$ norm of $P_n(x)$.
The Hankel determinant (\ref{def:HankelDn}) is reduced to the following product:
\bea\label{Defn:D_n}
D_n[w_{\rm AF}]=\prod\limits_{j=0}^{n-1}\!h_j.
\eea
Our convention is to write $P_n(x)$ as
\begin{equation}\label{defn:Pn(x)}
P_n(x):= x^n +\P_1(n)x^{n-1}+\P_2(n)x^{n-2}+\dots+P_n(0).
\end{equation}
Hence, this implies that properties of the Hankel determinant may be
obtained by characterizing the class of polynomials which are
orthogonal with respect to $w_{{\rm AF}}(x,T,{t})$, over
$[0,\infty)$.
It is clear that the coefficients of the polynomial $P_n(x)$,
$\P_i(n)$, will depend on $T$, $t$, $\al$ and $\N$; for brevity, we
do not display this dependence.

From the orthogonality relation, the three term
recurrence relation follows:
\begin{equation}\label{def:ThreeTermRel}
xP_n(x)=P_{n+1}(x)+\alpha_nP_n(x)+\beta_nP_{n-1}(x), \qquad
n=0,1,2,\dots
\end{equation}
with initial conditions $$P_0(x)\equiv1,
\qquad\text{and}\qquad\bt_0P_{-1}(x)\equiv0.$$

The main aim is to determine these unknown recurrence coefficients
$\alpha_n$ and $\beta_n$ from the given weight. Substituting
(\ref{defn:Pn(x)}) into the three term recurrence relation results
in
\begin{equation}\label{Defn:alpha_n(p_n-p_n+1)}
\alpha_n=\P_1(n)-\P_1(n+1)
\end{equation}
where $\P_1(0):=0$. Taking a telescopic sum gives
\begin{equation}\label{Defn:SumAlpha}
\sum\limits_{j=0}^{n-1}\!\alpha_j=-\P_1(n) .
\end{equation}
Moreover, combining the orthogonality relationship
(\ref{defn:OrthRel}) with the three term recurrence relation leads
to \bea\label{defn:beta_n=h_n/n-1} \bt_n=\frac{h_n}{h_{n-1}} , \eea
which can also be expressed in terms of the Hankel determinant $D_n$
in (\ref{Defn:D_n}) through
\begin{equation}\label{defn:beta_n(D_i)}
\beta_n=\frac{D_{n+1}D_{n-1}}{D_n^2},
\end{equation}
since
$$ h_n=\frac{D_{n+1}}{D_n}.$$

\subsection*{Ladder Operators, Compatibility Conditions, and
Difference Equations}\label{sec:Ladder} In the theory of Hermitian
random matrices, orthogonal polynomials plays an important role,
since the fundamental object, namely, Hankel determinants or
partition functions, are expressed in terms of the associated $L^2$
norm, as indicated for example in (\ref{Defn:D_n}).  Moreover, as
indicated above, the Hankel determinants are intimately related to
the recurrence coefficients $\al_n$ and $\bt_n$ of the orthogonal
polynomials (for other recent examples, see Refs.~\onlinecite{BasorChen2009,BasorChenEhrhardt,ChenZhang2010,ForresterOrmerod2010}).

As we now show, there is a recursive algorithm that facilitates the
determination of the recurrence coefficients $\alpha_n$ and
$\beta_n$. This is implemented through the use of so-called ``ladder
operators'' as well as their associated compatibility conditions.
This approach can be traced back to Laguerre and Ref.~\onlinecite{Shohat1939}.
Recently, Magnus \cite{Magnus1995} applied ladder operators to
non-classical orthogonal polynomials associated with random matrix
theory and the derivation of Painlev\'e equations, while
Ref.~\onlinecite{TracyWidom1999} used the associated compatibility conditions
in the study of finite $n$ matrix models. See
Refs.~\onlinecite{BasorChenEhrhardt,ChenIsmail2005,ChenIts2009} for other
examples of the application of this approach.

From the weight function $w_{\rm AF}(x,T,t),$ one constructs the
associated potential $\textsf{v}(x)$ through \bea
\label{eq:v(x)}\textsf{v}(x)&=&-\log w_{\rm AF}(x)=x-\al\log x -
{\N}\log\left(\frac{{t}+x}{T+x}\right), \eea and therefore, \bea
\label{eq:vp(x)}\textsf{v}^\prime(x)&=&
1-\frac{\al}{x}-\frac{{\N}}{x+{t}}+\frac{{\N}}{x+T}. 
\eea 
As shown in Ref.~\onlinecite{ChenIsmail1997}, a pair of ladder operators, to be
satisfied by our orthogonal polynomials of interest, are expressed
in terms of $\textsf{v}(x)$ and are given by
\begin{equation}\label{Defn:LadderOp}\begin{aligned}
\left[\frac{d}{dx}+B_n(x)\right]P_n(x)=&\bt_nA_n(x)P_{n-1}(x),\\
\left[\frac{d}{dx}-B_n(x)-\textsf{v}^\prime(x)\right]P_{n-1}(x)=&-A_{n-1}(x)P_n(x),
\end{aligned}\end{equation}
where
\begin{equation}\label{Defn:A_nB_n}\begin{aligned}
A_n(x)=&
\frac{1}{h_n}\int\limits^{\infty}_{0}\frac{\textsf{v}^\prime(x)-\textsf{v}^\prime(y)}{x-y}P_n^2(y)w_{\rm AF}(y)dy,\\
B_n(x)=&
\frac{1}{h_{n-1}}\int\limits^{\infty}_{0}\frac{\textsf{v}^\prime(x)-\textsf{v}^\prime(y)}{x-y}P_n(y)P_{n-1}(y)w_{\rm
AF}(y)dy,
\end{aligned}\end{equation}
and where, for the sake of brevity, we have dropped the $t,T$
dependence in $w_{\rm AF}.$

Moreover, there are associated fundamental compatibility conditions
to be satisfied by $A_n(x)$ and $B_n(x)$, which are given by
\cite{ChenIsmail2005}
\begin{equation}\tag{$S_1$}
B_{n+1}(x) + B_n (x) = (x-\alpha_n)A_n(x) - \textsf{v}^\prime(x),
\end{equation}
\begin{equation}\tag{$S_2$}
1+(x-\alpha_n)[B_{n+1}(x)-B_n(x)]=\beta_{n+1}A_{n+1}(x) - \beta_n
A_{n-1}(x).
\end{equation}
These were initially derived for any polynomial $\textsf{v}(x)$ (see
Refs.~\onlinecite{Bauldry1990,BonanClark1990,Mhaskar1990}), and then were shown
to hold for all $x\in\mathbb{C}\cup\{\infty\}$ in greater generality.
\cite{ChenIsmail2005}

We now combine $(S_1)$ and $(S_2)$ as follows.  First, multiplying
$(S_2)$ by $A_n(x)$, it can be seen that the RHS is a first order
difference, while $(x-\alpha_n)A_n(x)$ on the LHS can be replaced by
$B_{n+1}(x)+B_n(x)+\textsf{v}^\prime(x)$ from $(S_1)$. Then, taking
a telescopic sum with initial conditions
\begin{equation*}
B_0(x)=A_{-1}(x)=0
\end{equation*}
leads to the useful identity
\begin{equation}\tag{$S_2^{\ \prime}$}
\sum\limits_{j=0}^{n-1}\!A_j(x)\ + \ B_n^{\ 2}(x) +\textsf{v}^\prime
(x)B_{n}(x) =  \beta_n A_n(x)A_{n-1}(x).
\end{equation}
The condition $(S_2')$ is of considerable interest, since it is
intimately related to the logarithm of the Hankel determinant. In
order to gain further information about the determinant, we need to
find a way to reduce the sum to fixed number of quantities; for
which, $(S_2^\prime)$ ultimately provides a way of going forward.

\begin{remark}
Since our $\textsf{v}^\prime(x)$ is a rational function of $x$, we
see that \bea
\label{eq:vpx-vpy}\frac{\textsf{v}^\prime(x)-\textsf{v}^\prime(y)}{x-y}&=&
\frac{\al}{xy}+\frac{{\N}}{(x+{t})(y+{t})}-\frac{{\N}}{(x+T)(y+T)},
\eea is also a rational function of $x$, which in turn implies that
$A_n(x)$ and $B_n(x)$ are rational functions of $x$. Consequently,
equating the residues of the simple and double pole at  $x=0$,
$x=-T$,  $x=-t$ on both sides of the compatibility conditions
$(S_1)$, $(S_2)$ and $(S_2^\prime)$, we obtain equations containing
numerous $n$, $T$ and ${t}$ dependant quantities; which we call the
``auxiliary variables'' (to be introduced below). The resulting
non-linear discrete equations are likely very complicated, but the
main idea is to express the recurrence coefficients $\al_n$ and
$\bt_n$ in terms of these auxiliary variables, and eventually take
advantage of the product representation (\ref{Defn:D_n}) to obtain
an equation satisfied by the logarithmic derivative of the Hankel
determinant.
\end{remark}

Now substituting (\ref{eq:vpx-vpy}) into (\ref{Defn:A_nB_n}) which define $A_n(x)$
and $B_n(x)$, and followed by integration by parts, we find
\bea
A_n(x)&=&\frac{R_n(T,{t})+1-R_n^\ast(T,{t})}{x}+\frac{R_n^\ast(T,{t})}{x+{t}}-\frac{R_n(T,{t})}{x+T},\\
B_n(x)&=&\frac{r_n(T,{t})-n-r_n^\ast(T,{t})}{x}+\frac{r_n^\ast(T,{t})}{x+{t}}-\frac{r_n(T,{t})}{x+T},
\eea
where
\begin{small}
\begin{equation}\label{eq:AuxVarMC}\begin{aligned}
R_n^\ast(T,{t})\equiv&\frac{{\N}}{h_n}\int\limits^{\infty}_{0}\frac{w_{\rm AF}(y)P_n^2(y)}{y+{t}}dy,&&&&&&&&&
r_n^\ast(T,{t})\equiv&\frac{{\N}}{h_{n-1}}\int\limits^{\infty}_{0}\frac{w_{\rm AF}(y)P_n(y)P_{n-1}(y)}{y+{t}}dy,\\
R_n(T,{t})\equiv&\frac{{\N}}{h_n}\int\limits^{\infty}_{0}\frac{w_{\rm AF}(y)P_n^2(y)}{y+T}dy,&&&&&&&&&
r_n(T,{t})\equiv&\frac{{\N}}{h_{n-1}}\int\limits^{\infty}_{0}\frac{w_{\rm AF}(y)P_n(y)P_{n-1}(y)}{y+T}dy,
\end{aligned}\end{equation}
\end{small}
are the auxiliary variables.

\subsection*{Difference Equations from Compatibility Conditions}
Inserting $A_n(x)$ and $B_n(x)$ into the compatibility conditions
$(S_1)$, $(S_2)$ and $(S_2^\prime)$,
and equating the residues as described above yields a system of 12
equations. Note that there are actually $14$ equations, the
compatibility conditions $(S_1)$ and $(S_2)$ also have
$\mathcal{O}(x^0)$ terms which yield equations that lead to $0=0$.

The compatibility conditions $(S_1)$ give the following set of
equations,
\bea
\label{sys:MCS1.1}r_{n+1}+r_n-r_{n+1}^\ast-r_n^\ast-2n-1&=&\alpha-\alpha_n(R_n+1-R_n^\ast),\\
\label{sys:MCS1.2}r_{n+1}^\ast+r_n^\ast&=&{\N}-(\alpha_n+{t})R_n^\ast,\\
\label{sys:MCS1.3}-r_{n+1}-r_n&=&-{\N}+(\alpha_n+T)R_n. \eea Similarly,
the compatibility condition $(S_2)$ gives the following set of
equations,
\bea
\label{sys:MCS2.1}-\al_n(r_{n+1}-r_n-1-r_{n+1}^\ast+r_n^\ast)&=&\bt_{n+1}-\bt_n
\nonumber\\&&
+\bt_{n+1}(R_{n+1}-R_{n+1}^\ast)-\bt_n(R_{n-1}-R_{n-1}^\ast),\qquad\\
\label{sys:MCS2.2}({t}+\al_n)(r_n^\ast-r_{n+1}^\ast)&=&\bt_{n+1}R_{n+1}^\ast-\bt_nR_{n-1}^\ast,\\
\label{sys:MCS2.3}(T+\alpha_n)(r_n-r_{n+1})&=&\beta_{n+1}R_{n+1}-\beta_nR_{n-1}.
\eea
To obtain a system of difference equations from the
compatibility condition $(S_2^\prime)$ is somewhat more complicated, but
carrying out the process, equating all respective residues in $(S_2^\prime)$ yields six equations.
The first three are obtained by equating the residues of the double
pole at $x=0$, $x=-t$ and $x=-T$ respectively:
\bea
\label{sys:MCS2p1.1}-2r_nr_n^\ast-(2n-{\N}+\al)r_n\nonumber\\
+(2n+{\N}+\al)r_n^\ast+n(n+\al)&=&\bt_n-\bt_n(R_nR_{n-1}^\ast+R_n^\ast R_{n-1})\nonumber\\
&&+\bt_n(R_n+R_{n-1})
-\bt_n(R_n^\ast+R_{n-1}^\ast),\\
\label{sys:MCS2p1.2} r_{n}^\ast(r_n^\ast-\N)&=&\beta_nR_n^\ast R_{n-1}^\ast, \\
\label{sys:MCS2p1.3}r_{n}(r_n-\N)&=&\beta_nR_nR_{n-1}, \eea while
the last three are given by equating the residues of the simple pole at $x=0$,
$x=-t$ and $x=-T$ respectively:
\begin{footnotesize}
\bea\label{sys:MCS2p2.1} \sum\limits_{j=0}^{n-1}
(R_j-R_j^\ast)+\frac{(2r_n^\ast-{\N})(r_n-n-r_n^\ast)-\al
r_n^\ast}{{t}}+\frac{({\N}-2r_n)(r_n-n-r_n^\ast)+\al r_n}{T}
+r_n-r_n^\ast\nonumber\\
=\left(\frac{1}{T}+\frac{1}{{t}}\right)\beta_n(R_n^\ast
R_{n-1}+R_nR_{n-1}^\ast)
+\frac{\beta_n(R_n^\ast+R_{n-1}^\ast-2R_n^\ast R_{n-1}^\ast)}{{t}}
-\frac{\beta_n(R_n+R_{n-1}+2R_nR_{n-1})}{T}, 
\nonumber\\
\eea
\end{footnotesize}
\bea\label{sys:MCS2p2.2} \sum\limits_{j=0}^{n-1}
R_j^\ast-\frac{(2r_n^\ast-{\N})(r_n-n-r_n^\ast)-\al r_n^\ast}{{t}}
+\frac{({\N}-2r_n)r_n^\ast+{\N}r_n}{T-{t}} +r_n^\ast\nonumber\\=
-\frac{\beta_n(R_n^\ast+R_{n-1}^\ast-2R_n^\ast
R_{n-1}^\ast)}{{t}}-\left(\frac{1}{{t}}+\frac{1}{T-{t}}\right)\beta_n(R_n^\ast
R_{n-1}+R_nR_{n-1}^\ast), \eea \bea\label{sys:MCS2p2.3}
\sum\limits_{j=0}^{n-1} R_j +\frac{({\N}-2r_n)(r_n-n-r_n^\ast)+\al
r_n}{T} +\frac{({\N}-2r_n)r_n^\ast+{\N}r_n}{T-{t}} +r_n\nonumber\\=
-\frac{\beta_n(R_n+R_{n-1}+2R_nR_{n-1})}{T}+\left(\frac{1}{T}-\frac{1}{T-{t}}\right)\beta_n(R_n^\ast
R_{n-1}+R_nR_{n-1}^\ast). \eea
\begin{remark}
It can be seen that (\ref{sys:MCS2p2.1}) is a combination of
(\ref{sys:MCS2p2.2}) and (\ref{sys:MCS2p2.3}), and serves as a
consistency check. We also see that (\ref{sys:MCS2p2.2}) and
(\ref{sys:MCS2p2.3}) respectively gives us the value of
$\sum_jR_j^\ast$ and $\sum_jR_j$ automatically in closed form. Hence
we can also obtain any linear combination of these sums in closed
form, which is a crucial step in obtaining a link between a linear
combination of the logarithmic partial derivatives of the Hankel
determinant, and the quantities $\bt_n$, $r_n$ and $r_n^\ast$. This
step would not be possible without $(S_2^\prime)$, also known as the
\emph{bilinear identity}.
\end{remark}
\subsection*{Analysis of Non-linear System}
Whilst some of the above difference equations
(\ref{sys:MCS1.1})-(\ref{sys:MCS2p2.3}) look rather complicated, our
aim is to manipulate these equations in such a way as to give us
insight into the recurrence coefficients $\al_n$ and $\bt_n.$  Our
dominant strategy is to always try to describe the recurrence
coefficients $\al_n$ and $\bt_n$ in terms of the auxiliary variables
$R_n$, $r_n$, $R_n^\ast$ and $r_n^\ast$.

The first thing we can do is to sum (\ref{sys:MCS1.1}) through
(\ref{sys:MCS1.3}) to get a simple expression for the recurrence
coefficient $\al_n$ in terms of the auxiliary variables $R_n$ and
$R_n^\ast$:
\bea
\label{eq:aln(RnRnast)v1}\al_n&=&TR_n-{t}R_n^\ast+\al+2n+1.
\eea
We can immediately see that by taking a telescopic sum of the above
from $j=0$ to $j=n-1$, and recalling equation
(\ref{Defn:SumAlpha}), gives rise to
\bea
\label{eq:p1n(SumRjRjast)}T\sum\limits_{j=0}^{n-1} R_j-{t}\sum\limits_{j=0}^{n-1}R_j^\ast+n(n+\al)&=&\sum\limits_{j=0}^{n-1}\al_j,
\nonumber\\
&=&-\textsf{p}_1(n).
\eea
Looking at $(S_2)$ next, we sum
equations (\ref{sys:MCS2.1}) through (\ref{sys:MCS2.3}) to get
\bea
\label{eq:S2Combined}\al_n=\bt_{n+1}-\bt_n+T(r_n-r_{n+1})+{t}(r_{n+1}^\ast-r_n^\ast).
\eea
Taking a telescopic sum of the above from $j=0$ to $j=n-1$ and
rearranging yields \bea \label{eq:btn(rnrnastp1n)}\bt_n&=&T
r_n-{t}r_n^\ast-\textsf{p}_1(n). \eea
We are now in a position to derive the following important lemma,
which describes the recurrence coefficients $\al_n$ and $\bt_n$ in
terms of the set of auxiliary variables:

\begin{lemma}
The quantities $\al_n$ and $\bt_n$ are given in terms of the
auxiliary variables $R_n$, $r_n$, $R_n^\ast$ and $r_n^\ast$ as \bea
\label{eq:aln(RnRnast)}\al_n&=&TR_n-{t}R_n^\ast+\al+2n+1,\\
\label{eq:btn(rnRnrnastRnast)}\bt_n&=&\frac{1}{1+R_n-R_n^\ast}
\bigg[(1+R_n)\frac{r_n^\ast(r_n^\ast-{\N})}{R_n^\ast}+(R_n^\ast-1)\frac{r_n(r_n-{\N})}{R_n}-2r_nr_n^\ast\nonumber\\
&&\hspace{34mm}-(2n-{\N}+\al)r_n+(2n+{\N}+\al)r_n^\ast+n(n+\al)\bigg].
\nonumber\\
\eea
\end{lemma}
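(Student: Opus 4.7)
The first identity for $\alpha_n$ is essentially already in hand: equation (\ref{eq:aln(RnRnast)v1}) was obtained by summing the three residue equations (\ref{sys:MCS1.1})--(\ref{sys:MCS1.3}) coming from $(S_1)$, so I would simply record it and move on. The substantive content of the lemma is therefore the closed-form expression for $\beta_n$, and the plan is to derive it purely by algebraic elimination within the system (\ref{sys:MCS2p1.1})--(\ref{sys:MCS2p1.3}) obtained from the double-pole residues of $(S_2^\prime)$.

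The strategy is to eliminate the shifted auxiliary variables $R_{n-1}$ and $R_{n-1}^\ast$, which are the only quantities in (\ref{sys:MCS2p1.1}) not already carried by $R_n, R_n^\ast, r_n, r_n^\ast$. This is immediate from (\ref{sys:MCS2p1.2}) and (\ref{sys:MCS2p1.3}): provided $R_n, R_n^\ast \neq 0$, those equations give
\begin{equation}
\beta_n R_{n-1} = \frac{r_n(r_n-\N)}{R_n}, \qquad \beta_n R_{n-1}^\ast = \frac{r_n^\ast(r_n^\ast-\N)}{R_n^\ast}.
\end{equation}
Plugging these into the right-hand side of (\ref{sys:MCS2p1.1}), the terms $\beta_n R_n R_{n-1}^\ast$, $\beta_n R_n^\ast R_{n-1}$, $\beta_n R_{n-1}$, $\beta_n R_{n-1}^\ast$ all become rational expressions in $r_n, r_n^\ast, R_n, R_n^\ast$, and the remaining $\beta_n$-linear terms collect as $\beta_n(1+R_n-R_n^\ast)$.

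Solving the resulting linear equation in $\beta_n$ (assuming $1+R_n-R_n^\ast \neq 0$), one gets
\begin{equation}
\beta_n(1+R_n-R_n^\ast) = (1+R_n)\,\frac{r_n^\ast(r_n^\ast-\N)}{R_n^\ast} + (R_n^\ast-1)\,\frac{r_n(r_n-\N)}{R_n} - 2r_n r_n^\ast - (2n-\N+\alpha)r_n + (2n+\N+\alpha)r_n^\ast + n(n+\alpha),
\end{equation}
which is exactly (\ref{eq:btn(rnRnrnastRnast)}). No differentiation is required and no use of $(S_2)$ or $(S_2^\prime)$ beyond its three double-pole residues; the lemma is therefore a direct algebraic consequence of the bilinear identity at the double poles.

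The only delicate point in the plan is housekeeping rather than genuine mathematical difficulty: tracking signs when folding $-\beta_n(R_nR_{n-1}^\ast+R_n^\ast R_{n-1})+\beta_n(R_n+R_{n-1})-\beta_n(R_n^\ast+R_{n-1}^\ast)$ into the prefactor of $\beta_n$, and verifying the combination $(1+R_n)/R_n^\ast$ and $(R_n^\ast-1)/R_n$ emerges with the right signs. I would also briefly note that the non-vanishing conditions $R_n, R_n^\ast \neq 0$ and $1+R_n-R_n^\ast \neq 0$ hold generically (they are equivalent to positivity/non-degeneracy of the moments of $w_{\mathrm{AF}}$ against $1/(x+T)$, $1/(x+t)$ and the squared-norm $h_n$ not collapsing), so that the inversion used is legitimate, at which point the proof is complete.
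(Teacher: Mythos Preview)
Your proposal is correct and follows essentially the same approach as the paper: the $\alpha_n$ formula is a restatement of (\ref{eq:aln(RnRnast)v1}), and for $\beta_n$ you eliminate $R_{n-1}$ and $R_{n-1}^\ast$ from (\ref{sys:MCS2p1.1}) using (\ref{sys:MCS2p1.2}) and (\ref{sys:MCS2p1.3}), then solve the resulting linear equation in $\beta_n$. Your explicit tracking of the sign combinations and the remark on the genericity of the non-vanishing conditions are welcome additions beyond what the paper records.
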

\begin{proof}
Equation (\ref{eq:aln(RnRnast)}) is a restatement of equation
(\ref{eq:aln(RnRnast)v1}).

To obtain (\ref{eq:btn(rnRnrnastRnast)}), we use the equations
obtained from$(S_2^\prime)$. We eliminate $R_{n-1}^\ast$ and
$R_{n-1}$ from (\ref{sys:MCS2p1.1}) using (\ref{sys:MCS2p1.2}) and
(\ref{sys:MCS2p1.3}) respectively, and then rearrange to express
$\bt_n$ in terms of the auxiliary quantities.
\end{proof}

\subsection*{Toda Evolution}\label{sec:Toda}
In this section, $n$ is kept fixed while we vary two parameters in
the weight function (\ref{eq:w(x,T,t)}), namely $T$ and ${t}$. The
other parameters, $\al$ and ${\N}$ are kept fixed.

Differentiating (\ref{defn:OrthRel}), w.r.t. $T$ and ${t},$ for $m=n,$ gives
\bea
\label{eq:dTloghn}\partial_T(\log h_n)&=&-R_n,\\
\label{eq:dtloghn}\partial_{{t}}(\log h_n)&=&R_n^\ast
\eea
respectively. Then, from equation (\ref{defn:beta_n=h_n/n-1}), i.e.
$\bt_n=h_n/h_{n-1}$, it follows that
\bea
\label{eq:dTbtn}\partial_T\bt_n&=&\bt_n(R_{n-1}-R_n),\\
\label{eq:dtbtn}\partial_{{t}}\bt_n&=&\bt_n(R_n^\ast-R_{n-1}^\ast).
\eea
Applying $\partial_T$ and $\partial_{{t}}$ to the orthogonality
relation
\bea \int\limits^{\infty}_{0}
w_{\rm AF}(x,T,{t})P_n(x)P_{n-1}(x)dx=0
\nonumber
\eea
results in the following two relations:
\bea
\label{eq:dTp1n}\partial_T\textsf{p}_1(n)&=&r_n,\\
\label{eq:dtp1n}\partial_{{t}}\textsf{p}_1(n)&=&-r_n^\ast.
\eea

Note that in the above computations with $\partial_t$ and $\partial_T$ we must keep in mind that the coefficients
of $P_n(x)$ depend on $t$ and $T.$

Using the above two equations, we get that
\bea
(T\partial_T+{t}\partial_{{t}})\P_1(n)&=&Tr_n-{t}r_n^\ast,\nonumber\\
\label{eq:LAFp1(n)}&=&\P_1(n)+\bt_n,
\eea
where the second equality follows from (\ref{eq:btn(rnrnastp1n)}).

Using equation (\ref{Defn:alpha_n(p_n-p_n+1)}), i.e.
$\al_n=\textsf{p}_1(n)-\P_1(n+1)$, we get that
\bea
\label{eq:alT}\partial_T\al_n&=&r_n-r_{n+1},\\
\label{eq:alt}\partial_{{t}}\al_n&=&r_{n+1}^\ast-r_n^\ast.
\eea

Now, combining (\ref{eq:alT}), (\ref{eq:alt}),
(\ref{eq:btn(rnrnastp1n)}), and (\ref{Defn:alpha_n(p_n-p_n+1)}), and
similarly using (\ref{eq:aln(RnRnast)}), (\ref{eq:dTbtn}) and (\ref{eq:dtbtn}), we arrive at
the following lemma:

\begin{lemma}\label{Lem:TodaEqns}
The recurrence coefficients $\al_n$ and $\bt_n$ satisfy the
following partial differential relations: \bea
(T\partial_T+{t}\partial_{{t}}-1)\al_n&=&\bt_n-\bt_{n+1},\nonumber\\
(T\partial_T+{t}\partial_{{t}}-2)\bt_n&=&\bt_n(\al_{n-1}-\al_n).\nonumber
\eea
In the second of the above two equations, we eliminate $\al_n$ using the
first equation to derive a second order differential-difference
relation for $\bt_n$:
\bea
(T^2\partial_{TT}^2+2T{t}\partial_{T{t}}^2+{t}^2\partial_{{t}{t}}^2)\log\bt_n&=&\bt_{n-1}-2\bt_n+\bt_{n+1}-2,\nonumber
\eea
which is a two-variable generalization of the Toda molecule equations.~\cite{Sogo1993}
\end{lemma}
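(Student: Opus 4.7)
My plan is to treat the three relations in sequence, noting that they follow almost mechanically from material already assembled in the appendix once one observes the identity $(T\partial_T+t\partial_t)^2 = T^2\partial_{TT}^2+2Tt\partial_{Tt}^2+t^2\partial_{tt}^2+(T\partial_T+t\partial_t)$, so that the second-order operator appearing in the third equation is $D(D-1)$ with $D:=T\partial_T+t\partial_t$.

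For the first identity, I would start from $\alpha_n=\textsf{p}_1(n)-\textsf{p}_1(n+1)$, see equation (\ref{Defn:alpha_n(p_n-p_n+1)}), and apply the operator $D$. Using (\ref{eq:LAFp1(n)}), namely $D\,\textsf{p}_1(n)=\textsf{p}_1(n)+\beta_n$, at level $n$ and at level $n+1$ yields
\begin{equation*}
D\alpha_n=\bigl[\textsf{p}_1(n)+\beta_n\bigr]-\bigl[\textsf{p}_1(n+1)+\beta_{n+1}\bigr]=\alpha_n+\beta_n-\beta_{n+1},
\end{equation*}
which is the first relation. No auxiliary variable $R_n,r_n,R_n^\ast,r_n^\ast$ needs to be unpacked; the previously derived sum rule (\ref{eq:LAFp1(n)}) does all the work.

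For the second identity, I would combine the Toda-type evolution formulae (\ref{eq:dTbtn}) and (\ref{eq:dtbtn}) to obtain $D\beta_n=\beta_n\bigl[T(R_{n-1}-R_n)+t(R_n^\ast-R_{n-1}^\ast)\bigr]$, and then eliminate the auxiliary quantities by means of (\ref{eq:aln(RnRnast)}) at the levels $n-1$ and $n$: subtracting gives $TR_{n-1}-tR_{n-1}^\ast-(TR_n-tR_n^\ast)=\alpha_{n-1}-\alpha_n+2$, so that $D\beta_n=\beta_n(\alpha_{n-1}-\alpha_n+2)$, which is exactly the second relation.

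For the Toda-like PDE, I would divide through to get $D\log\beta_n=\alpha_{n-1}-\alpha_n+2$ and then apply $D$ once more, using the first identity at both levels $n-1$ and $n$:
\begin{equation*}
D^2\log\beta_n=D\alpha_{n-1}-D\alpha_n=\bigl[\alpha_{n-1}+\beta_{n-1}-\beta_n\bigr]-\bigl[\alpha_n+\beta_n-\beta_{n+1}\bigr].
\end{equation*}
Rearranging and substituting $\alpha_{n-1}-\alpha_n=D\log\beta_n-2$ yields $(D^2-D)\log\beta_n=\beta_{n-1}-2\beta_n+\beta_{n+1}-2$, and the operator identity stated above lets me rewrite $D^2-D=T^2\partial_{TT}^2+2Tt\partial_{Tt}^2+t^2\partial_{tt}^2$, which is the claimed two-variable Toda molecule equation.

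The only genuine subtlety, and the place where I would be most careful, is keeping the index shifts consistent: applying (\ref{eq:LAFp1(n)}) at $n+1$ in step one, and applying the first identity at $n-1$ in step three, are essential but easy to misread since the underlying moment relations only refer to level $n$. Beyond this bookkeeping the derivation is essentially an exercise in combining the Toda evolutions (\ref{eq:dTbtn})--(\ref{eq:dtbtn}), the closed form (\ref{eq:aln(RnRnast)}) for $\alpha_n$, and the sum rule (\ref{eq:LAFp1(n)}), and invoking the scaling-operator identity $D(D-1)=T^2\partial_{TT}^2+2Tt\partial_{Tt}^2+t^2\partial_{tt}^2$.
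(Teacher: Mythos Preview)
Your proposal is correct and follows essentially the same approach as the paper, which simply cites (\ref{eq:alT}), (\ref{eq:alt}), (\ref{eq:btn(rnrnastp1n)}), (\ref{Defn:alpha_n(p_n-p_n+1)}) for the first relation and (\ref{eq:aln(RnRnast)}), (\ref{eq:dTbtn}), (\ref{eq:dtbtn}) for the second, leaving the details implicit. Your use of the already-packaged identity (\ref{eq:LAFp1(n)}) in the first step is a harmless shortcut (that equation is itself derived from (\ref{eq:dTp1n}), (\ref{eq:dtp1n}), (\ref{eq:btn(rnrnastp1n)})), and your explicit verification of the operator identity $D(D-1)=T^2\partial_{TT}+2Tt\partial_{Tt}+t^2\partial_{tt}$ makes the third step cleaner than the paper's terse statement.
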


Now, before proceeding to examine the time-evolution behaviour of the Hankel determinant, we state the following lemmas regarding the auxiliary variables $R_n$, $R_n^\ast$, $r_n$ and $r_n^\ast$.
\begin{lemma}\label{Lem:AuxVarPDESys}
The auxiliary variables $R_n$, $R_n^\ast$, $r_n$ and $r_n^\ast$ satisfy the following first order PDE system:
\bea
\label{eq:AnF:AV_PDE_Sys_1}
T\partial_T R_n-t\partial_T R_n^\ast&=&\Big[TR_n-tR_n^\ast+2n+\al+T\Big]R_n+2r_n-\N,\\
\label{eq:AnF:AV_PDE_Sys_2}T\partial_t R_n-t\partial_t R_n^\ast&=&\Big[-TR_n+tR_n^\ast-2n-\al-t\Big]R_n^\ast-2r_n^\ast+\N,
\eea 
\begin{footnotesize}
\bea
\label{eq:AnF:AV_PDE_Sys_3}T\partial_Tr_n-t\partial_T r_n^\ast &=& r_n(r_n-\N)
\Bigg[
\frac{1}{R_n}-\frac{R_n^\ast-1}{1+R_n-R_n^\ast}
\Bigg]
-\frac{R_n(1+R_n)}{R_n^\ast(1+R_n-R_n^\ast)}r_n^\ast(r_n^\ast-\N)
\nonumber\\&&
+\frac{R_n}{1+R_n-R_n^\ast}\Big[2r_nr_n^\ast+(2n-\N+\al)r_n-(2n+\N+\al)r_n^\ast-n(n+\al)\Big],\\
\label{eq:AnF:AV_PDE_Sys_4}T\partial_tr_n-t\partial_t r_n^\ast &=& r_n^\ast(r_n^\ast-\N)
\Bigg[
-\frac{1}{R_n^\ast}+\frac{1+R_n}{1+R_n-R_n^\ast}
\Bigg]
+\frac{R_n^\ast(R_n^\ast-1)}{R_n(1+R_n-R_n^\ast)}r_n(r_n-\N)
\nonumber\\&&
-\frac{R_n^\ast}{1+R_n-R_n^\ast}\Big[2r_nr_n^\ast+(2n-\N+\al)r_n-(2n+\N+\al)r_n^\ast-n(n+\al)\Big].
\eea
\end{footnotesize}
\end{lemma}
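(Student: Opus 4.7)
The plan is to derive the four PDEs by differentiating two already-established identities for the recurrence coefficients and substituting from the compatibility conditions; in other words, the system is forced by the Toda-type evolution already worked out in the preceding subsection, together with the algebraic relations of $(S_1)$ and $(S_2')$.

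For the two $R$-equations, I would start from the identity $\al_n = TR_n - tR_n^\ast + \al + 2n + 1$ in \eqref{eq:aln(RnRnast)} and differentiate it with respect to $T$ and to $t$. The resulting left-hand sides are
\begin{align*}
T\partial_T R_n - t\partial_T R_n^\ast &= \partial_T \al_n - R_n = r_n - r_{n+1} - R_n,\\
T\partial_t R_n - t\partial_t R_n^\ast &= \partial_t \al_n + R_n^\ast = r_{n+1}^\ast - r_n^\ast + R_n^\ast,
\end{align*}
where the Toda-type relations \eqref{eq:alT} and \eqref{eq:alt} have been used. It then remains to eliminate $r_{n+1}$ and $r_{n+1}^\ast$ using \eqref{sys:MCS1.3} and \eqref{sys:MCS1.2}, which give $(\al_n+T)R_n = \N - r_n - r_{n+1}$ and $(\al_n+t)R_n^\ast = \N - r_n^\ast - r_{n+1}^\ast$, and then restore $\al_n = TR_n - tR_n^\ast + \al + 2n + 1$. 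This immediately produces the right-hand sides of \eqref{eq:AnF:AV_PDE_Sys_1}--\eqref{eq:AnF:AV_PDE_Sys_2}.

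For the two $r$-equations, the analogous route starts from $\bt_n = Tr_n - tr_n^\ast - \textsf{p}_1(n)$ in \eqref{eq:btn(rnrnastp1n)}. Differentiating and using $\partial_T \textsf{p}_1(n) = r_n$, $\partial_t \textsf{p}_1(n) = -r_n^\ast$ from \eqref{eq:dTp1n}--\eqref{eq:dtp1n} yields the clean identities
\begin{align*}
T\partial_T r_n - t\partial_T r_n^\ast &= \partial_T \bt_n = \bt_n(R_{n-1}-R_n),\\
T\partial_t r_n - t\partial_t r_n^\ast &= \partial_t \bt_n = \bt_n(R_n^\ast - R_{n-1}^\ast),
\end{align*}
with the second equalities supplied by the Toda relations \eqref{eq:dTbtn}--\eqref{eq:dtbtn}. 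The next step is to trade $R_{n-1}$ and $R_{n-1}^\ast$ for quantities at level $n$: \eqref{sys:MCS2p1.3} gives $R_{n-1} = r_n(r_n-\N)/(\bt_n R_n)$, and \eqref{sys:MCS2p1.2} gives $R_{n-1}^\ast = r_n^\ast(r_n^\ast-\N)/(\bt_n R_n^\ast)$. Substituting these and the explicit expression \eqref{eq:btn(rnRnrnastRnast)} for $\bt_n$ into the right-hand sides yields \eqref{eq:AnF:AV_PDE_Sys_3}--\eqref{eq:AnF:AV_PDE_Sys_4}.

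The main obstacle is essentially bookkeeping: the $r$-equations require collecting and regrouping four rational terms with a common denominator $1+R_n-R_n^\ast$ (from $\bt_n$) while isolating the coefficients of $r_n(r_n-\N)$ and $r_n^\ast(r_n^\ast-\N)$ so that the result matches the stated compact form. There is a useful symmetry $(T,R_n,r_n)\leftrightarrow(t,R_n^\ast,r_n^\ast)$ with sign flips that halves the work: once \eqref{eq:AnF:AV_PDE_Sys_1} and \eqref{eq:AnF:AV_PDE_Sys_3} are verified, the companion equations \eqref{eq:AnF:AV_PDE_Sys_2} and \eqref{eq:AnF:AV_PDE_Sys_4} follow by the same argument applied to $\partial_t$ instead of $\partial_T$. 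No new orthogonality computations are needed — every ingredient lives in the difference-equation system \eqref{sys:MCS1.1}--\eqref{sys:MCS2p1.3} together with the Toda identities \eqref{eq:dTbtn}--\eqref{eq:dtp1n}.
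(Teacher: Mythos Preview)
Your proposal is correct and follows essentially the same route as the paper: differentiate \eqref{eq:aln(RnRnast)} in $T$ and $t$, combine with \eqref{eq:alT}--\eqref{eq:alt}, and eliminate $r_{n+1},r_{n+1}^\ast$ via \eqref{sys:MCS1.3}--\eqref{sys:MCS1.2} for the $R$-equations; differentiate \eqref{eq:btn(rnrnastp1n)}, use \eqref{eq:dTbtn}--\eqref{eq:dtp1n}, replace $\bt_nR_{n-1},\bt_nR_{n-1}^\ast$ via \eqref{sys:MCS2p1.3}--\eqref{sys:MCS2p1.2}, and substitute \eqref{eq:btn(rnRnrnastRnast)} for the $r$-equations. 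The symmetry remark is a nice bonus but not needed for the argument.
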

\begin{proof}
To obtain (\ref{eq:AnF:AV_PDE_Sys_1}), we substitute in for $\al_n$ in (\ref{eq:alT}) using (\ref{eq:aln(RnRnast)}) to yield \bea
r_n-r_{n+1}&=&\partial_T\al_n,\nonumber\\
&=&R_n+T\partial_TR_n-t\partial_T R_n^\ast.
\eea
Eliminating $r_{n+1}$ in the above formula using (\ref{sys:MCS1.3}) gives
\bea
2r_n&=&\Big[1-\al_n-T\Big]R_n +T\partial_T R_n-t\partial_T R_n^\ast +\N.
\eea
Finally, we eliminate $\al_n$ again using (\ref{eq:aln(RnRnast)}), and then rearrange to obtain (\ref{eq:AnF:AV_PDE_Sys_1}).

We obtain (\ref{eq:AnF:AV_PDE_Sys_2}) in a similar method to (\ref{eq:AnF:AV_PDE_Sys_1}). This time, we first substitute in for $\al_n$ in (\ref{eq:alt}) using (\ref{eq:aln(RnRnast)}). We then proceed to eliminate $r_{n+1}^\ast$ using (\ref{sys:MCS1.2}), and finally eliminate $\al_n$ again using (\ref{eq:aln(RnRnast)}) to obtain (\ref{eq:AnF:AV_PDE_Sys_2}). 

To obtain (\ref{eq:AnF:AV_PDE_Sys_3}), we differentiate equation (\ref{eq:btn(rnrnastp1n)}) with respect to $T$. Following this, we substitute in for $\partial_T\bt_n$ and $\partial_T\P_1(n)$ using (\ref{eq:dTbtn}) and (\ref{eq:dTp1n}) respectively to yield
\bea
T\partial_Tr_n-t\partial_T r_n^\ast&=&\bt_n(R_{n-1}-R_n).
\eea
We then proceed to replace $\bt_nR_{n-1}$ by $r_n(r_n-\N)/R_n$ using (\ref{sys:MCS2p1.3}), and finally we eliminate $\bt_n$ in favor of $R_n$, $R_n^\ast$, $r_n$ and $r_n^\ast$ using (\ref{eq:btn(rnRnrnastRnast)}) to obtain equation (\ref{eq:AnF:AV_PDE_Sys_3}).

Similarly, we obtain (\ref{eq:AnF:AV_PDE_Sys_4}) by first differentiating (\ref{eq:btn(rnrnastp1n)}) with respect to $t$. Following this, we substitute in for $\partial_t\bt_n$ and $\partial_t\P_1(n)$ using (\ref{eq:dtbtn}) and (\ref{eq:dtp1n}) respectively to yield
\bea
T\partial_tr_n-t\partial_t r_n^\ast&=&\bt_n(R_{n}^\ast-R_{n-1}^\ast).
\eea
We then proceed to replace $\bt_nR_{n-1}^\ast$ by $r_n^\ast(r_n^\ast-\N)/R_n^\ast$ using (\ref{sys:MCS2p1.2}), and finally we eliminate $\bt_n$ in favor of $R_n$, $R_n^\ast$, $r_n$ and $r_n^\ast$ using (\ref{eq:btn(rnRnrnastRnast)}) to obtain equation (\ref{eq:AnF:AV_PDE_Sys_4}), completing our proof.
\end{proof}
From Lemma \ref{Lem:AuxVarPDESys}, the auxiliary variables $r_n$ and $r_n^\ast$ appear linearly in equations (\ref{eq:AnF:AV_PDE_Sys_1}) and (\ref{eq:AnF:AV_PDE_Sys_2}) respectively. By eliminating $r_n$ and $r_n^\ast$ from (\ref{eq:AnF:AV_PDE_Sys_3}) and (\ref{eq:AnF:AV_PDE_Sys_4}) using (\ref{eq:AnF:AV_PDE_Sys_1}) and (\ref{eq:AnF:AV_PDE_Sys_2}), we obtain  a second order PDE system for the auxiliary variables $R_n$ and $R_n^\ast$. This is stated in the following lemma:
\begin{lemma}
The auxiliary variables $R_n$ and $R_n^\ast$ satisfy the following second-order PDE system:
\begin{footnotesize}
\bea
0&=&
2 {T}\Big(T\partial_{TT}+t\partial_{Tt}\Big){R_n}-2t\Big( T\partial_{TT}+t\partial_{tt}\Big){R_{n}^{\ast}}
\nonumber\\&&
-{\frac{\Big((1+R_{n})(1-R_n^\ast)+R_n\Big)}{R_{{n}}(1+R_{{n}}-R_{n}^{\ast})}}
\bigg[
T\Big({\partial_{T}{R_n}}\Big)-t\Big(\partial_TR_n^\ast\Big)
\bigg]^2
+\frac{R_n(1+R_n)}{R_n^\ast(1+R_n-R_n^\ast)}
\bigg[
T\Big({\partial_{t}{R_n}}\Big)-t\Big(\partial_tR_n^\ast\Big)
\bigg]^2
\nonumber\\&&
+\frac{2R_{{n}}}{1+R_{{n}}-R_{n}^{\ast}}
\bigg[T\Big(\partial_TR_n\Big)-t\Big(\partial_TR_n^\ast\Big)\bigg]
\bigg[T\Big(\partial_{t}{R_n}\Big)-t\Big(\partial_t R_n^\ast\Big)\bigg]
\nonumber\\&&
+2T(
tR_{n}^{\ast}+1)\Big(\partial_{T}{R_n}\Big)+2t(TR_n+1)\Big(\partial_tR_n\Big)
-2t^2\bigg[R_n^\ast\Big(\partial_T R_n^\ast\Big)+R_n\Big(\partial_tR_n^\ast\Big)\bigg]
-2t(T-t)\Big(\partial_TR_n^\ast\Big)
\nonumber\\&&
-2R_n(TR_n-tR_n^\ast+T)\bigg(TR_n-tR_n^\ast+2n+\al+1+\frac{t}{2}\bigg)
-T(T-t)R_n(R_n+1)
\nonumber\\&&
-{\frac {\Big({R_{{n}}}(R_n -R_{n}^{\ast})-R_{{n}}^{\ast}\Big) {N_{{s}}}^2 }{R_{n}^{\ast}R_{{n}}}}-{\frac {\al^2 R_{{n}}}{1+R_{{n}}-R_{n}^{\ast}}},
\eea
\bea
0&=&
2T\Big(T\partial_{Tt}+t\partial_{tt}\Big)R_n-2t(T+t)\Big(\partial_{tt}R_n^\ast\Big)
\nonumber\\&&
+\frac{R_n^\ast(1-R_n^\ast)}{R_n(1+R_n-R_n^\ast)}
\bigg[
T\Big({\partial_{T}{R_n}}\Big)-t\Big(\partial_TR_n^\ast\Big)
\bigg]^2
+\frac{\Big((1+R_n)(1-R_n^\ast)-R_n^\ast\Big)}{R_n^\ast(1+R_n-R_n^\ast)}
\bigg[
T\Big({\partial_{t}{R_n}}\Big)-t\Big(\partial_tR_n^\ast\Big)
\bigg]^2
\nonumber\\&&
-\frac{2R_{{n}}^\ast}{1+R_{{n}}-R_{n}^{\ast}}
\bigg[T\Big(\partial_TR_n\Big)-t\Big(\partial_TR_n^\ast\Big)\bigg]
\bigg[T\Big(\partial_{t}{R_n}\Big)-t\Big(\partial_t R_n^\ast\Big)\bigg]
\nonumber\\&&
+2T(tR_n^\ast+1)\Big(\partial_TR_n^\ast)+2t(TR_n+1)\Big(\partial_tR_n^\ast\Big)
-2T^2\bigg[R_n^\ast\Big(\partial_T R_n\Big)+R_n\Big(\partial_tR_n\Big)\bigg]
-2T(T-t)\Big(\partial_t R_n\Big)
\nonumber\\&&
+2R_n^\ast(TR_n-tR_n^\ast+t)\bigg(TR_n-tR_n^\ast+2n+\al+1+\frac{T}{2}\bigg)
-t(T-t)R_n^\ast(1-R_n^\ast)
\nonumber\\&&
+{\frac {\Big({R_{{n}}^\ast}(R_n -R_{n}^{\ast})-R_{{n}}\Big) {N_{{s}}}^2 }{R_{n}^{\ast}R_{{n}}}}+{\frac {\al^2 R_{{n}}^\ast}{1+R_{{n}}-R_{n}^{\ast}}}.
\eea
\end{footnotesize}
\end{lemma}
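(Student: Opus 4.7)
The proof strategy is dictated by the sentence immediately preceding the lemma: treat the first two equations of Lemma~\ref{Lem:AuxVarPDESys}, namely (\ref{eq:AnF:AV_PDE_Sys_1}) and (\ref{eq:AnF:AV_PDE_Sys_2}), as algebraic relations in which $r_n$ and $r_n^\ast$ are the unknowns, and use them to eliminate $r_n, r_n^\ast$ from (\ref{eq:AnF:AV_PDE_Sys_3}) and (\ref{eq:AnF:AV_PDE_Sys_4}). The decoupling is essentially free: (\ref{eq:AnF:AV_PDE_Sys_1}) is linear in $r_n$ and does not involve $r_n^\ast$, while (\ref{eq:AnF:AV_PDE_Sys_2}) is linear in $r_n^\ast$ and does not involve $r_n$. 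Solving gives
\begin{align*}
2 r_n &= T\partial_T R_n - t\partial_T R_n^\ast - [TR_n - tR_n^\ast + 2n + \al + T]R_n + \N,\\
2 r_n^\ast &= -T\partial_t R_n + t\partial_t R_n^\ast - [TR_n - tR_n^\ast + 2n + \al + t]R_n^\ast + \N,
\end{align*}
expressing $r_n$ and $r_n^\ast$ as combinations of $R_n, R_n^\ast$ and their first $T$- and $t$-derivatives.

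Substituting these formulae into (\ref{eq:AnF:AV_PDE_Sys_3}) and (\ref{eq:AnF:AV_PDE_Sys_4}) produces the desired PDEs. On the left-hand sides, $T\partial_T r_n - t\partial_T r_n^\ast$ and $T\partial_t r_n - t\partial_t r_n^\ast$ expand, upon differentiation, into the mixed second-derivative combinations of $R_n, R_n^\ast$ that head the stated PDEs, together with lower-order first-derivative and purely algebraic remainders arising from differentiating the polynomial parts of $r_n, r_n^\ast$. On the right-hand sides, the quadratic combinations $r_n(r_n-\N)$, $r_n^\ast(r_n^\ast-\N)$ and $r_n r_n^\ast$ become quadratic in the first derivatives of $R_n, R_n^\ast$, producing precisely the squared and crossed derivative structures $[T\partial_T R_n - t\partial_T R_n^\ast]^2$, $[T\partial_t R_n - t\partial_t R_n^\ast]^2$ and their product displayed in the statement, carrying the rational denominators $R_n$, $R_n^\ast$, $R_n R_n^\ast$ and $1+R_n-R_n^\ast$ inherited directly from (\ref{eq:AnF:AV_PDE_Sys_3}) and (\ref{eq:AnF:AV_PDE_Sys_4}); the $\N^2/(R_n R_n^\ast)$ and $\al^2/(1+R_n-R_n^\ast)$ coefficients and the $n(n+\al)$ term assemble themselves from the constant and pure-$R$ pieces of the substitution.

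The argument is therefore long but entirely mechanical; no further conceptual ingredient is required, and the principal obstacle is bookkeeping rather than ideas. A clean execution is obtained by introducing abbreviations $U := T\partial_T R_n - t\partial_T R_n^\ast$ and $V := T\partial_t R_n - t\partial_t R_n^\ast$, rewriting $2r_n = U - (TR_n - tR_n^\ast + 2n+\al+T)R_n + \N$ and $2r_n^\ast = -V - (TR_n - tR_n^\ast + 2n+\al+t)R_n^\ast + \N$, and then collecting contributions in five disjoint families on each side: second derivatives of $R_n, R_n^\ast$; quadratic first-derivative combinations $U^2, V^2, UV$; products of first derivatives with polynomials in $R_n, R_n^\ast$; pure polynomials in $R_n, R_n^\ast$; and the $\N$- and $\al$-dependent constants. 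Matching these families one by one against the stated PDEs provides a self-consistent bank of checks against algebra errors; in particular the pure-$\N^2$ and pure-$\al^2$ coefficients serve as the most stringent sanity verifications, since they arise only from the constant pieces of the substitution and are unambiguously determined.
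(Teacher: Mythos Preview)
Your approach is exactly the one the paper uses: solve (\ref{eq:AnF:AV_PDE_Sys_1}) and (\ref{eq:AnF:AV_PDE_Sys_2}) linearly for $r_n$ and $r_n^\ast$, then substitute into (\ref{eq:AnF:AV_PDE_Sys_3}) and (\ref{eq:AnF:AV_PDE_Sys_4}) and simplify. The paper gives no more detail than the sentence preceding the lemma, so your organized bookkeeping via $U,V$ and the five families is in fact more explicit than what appears there.
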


\subsection*{Toda Evolution of Hankel Determinant}
Recall that the moment generating function is related to the Hankel
determinant through equation (\ref{eq:Mgf(Dn)}):
\begin{equation*}
\mathcal{M}_\ga(T,{t})=\frac{1}{D_n[w_{\rm Lag}^{(\al)}(\cdot)]}\left(\frac{T}{{t}}\right)^{n{\N}}D_n(T,{t}),
\end{equation*}
while 
the Hankel determinant is related to $h_n(T,t)$ by the relation $D_n(T,t)=\prod_{j=0}^{n-1}h_j(T,t)$.

We compute the partial derivatives of the Hankel determinant by taking a telescopic sum from $j=0$ to $j=n-1$ of the
partial derivatives of $\log h_n$, i.e. equations (\ref{eq:dTloghn})
and (\ref{eq:dtloghn}). Using (\ref{Defn:D_n}), we then obtain \bea
\label{eq:dTlogDn}\partial_T(\log D_n)&=&-\sum\limits_{j=0}^{n-1}R_j,\\
\label{eq:dtlogDn}\partial_{{t}}(\log
D_n)&=&\sum\limits_{j=0}^{n-1}R_j^\ast.
\eea
Hence we now have expressions of the partial derivatives of $D_n(T,{t})$ in terms of the
auxiliary variables $R_n$ and $R_n^\ast$.

At this point, recalling equation (\ref{def:HnIntro}) that
\bea
H_n(T,{t}) =(T\partial_T+{t}\partial_{{t}})\log D_n(T,{t}).
\eea
This quantity related to the logarithmic derivative of the Hankel
determinant is of special interest as our goal is to find the
PDE that $H_n(T,{t})$ satisfies.
From the definition of $H_n(T,{t})$, along with (\ref{eq:dTlogDn})
and (\ref{eq:dtlogDn}), we find that
\bea
\label{eq:LlogDn(SumRnSumRnast)}H_n(T,{t})&=&-T\sum\limits_{j=0}^{n-1}R_j+{t}\sum\limits_{j=0}^{n-1}R_j^\ast,\\
\label{eq:LlogDn(p1n)}&=&\textsf{p}_1(n)+n(n+\al),
\eea where the
second equality is a result of (\ref{eq:p1n(SumRjRjast)}).

At this point, we may express the fundamental quantity $H_n(T,{t})$
in terms of the auxiliary variables and $\beta_n$.  This is given in
the following key lemma.

\begin{lemma}
 \bea
\label{eq:Hn(rnrnastRnRnast)}H_n(T,{t})&=&2r_nr_n^\ast+(2n-{\N}+\al+T)r_n-(2n+{\N}+\al+{t})r_n^\ast\nonumber\\
&&-(1+R_n)\frac{r_n^\ast(r_n^\ast-{\N})}{R_n^\ast}+(1-R_n^\ast)\frac{r_n(r_n-{\N})}{R_n}+\bt_nR_n-\bt_nR_n^\ast.\nonumber\\
\eea
\end{lemma}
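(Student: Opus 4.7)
The plan is to derive the identity by linking $H_n$ to $\textsf{p}_1(n)$, then using the two distinct expressions for $\beta_n$ obtained earlier in the appendix. The backbone of the proof will be the identity (\ref{eq:LlogDn(p1n)}), namely
\[
H_n(T,t) \;=\; \textsf{p}_1(n) + n(n+\alpha),
\]
which rewrites the left-hand side as a single auxiliary quantity $\textsf{p}_1(n)$ coming from the leading subdominant coefficient of $P_n$.

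First I would use relation (\ref{eq:btn(rnrnastp1n)}) to express $\textsf{p}_1(n)$ in terms of $r_n$, $r_n^\ast$ and $\beta_n$, giving
\[
\textsf{p}_1(n) \;=\; Tr_n - tr_n^\ast - \beta_n.
\]
Substituting into $H_n = \textsf{p}_1(n) + n(n+\alpha)$ produces a preliminary form
\[
H_n \;=\; Tr_n - tr_n^\ast + n(n+\alpha) - \beta_n.
\]
This has the linear terms $Tr_n$, $-tr_n^\ast$, and the constant $n(n+\alpha)$ that I will need to complete into $(2n-N_s+\alpha+T)r_n - (2n+N_s+\alpha+t)r_n^\ast + 2 r_n r_n^\ast$ in the target expression.

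The next step is to replace the single $\beta_n$ using Lemma (\ref{eq:btn(rnRnrnastRnast)}), which expresses $\beta_n(1+R_n-R_n^\ast)$ as a sum of terms in $r_n$, $r_n^\ast$, $R_n$, $R_n^\ast$. Rather than solve for $\beta_n$ directly (which would introduce the denominator $1+R_n-R_n^\ast$), I would rewrite that lemma in the form
\[
\beta_n \;=\; (1+R_n)\frac{r_n^\ast(r_n^\ast-N_s)}{R_n^\ast} + (R_n^\ast-1)\frac{r_n(r_n-N_s)}{R_n} - 2r_n r_n^\ast - (2n-N_s+\alpha)r_n + (2n+N_s+\alpha)r_n^\ast + n(n+\alpha) - \beta_n(R_n-R_n^\ast),
\]
keeping $\beta_n(R_n-R_n^\ast)$ on the right as a genuine term rather than dividing it away. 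Substituting this into the preliminary form of $H_n$ and collecting terms causes the two copies of $n(n+\alpha)$ to cancel, leaves $\beta_n(R_n - R_n^\ast) = \beta_n R_n - \beta_n R_n^\ast$ as the final pair of terms, and combines the linear $r_n$, $r_n^\ast$ contributions with $Tr_n$ and $-tr_n^\ast$ to yield exactly the coefficients $(2n-N_s+\alpha+T)$ and $-(2n+N_s+\alpha+t)$ of the stated identity.

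The main obstacle is essentially bookkeeping: one must be careful not to multiply through by $1+R_n-R_n^\ast$ in (\ref{eq:btn(rnRnrnastRnast)}) (since the target formula contains explicit $\beta_n R_n$ and $\beta_n R_n^\ast$ terms rather than a rational function of the auxiliary variables), and one must track the signs on $(1+R_n)$ versus $(1-R_n^\ast)$ in the two quotient terms. No new identities beyond (\ref{eq:LlogDn(p1n)}), (\ref{eq:btn(rnrnastp1n)}) and (\ref{eq:btn(rnRnrnastRnast)}) are required, so the proof reduces to a short algebraic manipulation built on results already established in the appendix.
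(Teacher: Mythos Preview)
Your argument is correct, but it takes a different route from the paper's own proof. The paper computes $H_n = -T\sum_{j=0}^{n-1}R_j + t\sum_{j=0}^{n-1}R_j^\ast$ directly by multiplying the simple-pole identities (\ref{sys:MCS2p2.2}) and (\ref{sys:MCS2p2.3}) from $(S_2')$ by $t$ and $T$ respectively and adding; this gives a closed-form expression still containing $\beta_n R_{n-1}$ and $\beta_n R_{n-1}^\ast$, which are then eliminated using the double-pole identities (\ref{sys:MCS2p1.2}) and (\ref{sys:MCS2p1.3}). Your approach bypasses the simple-pole equations entirely: starting from $H_n=\textsf{p}_1(n)+n(n+\alpha)$, you substitute $\textsf{p}_1(n)=Tr_n-tr_n^\ast-\beta_n$ and then unpack $\beta_n$ via the earlier lemma (\ref{eq:btn(rnRnrnastRnast)}), keeping the $\beta_n(R_n-R_n^\ast)$ piece on the right. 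Both routes ultimately rest on the same $(S_2')$ content (the double-pole equations enter either way), but yours recycles already-proved lemmas and requires only bookkeeping, whereas the paper's version is more self-contained and shows explicitly how the bilinear identity $(S_2')$ produces $H_n$ in closed form.
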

\begin{proof}
Using (\ref{sys:MCS2p2.2}) and (\ref{sys:MCS2p2.3}), we obtain
the sum
$$
{t}\sum\limits_{j=0}^{n-1}R_j^\ast-T\sum\limits_{j=0}^{n-1}R_j
$$
in closed form. All that remains is to eliminate $R_{n-1}^\ast$ and
$R_{n-1}$ from this equation. We do this using (\ref{sys:MCS2p1.2})
and (\ref{sys:MCS2p1.3}).
\end{proof}

To continue, we apply $t\partial_T+{t}\partial_{{t}}$ to equation
(\ref{eq:LlogDn(p1n)}), and find that \bea
\label{eq:LAFHn(p1(n)btn)}(T\partial_T+{t}\partial_{{t}})H_n&=&\P_1(n)+\bt_n,
\eea where we have used (\ref{eq:LAFp1(n)}) to replace
$(T\partial_T+{t}\partial_{{t}})\P_1(n)$ by $\P_1(n) + \beta_n.$

We see that equations (\ref{eq:LlogDn(p1n)}) and
(\ref{eq:LAFHn(p1(n)btn)}) can be regarded as a set of simultaneous
equations for $\P_1(n)$ and $\bt_n$. Solving for $\P_1(n)$ and
$\bt_n$, we find that
$$
\P_1(n)=H_n-n(n+\al),
$$
and \bea \label{eq:btn(Hn)} \bt_n =
(T\partial_T+{t}\partial_{{t}})H_n-H_n+n(n+\al). \eea

Before completing the proof of Theorem \ref{Thm:Hn(Tt0)}, we show
here that our Hankel determinant $D_n(T,t)$ is related to the
$\tau$-function of a Toda PDE.

Substituting the definition (\ref{def:HnIntro}) into
(\ref{eq:btn(Hn)}) and together with (\ref{defn:beta_n(D_i)}), we
find

%
$$
(T\partial_T+{t}\partial_{{t}})^2\log
D_n-(T\partial_T+{t}\partial_{{t}})\log
D_n+n(n+\al)=\frac{D_{n+1}D_{n-1}}{D_n^2}.
$$
Defining
$$
\ti{D}_n(T,{t}) := (T{t})^{-\frac{n(n+\al)}{2}}D_n(T,{t}),
$$
after some computations, $\ti{D}_n(T,{t})$ satisfies the following
equation: \bea\label{eq:TodaMolEqnHigher}
\left(\frac{T}{{t}}\partial_{TT}^2+2\partial_{T{t}}^2+\frac{{t}}{T}\partial_{{t}{t}}^2\right)\log
\ti{D}_n(T,{t})&=& \frac{\ti{D}_{n+1}\ti{D}_{n-1}}{\ti{D}_n^2} \; .
 \eea
The above equation is a two parameter generalization of the Toda
molecule equation,~\cite{Sogo1993} and hence we identify
$\ti{D_n}(T,{t})$ as the corresponding $\tau$-function of the
two-parameter Toda equations.

A reduction may be obtained by writing $\ti{D_n}(T,{t})$ as
$$
\ti{D}_n(T,{t})=\left(\frac{T}{{t}}\right)^{\frac{n(n+\al)}{2}}\Phi_n(T),
$$
and equation (\ref{eq:TodaMolEqnHigher}) is reduced to a
$1$-parameter Toda equation,
$$
\partial_{TT}^2\log\Phi_n(T)=\frac{\Phi_{n+1}(T)\Phi_{n-1}(T)}{\Phi_n(T)^2}.
$$

\subsection*{Partial Differential Equation for $H_n(T,{t})$}
We differentiate (\ref{eq:LlogDn(p1n)}) with respect to $T$ and
${t}$, and make use of the expressions for $\partial_T\P_1(n)$ and
$\partial_{{t}}\P_1(n)$, i.e., (\ref{eq:dTp1n}) and (\ref{eq:dtp1n})
respectively, to give \bea
\label{eq:dTHn}\partial_TH_n&=&r_n,\\
\label{eq:dtHn}\partial_{{t}}H_n&=&-r_n^\ast \; . \eea Thus, we now
have expressed $r_n$ and $r_n^\ast$ in terms of the partial
derivatives of $H_n$.

Next we derive representations of $R_n$ and $R_n^\ast$ in terms of
$H_n$ and its partial derivatives.

The idea, following from previous work,
\cite{BasorChen2009,BasorChenEhrhardt,ChenZhang2010,ChenDai2010} is to express  $R_n$
and $1/R_n$ in terms $H_n$ and its derivatives w.r.t. $t$ and $T.$ Similarly, for $R_n^*$ and $1/R_n^*.$

We start by re-writing (\ref{sys:MCS2p1.3}) as
$$
\beta_n\;R_{n-1}=\frac{r_n(r_n-N_s)}{R_n},
$$
and substitute into (\ref{eq:dTbtn}) to arrive at
$$
\partial_T\bt_n=\frac{r_n(r_n-N_s)}{R_n} - \bt_n R_n,
$$
which is a linear equation in $R_n$ and $1/R_n$.

Substituting  $r_n$ given by (\ref{eq:dTHn}) and $\bt_n$ given by
(\ref{eq:btn(Hn)}) into the above equation produces \bea
(T\partial_{TT}^2+{t}\partial_{T{t}}^2)H_n&=&\frac{(\partial_TH_n)\big(\partial_TH_n-{\N}\big)}
{R_n}
-\Big(T(\partial_TH_n)+{t}(\partial_{{t}}H_n)-H_n+n(n+\al)\Big)R_n.
\nonumber\\
\eea Going through a similar process we find, using
(\ref{sys:MCS2p1.2}), (\ref{eq:btn(Hn)}) and (\ref{eq:dtHn}),
%
\bea
(T\partial_{T{t}}^2+{t}\partial_{{t}{t}}^2)H_n&=&-\frac{(\partial_{{t}}H_n)\big(\partial_{{t}}H_n+{\N}\big)}{R_n^\ast}
+\Big(T(\partial_TH_n)+{t}(\partial_{{t}}H_n)-H_n+n(n+\al)\Big)R_n^\ast.
\nonumber\\
\eea
The above two equations  are quadratic equations in
$R_n$ and $R_n^\ast$. Solving for them leads to
\bea
\label{eq:Rn(Hn)}2\Big(T(\partial_TH_n)+{t}(\partial_{{t}}H_n)-H_n+n(n+\al)\Big)R_n&=&-(T\partial_{TT}^2+{t}\partial_{T{t}}^2)H_n\pm A_1(H_n),\\
\label{eq:Rnast(Hn)}2\Big(T(\partial_TH_n)+{t}(\partial_{{t}}H_n)-H_n+n(n+\al)\Big)R_n^\ast&=&
(T\partial_{T{t}}^2+{t}\partial_{{t}{t}}^2)H_n\pm A_2(H_n),
 \eea
where $A_1(H_n)$ and $A_2(H_n)$ are defined by (\ref{defn:A1(Hn)}) and (\ref{defn:A2(Hn)}) respectively, where we have left out the notation that indicates that $A_1$ and $A_2$ also depends upon the partial derivatives of $H_n$.

In the last step we substitute (\ref{eq:dTHn}), (\ref{eq:dtHn}),
(\ref{eq:btn(Hn)}), $R_n$ given by (\ref{eq:Rn(Hn)}) and $R_n^\ast$
given by (\ref{eq:Rnast(Hn)}) into (\ref{eq:Hn(rnrnastRnRnast)}).
After some simplification, we obtain the PDE (\ref{eq:PDE(Hn)Intro}), completing the proof of Theorem
\ref{Thm:Hn(Tt0)}.

\section{Some Relevant Integral Identities}\label{App:Int}
The following integral identities, taken from Refs.~\onlinecite{ChenMckay2010,GradRyzhJeff2007},
are referenced for use in the Coulomb fluid derivations.

Let $W(t)=\sqrt{(t+a)(t+b)}$:
\begin{eqnarray}
\label{AInt:1/sqrt}
\int\limits^{b}_{a}\frac{dx}{\sqrt{(b-x)(x-a)}}&=&\pi,\\
\label{AInt:1/x+tsqrt}
\int\limits^{b}_{a}\frac{dx}{(x+t)\sqrt{(b-x)(x-a)}}&=&\frac{\pi}{W(t)},\\
\label{AInt:x/sqrt}
\frac{1}{2\pi}\int\limits^{b}_{a}\frac{x dx}{\sqrt{(b-x)(x-a)}}&=&\frac{(a+b)}{4},\\
\label{AInt:Psqrt/(x+t)}
\textsf{P}\int\limits_a^b\frac{\sqrt{(b-y)(y-a)}}{(y-x)(y+t)}dy&=&\pi\left(\frac{W(t)}{x+t}-1\right),\\
\label{AInt:log(x+t)/sqrt}
\frac{1}{2\pi}\int\limits_a^b\frac{\log(x+t)}{ \sqrt{(b-x)(x-a)}}dx&=&\log\left(\frac{\sqrt{t+a}+\sqrt{t+b}}{2}\right),\\
\label{AInt:log(x+t)/xsqrt}
\frac{1}{2\pi}\int\limits_a^b\frac{\log(x+t)}{x \sqrt{(b-x)(x-a)}}dx&=&\frac{1}{2\sqrt{ab}}\log\left(\frac{\Big(\sqrt{ab}+W(t)\Big)^2-t^2}{\Big(\sqrt{a}+\sqrt{b}\Big)^2}\right),\\
\label{AInt:xlog(x+t)/sqrt}
\frac{1}{2\pi}\int\limits_a^b\frac{x\log(x+t)}{\sqrt{(b-x)(x-a)}}dx&=&\frac{\Big(\sqrt{t+a}-\sqrt{t+b}\Big)^2}{4},\nonumber\\
&&+\frac{(a+b)}{4}\log\left(\frac{\Big(W(t)+t\Big)^2-ab}{4t}\right),\\
\label{AInt:log(x+t)/(x+t)sqrt}\frac{1}{2\pi}\int\limits_a^b\frac{\log(x+t)}{(x+t)\sqrt{(b-x)(x-a)}}dx&=&-\frac{1}{W(t)}\log\left(\frac{1}{2\sqrt{t+a}}+\frac{1}{2\sqrt{t+b}}\right),\;
\end{eqnarray}
\begin{small}
\begin{eqnarray}
\label{AInt:1/xsqrt(c>0)}
\int\frac{dx}{x\sqrt{a^\prime x^2+b^\prime x+c^\prime} }&=&
-\frac{1}{\sqrt{c^\prime}}\log\left(\frac{2c^\prime+b^\prime x+2\sqrt{c^\prime}\sqrt{a^\prime x^2+b^\prime x+c^\prime}}{x}\right),
\\&&
\text{where}\;(c^\prime>0),
\nonumber\\
&=&\frac{1}{\sqrt{c^\prime}}\log\left(\frac{x}{2c^\prime+b^\prime x}\right), \;\text{where}\; (c^\prime>0,b^{\prime2}=4a^\prime c^\prime).\qquad
\end{eqnarray}
\end{small}
Finally, note the well-known Schwinger parameterization
\bea
\label{AInt:Schwinger}
\log(A+B)&=&\log A +\int\limits_0^1\frac{Bd\eta}{A+\eta B}.
\eea

\begin{lemma}
From the above integrals, we have
\begin{small}
\begin{eqnarray}
\label{AInt:log(x+t2)/(x+t1)sqrt}\frac{1}{2\pi}\int\limits_a^b\frac{\log(x+t_2)}{(x+t_1)\sqrt{(b-x)(x-a)}}dx&=&\frac{1}{2W(t_1)}\log\left(\frac{\Big(W(t_1)+W(t_2)\Big)^2-(t_1-t_2)^2}{\Big(\sqrt{t_1+a}+\sqrt{t_1+b}\Big)^2}\right).\qquad\quad
\end{eqnarray}
\end{small}
\end{lemma}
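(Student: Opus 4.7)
The plan is to prove the identity by differentiating both sides with respect to $t_2$, showing the derivatives agree, and then matching the boundary value at $t_2=t_1$. Denote the left-hand side by $F(t_2)$. Differentiation under the integral sign produces
\begin{equation*}
F'(t_2) = \frac{1}{2\pi}\int_a^b\frac{dx}{(x+t_1)(x+t_2)\sqrt{(b-x)(x-a)}}.
\end{equation*}
The partial-fraction decomposition $\frac{1}{(x+t_1)(x+t_2)}=\frac{1}{t_1-t_2}\left(\frac{1}{x+t_2}-\frac{1}{x+t_1}\right)$ together with identity (\ref{AInt:1/x+tsqrt}) gives $F'(t_2)=\frac{1}{2(t_1-t_2)}\left(\frac{1}{W(t_2)}-\frac{1}{W(t_1)}\right)$, which after using $W(t_1)^2-W(t_2)^2=(t_1-t_2)(t_1+t_2+a+b)$ simplifies to
\begin{equation*}
F'(t_2) = \frac{t_1+t_2+a+b}{2\,W(t_1)\,W(t_2)\,(W(t_1)+W(t_2))}.
\end{equation*}

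Next, denote the right-hand side by $G(t_2)$. Using $W'(t_2)=(2t_2+a+b)/(2W(t_2))$ together with the factorization $(W(t_1)+W(t_2))^2-(t_1-t_2)^2=(W(t_1)+W(t_2)+t_2-t_1)(W(t_1)+W(t_2)+t_1-t_2)$, direct differentiation yields a rational expression in $t_2$, $W(t_1)$ and $W(t_2)$. Matching $G'(t_2)$ to the closed form of $F'(t_2)$ reduces, after clearing the common factor $1/(2W(t_1)W(t_2))$, to the polynomial identity
\begin{equation*}
(t_2+a)(t_2+b)(2t_1+a+b)+(t_1+a)(t_1+b)(2t_2+a+b)=(t_1+t_2+a+b)\bigl(2t_1t_2+2ab+(a+b)(t_1+t_2)\bigr),
\end{equation*}
which I would verify by introducing the symmetric variables $S=t_1+t_2$, $P=t_1t_2$, $Q=a+b$, $R=ab$; both sides then collapse to $S(2P+QS+Q^2+2R)+2Q(P+R)$.

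Finally, I match the initial condition at $t_2=t_1$. The argument of the logarithm on the right becomes $4W(t_1)^2/(\sqrt{t_1+a}+\sqrt{t_1+b})^2$, so $G(t_1)=\frac{1}{W(t_1)}\log\frac{2W(t_1)}{\sqrt{t_1+a}+\sqrt{t_1+b}}$. On the other hand, identity (\ref{AInt:log(x+t)/(x+t)sqrt}) gives $F(t_1)=-\frac{1}{W(t_1)}\log\left(\frac{1}{2\sqrt{t_1+a}}+\frac{1}{2\sqrt{t_1+b}}\right)$, and writing the sum over the common denominator $2\sqrt{(t_1+a)(t_1+b)}=2W(t_1)$ shows $F(t_1)=G(t_1)$. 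Combined with $F'\equiv G'$, this proves the lemma. The main obstacle is the algebraic verification of the polynomial identity establishing $G'=F'$; it is routine but bulky. An alternative Schwinger-based route --- writing $\log(x+t_2)-\log(x+t_1)=\int_0^1 \frac{(t_2-t_1)\,d\eta}{x+t_1+\eta(t_2-t_1)}$ via (\ref{AInt:Schwinger}), swapping the order of integration, applying partial fractions and (\ref{AInt:1/x+tsqrt}), and finally substituting $\tau=t_1+\eta(t_2-t_1)$ to reach $\int_{t_1}^{t_2}\frac{(\tau+t_1+a+b)\,d\tau}{2W(t_1)W(\tau)(W(\tau)+W(t_1))}$ --- repackages precisely the same polynomial identity as an antiderivative check rather than a derivative check, with no genuine simplification of the essential calculation.
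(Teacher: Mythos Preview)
Your proof is correct and takes a genuinely different route from the paper. The paper proceeds constructively: it applies the Schwinger parameterization (\ref{AInt:Schwinger}) to $\log(x+t_2)$, splits the resulting double integral via partial fractions in $x$, integrates over $x$ using (\ref{AInt:1/x+tsqrt}), and then evaluates the remaining $\eta$-integral by the substitution $y=t_1\eta-t_2$ together with (\ref{AInt:1/xsqrt(c>0)}), taking care with an $\epsilon$-limit to handle the pole at $\eta=t_2/t_1$. Your approach instead verifies the closed form by differentiating in $t_2$, reducing equality of derivatives to an explicit symmetric polynomial identity, and then matching values at $t_2=t_1$ via (\ref{AInt:log(x+t)/(x+t)sqrt}). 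Your method is shorter and sidesteps the regularisation issue entirely; the paper's method has the advantage of being constructive (one could discover the closed form this way rather than merely check it). As you note at the end, the Schwinger route and your differentiation route encode the same underlying computation, so neither is substantially harder than the other.
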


\begin{proof}
We rewrite $\log(x+t_2)$ using (\ref{AInt:Schwinger}) and then use (\ref{AInt:1/x+tsqrt}) to obtain
\begin{eqnarray*}
LHS_{(\ref{AInt:log(x+t2)/(x+t1)sqrt})}
&=&
\frac{\log t_2}{2\sqrt{(t_1+a)(t_1+b)}}+\frac{1}{2\pi}\int\limits_0^1\int\limits_a^b\!\frac{x dx d\eta}{(t_2+x\eta)(x+t_1)\sqrt{(b-x)(x-a)}}.
\end{eqnarray*}
Now we use the partial fraction decomposition
\begin{equation*}
\frac{x}{(t_2+x\eta)(x+t_1)}=\frac{t_1}{(t_1\eta-t_2)(x+t_1)}-\frac{t_2}{(t_1\eta-t_2)(t_2+x\eta)},
\end{equation*}
and then integrate over the $x$ variable using (\ref{AInt:1/x+tsqrt}) to get
\begin{eqnarray}
LHS_{(\ref{AInt:log(x+t2)/(x+t1)sqrt})}&=&
\frac{\log t_2}{2\sqrt{(t_1+a)(t_1+b)}}
\nonumber\\&&
+\frac{1}{2}\int\limits_0^1\frac{d\eta}{t_1\eta-t_2}\left(\frac{t_1}{\sqrt{(t_1+a)(t_1+b)}}-\frac{t_2}{\sqrt{(t_2+a\eta)(t_2+b\eta)}}\right),\nonumber\\
\label{AIntp:=integrand1}&=&
\frac{\log(t_2-t_1)}{2\sqrt{(t_1+a)(t_1+b)}}
-\frac{1}{2}\lim\limits_{\epsilon\to0}\int\limits^{1\pm\epsilon}_0\!\frac{t_2d\eta}{(t_1\eta-t_2)\sqrt{(t_2+a\eta)(t_2+b\eta)}}.
\nonumber\\
\end{eqnarray}

In our problem $t_2={t}^\prime$, $t_1=T^\prime$ or $t_2={T}^\prime$,
$t_1=t^\prime$, hence
$t_2/t_1=({t}^\prime/T^\prime)^{\pm1}=({t}/T)^{\pm1}=(1+cs)^{\pm1}$.
For the case $s=0$, there exists another pole within the integrand,
and so we replace the $\int\limits^1_0\dots$ with
$\int\limits^{1\pm\epsilon}_0\dots$ so that we may invoke
(\ref{AInt:1/xsqrt(c>0)}).

To evaluate the remaining integral in (\ref{AIntp:=integrand1}), we
first make the change of variable $y=t_1\eta-t_2$, giving
\begin{footnotesize}
\begin{eqnarray}\label{AInt:ProofIntChangeVar}
\frac{1}{2}\lim\limits_{\epsilon\to0}\int\limits^{1\pm\epsilon}_0\!\frac{t_2d\eta}{(t_1\eta-t_2)\sqrt{(t_2+a\eta)(t_2+b\eta)}}&=&
\frac{1}{2}\lim\limits_{\epsilon\to0}\int\limits^{t_1-t_2\pm\epsilon t_1}_{-t_2}\frac{(t_2/t_1)dy}{y\sqrt{(t_2(t_1+a)+ay)(t_2(t_1+b)+by)}}.
\end{eqnarray}
\end{footnotesize}
Within the square root term, we have a quadratic function in $y$, given by
$$
ab y^2+ \Big((a+b)t_1t_2+2abt_2\Big)y+t_2^2(t_1+a)(t_1+b).
$$
For our problem, where $a$ and $b$ are given by (\ref{eq:CF:(a)(b)[bt]}) and $\beta\geq0$, we see that the discriminant of the quadratic form is given by
$$t_1^2(b-a)^2>0,$$ while the constant term is
$$t_2^2\Big(t_1^2+2t_1(2+\beta)+\beta^2\Big)>0.$$
Hence we may invoke (\ref{AInt:1/xsqrt(c>0)}) and then take the limit $\epsilon\to0$ to get
\begin{footnotesize}
\begin{eqnarray}
RHS_{(\ref{AInt:ProofIntChangeVar})}&=&
-\frac{1}{\sqrt{(t_1+a)(t_1+b)}}\Bigg(-\log(t_2-t_1)\nonumber\\
&&+\log\left(\frac{2\sqrt{(t_1+a)(t_1+b)}\sqrt{(t_2+a)(t_2+b)}+2ab+(t_2+t_1)(a+b)+2t_2t_1}{2\sqrt{(t_1+a)(t_1+b)}+2t_1+a+b}\right)\Bigg).\qquad\qquad
\end{eqnarray}
\end{footnotesize}
Substituting this back into (\ref{AIntp:=integrand1}) and after some algebra, we get (\ref{AInt:log(x+t2)/(x+t1)sqrt}).
\end{proof}
\section{Differential Equations For Large Scale Corrections To Cumulants}\label{App:LNCorrDiffEqns}
\subsection{$\kappa_2({t^\prime})$}
The correction terms $f_i({t^\prime})$ for $i=1,2,3,4$ satisfy the following differential equations:
\begin{small}
\bea
\label{Appdiffeqn:f1}
\frac{\big(({t^\prime}+\bt)^2+4{t^\prime}\big)^{9/2}}{2{t^\prime}(1+\bt)}\frac{df_1(t^\prime)}{dt^\prime}&=&
l_2^{(1)}(t^\prime),\\
\label{Appdiffeqn:f2}
\frac{\big(({t^\prime}+\bt)^2+4{t^\prime}\big)^{6}}{2\N{t^\prime}(1+\bt)}\frac{df_2({t^\prime})}{dt^\prime}&=&
l_2^{(2)}(t^\prime),\\
\label{Appdiffeqn:f3}
\frac{\big(({t^\prime}+\bt)^2+4{t^\prime}\big)^{15/2}}{2{t^\prime}(1+\bt)}\frac{df_3({t^\prime})}{dt^\prime}&=&
l_2^{(3)}(t^\prime),\\
\label{Appdiffeqn:f4}
\frac{\big(({t^\prime}+\bt)^2+4{t^\prime}\big)^{9}}{2\N{t^\prime}(1+\bt)}\frac{df_4({t^\prime})}{dt^\prime}&=&
l_2^{(4)}(t^\prime).
\eea
\end{small}
The functions $l_2^{(i)}(t^\prime)$, $i=1,2,3,4$ are given by 
\begin{small}
\bea
\label{Appdiffeqn:f:l_2^1}
l_2^{(1)}(t^\prime)
&=&
3{{t^\prime}}^{4}-3\left(\beta+2 \right) {{t^\prime}}^{3}-2\left( 6{
\beta}^{2}+\beta+1 \right) {{t^\prime}}^{2}-3 \left( {\beta}+2 \right) \bt^2{{t^\prime}}+3\bt^4.\\
\label{Appdiffeqn:f:l_2^2}
l_2^{(2)}(t^\prime)&=&
-16{{t^\prime}}^{6}-2\left( \beta+2 \right) {{t^\prime}}^{5}+3\left( 23\,{
\beta}^{2}-8\,\beta -8\right) {{t^\prime}}^{4}
+4(\bt+2) \left( 15\bt^2-2\bt-2\right) {{t^\prime}}^{3}
\nonumber\\&&
-2{\beta}^{2} \left(7\bt^2 -48\,\beta
-48\right) {{t^\prime}}^{2}
-18{\beta}^{4} \left( \beta+2
 \right) {t^\prime}+{\beta}^{6}.\\
\label{Appdiffeqn:f:l_2^3}
l_2^{(3)}(t^\prime)
&=&
80{{t^\prime}}^{8}-60\left( \beta+2 \right) {{t^\prime}}^{7}-3\left(217\bt^2 -47\,
\beta-47 \right) {{t^\prime}}^{6}
\nonumber\\&&
-(\bt+2) \left(497\bt^2-55\bt-55\right) {{t^\prime}}^{5}
+3\left(175\,{
\beta}^{4}-499\,{\beta}^{3}-481\,{\beta}^{2}+36\,\beta+18 \right) {{t^\prime}}^{4}
\nonumber\\&&
+15(\bt+2) \left(42\bt^2-29\bt-29
\right) {{t^\prime}}^{3}\bt^2
+5\,{\beta}^{4} \left( 7\,{\beta}^{2}+192\bt+192
\right) {{t^\prime}}^{2}
\nonumber\\&&
-81\left( \beta+2 \right) {\beta}^{6}{t^\prime}+3{\beta}^
{8}.\\
\label{Appdiffeqn:f:l_2^4}
l_2^{(4)}(t^\prime)
&=&
-540{{t^\prime}}^{10}+304\left( \beta+2 \right) {{t^\prime}}^{9}+4\left(1437\bt^2 -496\,
\beta-496\right) {{t^\prime}}^{8}
\nonumber\\&&
+36(\bt+2) \left( 185{\beta}^{2}-44\bt-44\right) {{t^\prime}}^{7}
\nonumber\\&&
-\left( 4593\bt^4-
30152\,{\beta}^{3}-26584\,{\beta}^{2}+7136\,\beta+3568
 \right) {{t^\prime}}^{6}
\nonumber\\&&
-4(\bt+2) \left(2751\,{
\beta}^{4}-4658\bt^3-4442\bt^2+432\bt+216\right) {{t^\prime}}^{5}
\nonumber\\&&
-3\,{\beta}^{2} \left(1227\,{\beta}^{4}+10456\bt^3+3944\,{\beta}^{2}-13024\beta-6512 \right) {{t^\prime}}^{4}
\nonumber\\&&
+36{\beta}^{4}(\bt+2) \left( 53\,{\beta}^{2}-302\bt-302 \right) {{t^\prime}}^{3}
+3{\beta}^{6} \left(275\bt^2+1632\bt+ 1632\right) {{t^\prime}}^{2}
\nonumber\\&&
-108{\beta}^{8} \left( \beta+2 \right) {t^\prime}+{\beta}^{10}.
\eea
\end{small}

\subsection{$\kappa_3({t^\prime})$}
The correction terms $g_i({t^\prime})$ for $i=1,2,3,4$ satisfy the differential equations
\begin{small}
\bea
\label{Appdiffeqn:g1}
\frac{\big(({t^\prime}+\bt)^2+4{t^\prime}\big)^{11/2}}{6{t^\prime}(1+\bt)}\frac{dg_1(t^\prime)}{dt^\prime}&=&
l_3^{(1)}(t^\prime),\\
\label{Appdiffeqn:g2}
\frac{\big(({t^\prime}+\bt)^2+4{t^\prime}\big)^{7}}{6\N{t^\prime}(1+\bt)}\frac{dg_2(t^\prime)}{dt^\prime}&=&
l_3^{(2)}(t^\prime),\\
\label{Appdiffeqn:g3}
\frac{\big(({t^\prime}+\bt)^2+4{t^\prime}\big)^{17/2}}{6{t^\prime}(1+\bt)}\frac{dg_3(t^\prime)}{dt^\prime}&=&
l_3^{(3)}(t^\prime),\\
\label{Appdiffeqn:g4}
\frac{\big(({t^\prime}+\bt)^2+4{t^\prime}\big)^{10}}{6\N{t^\prime}(1+\bt)}\frac{dg_4(t^\prime)}{dt^\prime}&=&
l_3^{(4)}(t^\prime).
\eea
\end{small}
The terms $l_3^{(i)}(t^\prime)$, $i=1,2,3,4$ are given by
\begin{small}
\bea
\label{Appdiffeqn:g:l_3^1}
l_3^{(1)}(t^\prime)
&=&
-{{t^\prime}}^{6}+9\left( \beta+2 \right) {{t^\prime}}^{5}+3\left( 5\,{\beta}
^{2}-6\,\beta-6 \right) {{t^\prime}}^{4}-\left( 15\,{\beta}^{2}+2\,\beta+2
\right)  \left( \beta+2 \right) {{t^\prime}}^{3}
\nonumber\\&&
-2\left( 15\,{\beta}^{2}+4\beta+4 \right) {{t^\prime}}^{2}{\beta}^{2}-6\left( \beta+2 \right) {t^\prime}{
\beta}^{4}+4{\beta}^{6}+ t^\prime{\N}^2\Big[ -{{t^\prime}}^{5}+\left( \beta+2
\right) {{t^\prime}}^{4}
\nonumber\\&&
+7{{t^\prime}}^{3}{\beta}^{2}
+5\left( \beta+2 \right) {{t^\prime}}^{2}{\beta}^{2}-2\left( {\beta}^{2}-4\,\beta-4 \right) {{t^\prime}}{\beta}^{2}-2\left( \beta+2 \right){\beta}^{4}\Big].\\
\label{Appdiffeqn:g:l_3^2}
l_3^{(2)}(t^\prime)
&=&
16\,{{t^\prime}}^{8}-61\left( \beta+2 \right) {{t^\prime}}^{7}-8\left( 28\bt^2-3\beta-3\right) {{t^\prime}}^{6}-7\left( \beta+2 \right)
\left( 3\,{\beta}^{2}+4\,\beta+4 \right) {{t^\prime}}^{5}
\nonumber\\&&
+2\left( 189\,{\beta}^{4}-76\bt^3-92\,{\beta}^{2}-32\,\beta-16 \right) {{t^\prime}}^{4}+7{\beta}^{2} \left( \beta+2 \right)  \left( 39\,{\beta}^{2}-4\,
\beta-4 \right) {{t^\prime}}^{3}
\nonumber\\&&
-4{\beta}^{4} \left( 7\,{\beta}^{2}-90\,
\beta-90 \right) {{t^\prime}}^{2}-27{\beta}^{6} \left( \beta+2 \right){t^\prime}+2{\beta}^{8}.\\
\label{Appdiffeqn:g:l_3^3}
l_3^{(3)}(t^\prime)
&=&
 -80{{t^\prime}}^{10}
+320\left( \beta+2 \right) {{t^\prime}}^{9}+5\left( 253\,{\beta}^{2}-
159\bt-159\right) {{t^\prime}}^{8}
\nonumber\\&&
-\left( \beta+2 \right)  \left( 311
\,{\beta}^{2}-125\,\beta-125 \right) {{t^\prime}}^{7}
-2\left( 1946\,{\beta}
^{4}-1015\,{\beta}^{3}-1062\,{\beta}^{2}-94\,\beta -47\right) {{t^\prime}}^{6}
\nonumber\\&&
-2\left( \beta+2 \right)  \left( 1498\,{\beta}^{4}-413\,{\beta}^{3}
-422\,{\beta}^{2}-18\,\beta-9 \right) {{t^\prime}}^{5}
\nonumber\\&&
+{\beta}^{2} \left(1085\,{\beta}^{4}-7567\,{\beta}^{3}-6843\,{\beta}^{2}+1448\,\beta+724
 \right) {{t^\prime}}^{4}
\nonumber\\&&
+5{\beta}^{4} \left( \beta+2 \right)  \left( 341\,{
\beta}^{2}-395\bt-395 \right) {{t^\prime}}^{3}
+2{\beta}^{6} \left( 97\,{
\beta}^{2}+1230\,\beta+1230 \right) {{t^\prime}}^{2}
\nonumber\\&&
-142\,{\beta}^{8} \left(
\beta+2 \right) T+4{\beta}^{10}
+{t^\prime}{\N}^2\Big[ -50{{t^\prime}}^{9}+70\left( \beta+2 \right) {{t^\prime}}^{8}
+2\left(325\bt^2 -97\,\beta-97\right) {{t^\prime}}^{7}
\nonumber\\&&
+2\left(\beta+2 \right)  \left( 305\,{\beta}^{2}-53\,\beta-53 \right) {{t^\prime}}^{6}
-2\left(380\bt^4 -1535\,{\beta}^{3}-1407\,{\beta}^{2}+256\,\beta+128\right) {{t^\prime}}^{5}
\nonumber\\&&
-2\left( \beta+2 \right)  \left( 700\,{
\beta}^{4}-981\,{\beta}^{3}-949\,{\beta}^{2}+64\,\beta+32 \right) {{t^\prime}}^
{4}
\nonumber\\&&
-2{\beta}^{2} \left( 215\,{\beta}^{4}+1982\,{\beta}^{3}+918\,{
\beta}^{2}-2128\,\beta-1064 \right) {{t^\prime}}^{3}
\nonumber\\&&
+2{\beta}^{4} \left(
\beta+2 \right)  \left( 125\,{\beta}^{2}-684\,\beta-684 \right) {{t^\prime}}^{2}
+22{\beta}^{6} \left( 5\,{\beta}^{2}+28\,\beta+28 \right) {{t^\prime}}
-10{\beta}^{8} \left( \beta+2 \right) \Big].\nonumber\\\\
\label{Appdiffeqn:g:l_3^4}
l_3^{(4)}(t^\prime)
&=&
1080\,{{t^\prime}}^{12}-3460\left( \beta+2 \right) {{t^\prime}}^{11}-32 \left(619\bt^2 -338\beta-338 \right) {{t^\prime}}^{10}
\nonumber\\&&
-2\left( \beta+2
\right)  \left( 2045\,{\beta}^{2}+204\,\beta+204 \right) {{t^\prime}}^{9}
+20\left(3309\,{\beta}^{4}-4120\bt^3-4040\bt^2+160\,\beta+80\right) {{t^\prime}}^{8}
\nonumber\\&&
+\left( \beta+2 \right)  \left( 81555\,{
\beta}^{4}-62404\,{\beta}^{3}-62084\,{\beta}^{2}+640\,\beta+320
\right) {{t^\prime}}^{7}
\nonumber\\&&
-80{\beta}^{2} \left(57\,{\beta}^
{4}-3845\,{\beta}^{3}-2401\,{\beta}^{2}+2888\bt+1444 \right) {{t^\prime}}^{6}
\nonumber\\&&
-3{\beta}^{2} \left( \beta+2 \right)  \left( 18777\,{\beta}^{4}-58336\,{\beta}^{3}
-50528\,{\beta}^{2}+15616\,\beta+7808 \right) {{t^\prime}}^{5}
\nonumber\\&&
-6{\beta}^{4}\left(3915\,{\beta}^{4}+24728\bt^3-2792\,{\beta}^{2}-
55040\,\beta-27520 \right) {{t^\prime}}^{4}
\nonumber\\&&
+45{\beta}^{6} \left( \beta+2 \right)
\left( 109\,{\beta}^{2}-1172\,\beta-1172 \right) {{t^\prime}}^{3}
+4{\beta}^{8} \left( 725\,{\beta}^{2}+4104\bt+4104 \right) {{t^\prime}}^{2}
\nonumber\\&&
-275{\beta}^{10} \left( \beta+2 \right){t^\prime}+2{\beta}^{12}.
\eea
\end{small}

\bibliography{AF_Paper_Ref}

\end{document}